\renewcommand{\tilde}{\widetilde}
\renewcommand{\bar}{\overline}
\newcommand{\comps}{\mathbb{C}}
\renewcommand{\Im}{\text{Im}}
\renewcommand{\Re}{\text{Re}}
\newcommand{\A}{\mathcal{A}}
\renewcommand{\H}{\mathcal{H}}
\newcommand{\K}{\mathcal{K}}
\newcommand{\B}{\mathcal{B}}
\newcommand{\D}{\mathcal{D}}
\newtheorem{theorem}{Theorem}
\newtheorem{corollary}[theorem]{Corollary}
\newtheorem{lemma}[theorem]{Lemma}
\newtheorem{prop}[theorem]{Proposition}
\newtheorem{definition}[theorem]{Definition}
\renewcommand{\op}[1]{#1}
\theoremstyle{definition}
\newtheorem{remark}[theorem]{Remark}
\numberwithin{theorem}{section}
\title{An intuitive construction of modular flow}
\author{Jonathan Sorce}
\affiliation{MIT Center for Theoretical Physics}
\abstract{
The theory of modular flow has proved extremely useful for applying thermodynamic reasoning to out-of-equilibrium states in quantum field theory.
However, the standard proofs of the fundamental theorems of modular flow use machinery from Fourier analysis on Banach spaces, and as such are not especially transparent to an audience of physicists.
In this article, I present a construction of modular flow that differs from existing treatments.
The main pedagogical contribution is that I start with thermal physics via the KMS condition, and derive the modular operator as the only operator that could generate a thermal time-evolution map, rather than starting with the modular operator as the fundamental object of the theory.
The main technical contribution is a new proof of the fundamental theorem stating that modular flow is a symmetry.
The new proof circumvents the delicate issues of Fourier analysis that appear in previous treatments, but is still mathematically rigorous.
}
\gdef\@fpheader{MIT preprint ID MIT-CTP/5622\vspace{3em}}
\begin{document}
\maketitle

\section{Introduction}

Given a Hamiltonian $H$, the density matrix
\begin{equation}
	\rho_{\beta} = \frac{e^{- \beta H}}{\tr(e^{-\beta H})}
\end{equation}
is said to be thermal with respect to $H$ at inverse temperature $\beta$.
Conversely, given an invertible density matrix $\rho,$ we can always construct a Hamiltonian with respect to which it is thermal, given by
\begin{equation}
	K_{\rho} \equiv - \log\rho.
\end{equation}
The operator $K_{\rho}$ is self-adjoint and bounded below, and is called the \textit{modular Hamiltonian} of the state $\rho.$
While $K_{\rho}$ is generally a highly nonlocal operator, it can be conceptually helpful to think of it as a physical Hamiltonian, since information-theoretic quantities associated to $\rho$ can be treated as thermodynamic quantities associated to $K_{\rho}.$

Many quantum systems of interest do not admit density matrices.
Typical examples are thermodynamic systems at infinite volume \cite{araki-woods, HHW, witten-limits} or local subregions in quantum field theory \cite{araki-fields, fredenhagen1985modular}.
Nevertheless, in these settings it is still possible to construct an operator that plays the same role played by $K_{\rho}$ in the finite-dimensional setting.
The theory of this operator is known as \textit{Tomita-Takesaki theory} or \textit{modular theory}.
It was first developed in \cite{takesaki2006tomita}, is treated in the textbooks \cite{takesaki-book, struatilua2019lectures, sunder-book}, and has been reviewed for physicists in \cite{witten-notes}.
In recent years it has reemerged in high energy physics as a tool for studying energy and entropy in quantum field theory and in semiclassical gravity; for an incomplete list of examples, see \cite{Casini:2011kv, Wall:2011hj, Bousso:2014sda, Faulkner:2016mzt, Cardy:2016fqc, Hollands:2017dov, Casini:2017roe, Faulkner:2017vdd, Balakrishnan:2017bjg, Chen:2018rgz,Faulkner:2018faa, Lashkari:2018nsl, Jefferson:2018ksk, Ceyhan:2018zfg, Lashkari:2019ixo, Chen:2019iro, Faulkner:2020iou, Leutheusser:2021frk, Witten:2021unn, Chandrasekaran:2022cip, Chandrasekaran:2022eqq, Leutheusser:2022bgi, Penington:2023dql, Parrikar:2023lfr, Furuya:2023fei, Jensen:2023yxy, AliAhmad:2023etg, Klinger:2023tgi, Kudler-Flam:2023qfl}.

The setting for Tomita-Takesaki theory is a Hilbert space $\H$ with a von Neumann algebra $\A$ describing the quantum degrees of freedom in some subsystem.
The traditional development of the theory begins with a state $\ket{\Omega} \in \H$ that is ``cyclic and separating'' for $\A$ (defined in section \ref{sec:cyclic-separating}), and defines an antilinear operator $S_{\Omega}$ that acts as
\begin{equation}
	S_{\Omega}(\op{a} \ket{\Omega}) = \op{a}^{\dagger} \ket{\Omega}, \quad \op{a} \in \A.
\end{equation}
This is an unbounded operator, but it is sufficiently well behaved that the operator $\Delta_{\Omega} \equiv S_{\Omega}^{\dagger} S_{\Omega}$ is densely defined, positive, and invertible.
This is called the \textit{modular operator}, and can be used to define the \textit{modular Hamiltonian} $K_{\Omega} \equiv - \log \Delta_{\Omega}.$
The modular Hamiltonian is a self-adjoint operator, so it generates a unitary group that can be used to act on operators in $\A$ via the map
\begin{equation}
	\op{a} \mapsto e^{i K_{\Omega} t} \op{a} e^{-i K_{\Omega} t} = \Delta_{\Omega}^{-it} \op{a} \Delta_{\Omega}^{it}.
\end{equation}
This is called the modular flow of operators.

The next step in the traditional development of the theory is to show that modular flow maps $\A$ into itself.
This statement, known as Tomita's theorem, essentially says that the modular flow of a quantum subsystem does not mix the subsystem with its complement.
There are many approaches to proving Tomita's theorem \cite{takesaki2006tomita, van-daele-proof, zsido-proof, rieffel-proof, longo-proof, woronowicz-proof}, but the general method is to take an integral transform of the function $f(t) = \Delta_{\Omega}^{-it} \op{a} \Delta_{\Omega}^{it},$ show that the transformed function lies in $\A,$ and then to show that the inverse integral transform converges in a topology for which $\A$ is closed.
After proving Tomita's theorem, one shows that the modular flow of operators satisfies something called the \textit{KMS condition}, which basically means that the state $\ket{\Omega}$, restricted to the subsystem described by the algebra $\A$, looks thermal with respect to the time-evolution map generated by the Hamiltonian $K_{\Omega}.$

In this article, I present an approach to Tomita-Takesaki theory that differs from the one described in the preceding paragraphs.
Pedagogically, the main difference is that I do not start with the antilinear operator $S_{\Omega},$ whose introduction I have always found fairly ad hoc.
Instead, I start with the KMS condition, which characterizes thermality in terms of time-evolved two-point functions.
I prove a uniqueness theorem, showing that if there exists \textit{some} Hermitian operator $H$ for which $\ket{\Omega}$ satisfies the KMS condition in subsystem $\A$, then $H$ must be the modular Hamiltonian $K_{\Omega}$.\footnote{This result is known to mathematicians, and appears e.g. in the textbooks \cite{takesaki-book, struatilua2019lectures}.}
In this way of developing the theory, thermal physics comes first: the modular Hamiltonian $K_{\Omega}$ arises not as an abstraction, but as the only viable candidate for a thermal time-evolution map.
This gives a way of understanding why modular flow is so useful for studying quantum field theory and semiclassical gravity: if you want to apply thermodynamic reasoning to general states, then modular flow is the only tool available.

I then give a new proof of Tomita's theorem which differs from existing proofs, and which I hope is easier to understand.
The proof works by showing that for any $\op{a}$ in $\A$ and any $\op{b}'$ in the commutant algebra $\A'$, the commutator $[\Delta_{\Omega}^{-it} \op{a} \Delta_{\Omega}^{it}, \op{b}']$ vanishes.
By the bicommutant theorem of von Neumann, this implies that $\Delta_{\Omega}^{-it} a \Delta_{\Omega}^{it}$ must be in $\A.$
To show that the commutator vanishes, I construct a dense subset of operators $a$ in $\A$, called the ``tidy subspace'' of $\A,$ for which the map $it \mapsto [\Delta_{\Omega}^{-it} a \Delta_{\Omega}^{it}, b']$ admits an analytic extension to the full complex plane, such that the norm of this analytic extension is bounded at infinity by an exponential function.
I show that the analytic extension vanishes on the integers, and then apply a version of Carlson's theorem to show that the commutator is identically zero.
This proves Tomita's theorem for the tidy subspace, and the general case is obtained by a continuity argument.

Since the goal of this paper is to make a complicated mathematical theory palatable and useful for theoretical physicists, I have included pedagogical treatments of background material in a review section and in several appendices.
The outline of the paper is given in a bulleted list below.
Minimally, I suggest reading section \ref{sec:uniqueness} and the preamble to section \ref{sec:tomitas-theorem}, and consulting section \ref{sec:background} on an as-needed basis to fill in mathematical background.
This will give you the main ideas of the perspective I am taking on modular flow and on Tomita's theorem.
The rest of the paper is more technical, and is aimed at readers who wish to work directly with modular flow themselves.
\begin{itemize}
	\item In section \ref{sec:background}, I give mathematical background that is needed for the rest of the paper.
	I discuss several useful topologies on spaces of operators; I explain basic properties of the ``cyclic and separating'' states that are the continuum generalization of bipartite states with full Schmidt rank; I state some basic theorems concerning unbounded operators; and I explain the theory of contour integration and analytic continuation for operator-valued functions.
	\item In section \ref{sec:uniqueness}, I introduce the KMS condition as a characterization of thermality in terms of two-point functions.
	I prove the KMS uniqueness theorem, introducing the modular operator by showing that for a cyclic and separating state, the modular Hamiltonian is the only Hamiltonian that could possibly make that state look thermal in the sense of KMS.
	I also observe that the modular Hamiltonian will automatically satisfy the KMS condition if Tomita's theorem holds.
	\item In section \ref{sec:tomitas-theorem}, I prove Tomita's theorem.
	I also prove the related statement that the modular conjugation operator maps $\A$ to $\A'$.
	\item In three appendices, I explain Carlson's theorem, prove the basic properties of the Tomita operator, and outline the textbook proof of Tomita's theorem.
\end{itemize}
A shorter version of my proof of Tomita's theorem is presented in the companion paper \cite{sorce-short-proof}, aimed at an audience of mathematicians.

\subsection{Historical remarks}

Four general proofs of Tomita's theorem have been given previously, by Takesaki \cite{takesaki2006tomita}, van Daele \cite{van-daele-proof}, Zsid\'{o} \cite{zsido-proof}, and Woronowicz \cite{woronowicz-proof}.
Of these, van Daele's proof is the one that appears in textbooks (see e.g. \cite{takesaki-book, struatilua2019lectures}).
There is also a proof of Tomita's theorem due to Longo \cite{longo-proof} in the case where the von Neumann algebra $\A$ is hyperfinite.
This property is believed to hold for the von Neumann algebras that appear in quantum field theory \cite{buchholz1987universal}.

The proofs by Takesaki, van Daele, and Zsid\'{o} all make use of a lemma due to Takesaki concerning the resolvent of the modular operator.
This lemma is also used in my proof, and is the subject of section \ref{sec:resolvent-lemma}.
Immediately after introducing that lemma, van Daele's proof diverges from mine, and proceeds by using the resolvent lemma to study Fourier transforms of modular flow; the general outlines of this proof are explained in appendix \ref{app:van-daele}.
Zsid\'{o}'s proof, by contrast, uses Takesaki's resolvent lemma to construct a dense subset of $\A$ for which modular flow admits an analytic continuation to the full complex plane.
I undertake a similar construction, but my dense subset is different from Zsid\'{o}'s, and has the additional property that the analytic continuation of modular flow is bounded by an exponential function at infinity.
This allows me to use Carlson's theorem to prove Tomita's theorem by studying the integer behavior of the analytic extension.
By contrast, Zsid\'{o}'s proof expresses analytic continuations of modular flow in terms of the theory of analytic generators \cite{cioranescu1976analytic}, and appeals to results from that theory (essentially using Mellin transforms in place of van Daele's Fourier transforms) to finish the proof of Tomita's theorem.

The technique of applying Carlson's theorem to analytic extensions of modular flow was inspired by a proof of Tomita's theorem given by Bratteli and Robinson \cite[pages 90-91]{bratteli2012operator} in the special case of a bounded modular operator.

\section{Mathematical background}
\label{sec:background}

This is the most technical section of the paper, and unfortunately that is unavoidable, as one cannot understand the statement and proof of Tomita's theorem without first understanding the basics of operator theory.
I have tried to write this section in a way that emphasizes the ``rules of the game'' for manipulating unbounded operators, without getting bogged down in specific technical lemmas.
Eager readers may wish to proceed immediately to section \ref{sec:uniqueness} and consult this section on an as-needed basis.

Much of the material presented in this section can be found in the textbooks \cite{rudin1991functional, conway2000course, struatilua2019lectures}.
Some of it is also discussed in my recent review article \cite{sorce-review}.

\subsection{Operator topologies}
\label{subsec:operator-topologies}

Let $\H$ be a Hilbert space, and let $\B(\H)$ denote the space of bounded operators on $\H$.
There are many interesting topologies on $\B(\H).$
Among these, five appear most commonly: norm, ultrastrong, ultraweak, strong, and weak.
The only topologies that appear in this paper are the norm, strong, and weak topologies.
However, since the ultraweak and ultrastrong topologies are useful for studying operator algebras, and since several people have told me they find these topologies confusing, I have decided to include them in this section.
Figure \ref{fig:topologies} shows a flowchart of these topologies in order of strength.
This figure should be understood to mean, for example, that any sequence convergent in the norm topology is convergent in the ultrastrong topology, but there may be sequences that converge in the ultrastrong topology which do not converge in the norm topology.

\begin{figure}
	\centering
	\includegraphics{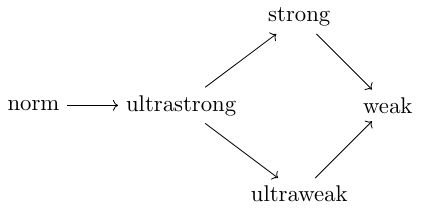}
	\caption{A flowchart of the standard topologies on operator algebras.
		Arrows point from stronger topologies to weaker topologies.
		So, for example, every sequence that converges in norm also converges ultrastrongly.}
	\label{fig:topologies}
\end{figure}

The norm topology is the easiest topology to understand.
A bounded operator $T : \H \to \H$ has, by definition, some constant $c$ satisfying
\begin{equation} \label{eq:norm-bounded}
	\lVert T \ket{\psi} \rVert \leq c \lVert \ket{\psi} \rVert, \quad \ket{\psi} \in \H.
\end{equation}
The operator norm of $T$ is defined as the infimum over all constants satisfying this inequality; equivalently, as
\begin{equation}
	\lVert T \rVert_{\infty} = \sup_{\ket{\psi} \text{ s.t. } \lVert \ket{\psi} \rVert \leq 1} \lVert T \ket{\psi} \rVert.
\end{equation}
The norm topology on $\B(\H)$ is defined so that a net\footnote{A net is a generalization of a sequence, and in general topological spaces, convergence of sequences is not enough to completely determine the topology.
For correctness I will always use the word ``net'' when necessary, but it is conceptually fine to think in terms of sequences.} $T_{\alpha} \in \B(\H)$ converges to $T \in \B(\H)$ if and only if we have
\begin{equation}
	\lVert T_{\alpha} - T \rVert_{\infty} \to 0.
\end{equation}
The norm topology is best thought of as a topology of uniform convergence: $T_{\alpha}$ must converge to $T$ uniformly on all vectors.

The remaining four topologies are ultrastrong, ultraweak, strong, and weak.
A useful mnemonic for understanding these topologies is as follows.
\begin{quote}
	A ``strong'' topology is one where an operator is said to be small if its action on any state is small.
	A ``weak'' topology is one where an operator is said to be small if its expectation value in any state is small.
	The modifier ``ultra'' indicates that one must check smallness on mixed states as well as pure states; in a topology without the ``ultra'' modifier, smallness is determined only with respect to pure states.
\end{quote}
Concretely, we have:
\begin{itemize}
	\item Strong: $T_{\alpha} \to T$ in the strong topology if, for every $\ket{\psi} \in \H,$ we have $(T - T_{\alpha}) \ket{\psi} \to 0.$
	\item Weak: $T_{\alpha} \to T$ in the weak topology if, for every $\ket{\psi} \in \H,$ we have $\bra{\psi} (T - T_{\alpha}) \ket{\psi} \to 0.$
	\item Ultrastrong: $T_{\alpha} \to T$ in the ultrastrong topology if, for every positive trace-class operator $\rho \in \H,$ we have $(T - T_{\alpha}) \rho (T -T_{\alpha})^{\dagger} \to 0$ in the trace norm.\footnote{Note that this is different from the condition $T_{\alpha} \rho T_{\alpha}^{\dagger} \to T \rho T^{\dagger}.$
	That condition, since it is not a function of the difference $T - T_{\alpha}$, is not compatible with the vector space structure of $\B(\H).$}
	\item Ultraweak: $T_{\alpha} \to T$ in the ultraweak topology if, for every positive trace-class operator $\rho \in \H,$ we have $\tr(\rho (T - T_{\alpha})) \to 0.$
\end{itemize}

A subalgebra $\A$ of $\B(\H)$ that is closed under adjoints and that contains the identity is called a (unital) $*$-algebra.
If $\A$ is closed in the norm topology, it is called a C* algebra.
If it is closed in the weak topology, it is called a von Neumann algebra.
From figure \ref{fig:topologies}, we can see that the closure of a set in the norm topology is generally smaller than its closure in any other topology; so every von Neumann algebra is a C* algebra, but not every C* algebra is a von Neumann algebra.

Von Neumann algebras are the objects that naturally arise to describe the set of operators associated with a subregion of a physical system; see \cite[section 2.6]{witten-notes} for more discussion on this point.
When studying von Neumann algebras, the most important theorem is von Neumann's double commutant theorem \cite{v1930algebra}.
The commutant of $\A$, denoted $\A',$ is defined as the set of all operators that commute with everything in $\A.$
The double commutant theorem is
\begin{equation}
	\A'' = \A.
\end{equation}
This is an extremely useful theorem, because to check that an operator is in $\A,$ it suffices to check that it commutes with every operator in $\A'.$

\subsection{Cyclic and separating states}
\label{sec:cyclic-separating}

As an example of a von Neumann algebra describing a quantum subsystem, we may think of the Hilbert space $\H = \H_1 \otimes \H_2,$ and take $\A$ to be the set of operators acting only on the first tensor factor:
\begin{equation}
	\A = \B(\H_1) \otimes 1_{\H_2}.
\end{equation}
It is easy to check that this is a von Neumann algebra, and that its commutant is given by
\begin{equation}
	\A'
		= 1_{\H_1} \otimes \B(\H_2).
\end{equation}
Given any vector $\ket{\Omega} \in \H,$ one can produce a Schmidt decomposition:
\begin{equation}
	\ket{\Omega}
		= \sum_{j} \sqrt{p_j} \ket{\psi_j}_{\H_1} \otimes \ket{\phi_j}_{\H_2}.
\end{equation}
If the Schmidt eigenvectors form complete bases of $\H_1$ and of $\H_2$, then we say the state $\ket{\Omega}$ is ``fully entangled'' or has ``full Schmidt rank;'' in this case, the reduced density matrices of $\ket{\Omega}$ on $\H_1$ and $\H_2$ are both invertible.

For a general von Neumann algebra $\A,$ there is generally no tensor product structure for which $\A$ corresponds to a tensor factor, and one cannot talk about the Schmidt rank of a vector $\ket{\Omega}.$
There is, however, a more general criterion that captures the essential physics of a state having full Schmidt rank: this is the condition that $\ket{\Omega}$ is \textit{cyclic and separating}.

Formally, the state $\ket{\Omega}$ is said to be cyclic for $\A$ if the set
\begin{equation}
	\A \ket{\Omega} \equiv \{a \ket{\Omega}\}_{a \in \A}
\end{equation}
is dense in $\H$.
It is said to be separating for $\A$ if it is cyclic for $\A',$ i.e., if the set $\A' \ket{\Omega}$ is dense in $\H$.
It is a straightforward exercise to check that this property is satisfied by any full-Schmidt-rank state in a tensor product decomposition.
A useful equivalent condition, which is not so hard to prove, is as follows.
\begin{prop}
	The state $\ket{\Omega}$ is separating for $\A$ if and only if it distinguishes operators on $\A$ in the sense that if an operator $a \in \A$ satisfies $a \ket{\Omega} = 0$, then we must have $a=0.$
\end{prop}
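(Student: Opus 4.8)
The plan is to unwind the definition of ``separating'': it means precisely that $\A' \ket{\Omega}$ is dense in $\H$, so I need to show that this density is equivalent to the stated distinguishing property. I would prove the two implications separately.

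The easy direction is density $\Rightarrow$ distinguishing. Suppose $\A' \ket{\Omega}$ is dense and $a \in \A$ satisfies $a \ket{\Omega} = 0$. For any $b' \in \A'$ we have $a b' \ket{\Omega} = b' a \ket{\Omega} = 0$, using that elements of $\A$ commute with elements of $\A'$. Thus the bounded operator $a$ annihilates a dense set, hence annihilates all of $\H$, so $a = 0$.

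For the other direction I would argue by contrapositive: if $\A' \ket{\Omega}$ is \emph{not} dense, I will produce a nonzero element of $\A$ that kills $\ket{\Omega}$. Let $V = \overline{\A' \ket{\Omega}}$, a proper closed subspace of $\H$. Because $\A'$ is an algebra and multiplication by any fixed bounded operator is continuous, $V$ is invariant under $\A'$; because $\A'$ is closed under adjoints, the orthogonal complement $V^\perp$ is invariant as well, and from this one gets that the orthogonal projection $P$ onto $V$ commutes with every element of $\A'$, i.e. $P \in \A' {}' = \A$ by the double commutant theorem. Since $\A'$ is unital, $\ket{\Omega} = 1 \ket{\Omega} \in \A' \ket{\Omega} \subseteq V$, so $(1 - P) \ket{\Omega} = 0$; and $1 - P$ is a nonzero element of $\A$ because $V$ is a proper subspace. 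This contradicts the distinguishing property, completing the proof.

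The one genuinely non-formal ingredient is the standard fact that a closed subspace invariant under a self-adjoint set of bounded operators has its orthogonal projection in the commutant of that set; I expect this to be the main (small) obstacle, and I would dispatch it with the usual two-line computation showing $Pm = PmP = mP$ for $m \in \A'$, having first checked that invariance of $V$ forces invariance of $V^\perp$. Everything else is bookkeeping together with the already-stated double commutant theorem.
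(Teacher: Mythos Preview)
Your proof is correct and is precisely the standard argument. The paper itself does not actually prove this proposition; it merely states it, remarking that it ``is not so hard to prove,'' so there is nothing to compare your approach against beyond noting that your projection-in-the-commutant argument is the expected one.
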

\noindent This proposition explains the terminology ``separating'' --- the state $\ket{\Omega}$ ``separates'' elements of $\A$ in that it can be used to distinguish them perfectly.

The Reeh-Schlieder theorem \cite{reeh1961bemerkungen, strohmaier2000reeh, strohmaier2002microlocal} guarantees that in quantum field theory, states with well behaved short-distance singularities are cyclic and separating.
See \cite[section 2]{witten-notes} for a review of this point.

\subsection{Unbounded operators}
\label{sec:unbounded-operators}

Tomita-Takesaki theory involves unbounded operators.
This subsection provides a short exposition of the essential features of the theory of such operators.

Given a Hilbert space $\H$ and a Hilbert space $\K$, an operator from $\H$ to $\K$ is defined to be a linear map $T$ from a subspace $\D_T \subseteq \H,$ called the \textit{domain} of $T$, into $\K$.
Note that this generalizes the standard definition of an operator by allowing $T$ to act only on some subspace of $\H$.
This generalization is essential when considering unbounded operators, which may not be well defined for arbitrary input vectors.

An operator $T$ is said to be \textit{bounded} if it is continuous in the norm topology; i.e., if whenever $\ket{\psi_n} \in \D_T$ converges to $\ket{\psi}$ as a sequence of vectors, we have $\ket{\psi} \in \D_T$ and $T \ket{\psi_n} \to T \ket{\psi}.$
This is equivalent to the statement that $T$ satisfies inequality \eqref{eq:norm-bounded}.
For unbounded operators, even if the sequence $\ket{\psi_n}$ converges, the sequence $T \ket{\psi_n}$ need not converge.
There is, however, a useful generalization of the boundedness condition which picks out a special class of unbounded operators.
\begin{definition}
	An operator $T$ from $\H$ to $\K$ is said to be \textbf{closed} if, whenever $\ket{\psi_n} \in \D_T$ converges to $\ket{\psi}$ and $T \ket{\psi_n}$ is a convergent sequence, then we have $\ket{\psi} \in \D_T$ and $T \ket{\psi_n} \to T \ket{\psi}.$ An operator $T$ from $\H$ to $\K$ is said to be \textbf{preclosed} or \textbf{closable} if it can be extended to a closed operator on a larger domain.
\end{definition}
\noindent One can show (see for example \cite[chapter 13]{rudin1991functional}) that a preclosed operator has a unique smallest closed extension, constructed by adding $\ket{\psi}$ to the domain of $T$ whenever $\ket{\psi}$ is the endpoint of a sequence $\ket{\psi_n}$ for which $T \ket{\psi_n}$ converges.
If $T$ is preclosed, then its smallest closed extension is denoted $\bar{T}.$
If the operator $S_2$ is an extension of the operator $S_1,$ we write $S_1 \subseteq S_2.$

It is useful to keep in mind that if we have a closed operator $T$ with domain $\D_T,$ then for any subspace $V \subseteq \D_T,$ the restriction $T|_V$ has a closed extension (namely, $T$), and is therefore preclosed.
If the subspace $V$ is too small, then it might happen that the closure of $T|_{V}$ is a closed operator on some proper subspace of $\D_T,$ so that we have $\bar{T|_{V}} \subsetneq T.$
By contrast, if $V$ is a subspace such that $\bar{T|_{V}}$ is equal to $T$, then $V$ is said to be a \textit{core} of $T.$
Given a core for $T$, the action of $T$ on any vector in its domain can be written as a limit of the action of $T$ on vectors in the core.
Consequently, to prove that a closed operator $T$ satisfies some property, it is often sufficient to check that this property is satisfied on a core.
This idea will be so important later on that we promote it to a formal definition for easy reference.
\begin{definition} \label{def:core}
	Given a closed operator $T$ with domain $\D_T,$ a \textbf{core} for $T$ is a subspace $V \subseteq \D_T$ such that the restriction of $T$ to $V$, which is a preclosed operator, has as its closure the full operator $T$.
	In an equation:
	\begin{equation}
		\bar{T|_V} = T.
	\end{equation}
\end{definition}
\begin{remark} \label{rem:core-graph}
	One useful way to think about closed operators is in terms of an object called a ``graph.''
	The graph of an operator $T$ from $\H$ to $\K$ is a vector subspace of the Hilbert space $\H \oplus \K$, and which is useful because all of the information about how $T$ acts is contained in the structure of this subspace.
	It is defined by
	\begin{equation}
		\text{Graph}(T) = \{\ket{x} \oplus T \ket{x}\}_{\ket{x} \in \D_T}.
	\end{equation}
	From what we have discussed so far, it is not so hard to see that $T$ is closed as an operator if and only if $\text{Graph}(T)$ is topologically closed as a subset of $\H \oplus \K.$
	The graph of the closure of an operator is equal to the topological closure of its graph.
	In particular, to show that a subspace $V$ is a core for the operator $T$, it suffices to show that the set
	\begin{equation}
		\text{Graph}(T|_V) = \{\ket{x} \oplus T \ket{x}\}_{\ket{x} \in V} 
	\end{equation}
	is dense in $\text{Graph}(T)$ with respect to the Hilbert space topology on $\H \oplus \K.$
	If this perspective does not feel helpful, feel free to put it aside for now; we will use it exactly once in the paper, in one of the final steps of the proof of Tomita's theorem in section \ref{sec:tidy-subspace}.
\end{remark}

Now we will discuss adjoints of unbounded operators.
The operator $T$ from $\H$ to $\K$ is said to be \textbf{densely defined} if the domain $\D_T$ is dense in $\H$.
For any densely defined operator, we can define an adjoint operator $T^{\dagger}$ from $\K$ to $\H.$
One must be a little careful in defining this, as $T^{\dagger}$ will generally only be defined on some subspace of the Hilbert space $\K$.
We would like $T^{\dagger}$ to satisfy the standard defining equation
\begin{equation}
	\langle \psi | T \xi \rangle_{\K} = \langle T^{\dagger} \psi | \xi \rangle_{\H}, \quad \ket{\xi} \in \D_T, \ket{\psi} \in \D_{T^{\dagger}}.
\end{equation}
To make sense of this equation, we should take the domain of $T^{\dagger}$ to be the set of all vectors $\ket{\psi}$ in $\K$ for which there exists a vector $|\psi' \rangle \in \H$ satisfying
\begin{equation}
	\braket{\psi}{T \xi}_{\K} = \langle \psi'| \xi \rangle_\H
\end{equation}
for every $\ket{\xi}$ in the domain of $T$.
The Riesz lemma \cite[theorem 2.5]{rudin1991functional} says that such a vector exists if and only if the map
\begin{equation} \label{eq:adjoint-domain}
	\ket{\xi} \mapsto \braket{\psi}{T \xi}_{\K}, \quad \ket{\xi} \in \D_T
\end{equation}
is bounded.
Density of $\D_T$ in $\H$ tells us that $\ket{\psi'}$ is uniquely determined.
If we take $\D_{T^{\dagger}}$ to be the space of all vectors $\ket{\psi}$ for which the map \eqref{eq:adjoint-domain} is bounded, then we can consistently define $T^{\dagger}$ on this domain as the map that takes $\ket{\psi}$ to $\ket{\psi'}.$

There is a subtlety we must now address, which is that even if $T$ is densely defined, its adjoint $T^{\dagger}$ may not be densely defined, so we cannot always take the adjoint twice.
This is addressed by the following proposition.
(See for example \cite[chapter 13]{rudin1991functional}.)
\begin{prop}[Properties of adjoints] \label{prop:adjoint-properties}
	A densely defined operator $T$ is preclosed if and only if its adjoint is densely defined.
	In this case, we have $T^{\dagger} = (\bar{T})^{\dagger},$ and $T^{\dagger \dagger} = \bar{T}.$
	Furthermore, $T^{\dagger}$ is always closed, and if $S$ is an extension of $T$, i.e. $T \subseteq S,$ then we have $S^{\dagger} \subseteq T^{\dagger}.$
\end{prop}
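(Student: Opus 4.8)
The plan is to prove the four assertions of Proposition \ref{prop:adjoint-properties} in roughly the order they appear, leaning on the graph picture from Remark \ref{rem:core-graph}. The basic tool is the observation that, for a densely defined operator $T$ from $\H$ to $\K$, the graph of $T^{\dagger}$ inside $\K \oplus \H$ is (up to a sign and a swap of factors) the orthogonal complement of the graph of $T$ inside $\H \oplus \K$. Concretely, define the unitary $V : \H \oplus \K \to \K \oplus \H$ by $V(\ket{x} \oplus \ket{y}) = (-\ket{y}) \oplus \ket{x}$. Then the defining equation $\braket{\psi}{T\xi}_{\K} = \braket{\psi'}{\xi}_{\H}$ says exactly that $(\ket{\psi} \oplus \ket{\psi'})$ is orthogonal to $V(\text{Graph}(T))$; hence $\text{Graph}(T^{\dagger}) = \big(V(\text{Graph}(T))\big)^{\perp}$. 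Since orthogonal complements are always closed subspaces, this immediately gives the assertion that $T^{\dagger}$ is always closed (interpreting ``closed operator'' via the closedness of its graph, as in Remark \ref{rem:core-graph}), with no appeal to density of $\D_{T^{\dagger}}$.

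Next I would handle the equivalence ``$T$ preclosed $\iff$ $T^{\dagger}$ densely defined,'' and the identification $T^{\dagger\dagger} = \bar{T}$, simultaneously. The point is that taking the orthogonal complement of $\text{Graph}(T^{\dagger})$ and undoing the unitary $V$ recovers the closure of $\text{Graph}(T)$: one computes $V^{-1}\big( \text{Graph}(T^{\dagger})^{\perp} \big) = \overline{\text{Graph}(T)}$. If $T^{\dagger}$ is densely defined, the left-hand side is genuinely the graph of the operator $T^{\dagger\dagger}$, so $\overline{\text{Graph}(T)}$ is the graph of an operator, which is precisely what it means for $T$ to be preclosed; and in that case $\text{Graph}(T^{\dagger\dagger}) = \overline{\text{Graph}(T)} = \text{Graph}(\bar T)$, giving $T^{\dagger\dagger} = \bar T$. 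Conversely, if $T^{\dagger}$ is \emph{not} densely defined, pick a nonzero vector $\ket{\eta} \in \K$ orthogonal to $\D_{T^{\dagger}}$; then $\ket{\eta} \oplus 0 \in \text{Graph}(T^{\dagger})^{\perp}$, wait --- more carefully, $(0 \oplus \ket{\eta})$ lies in $V\big(\text{Graph}(T^{\dagger})\big)^{\perp}$ inside $\H \oplus \K$, and one checks this vector is a nonzero element of $\overline{\text{Graph}(T)}$ with zero first component, so $\overline{\text{Graph}(T)}$ cannot be the graph of an operator, i.e. $T$ is not preclosed. This closes the equivalence.

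For $T^{\dagger} = (\bar T)^{\dagger}$: since $T \subseteq \bar T$, the last assertion of the proposition (once proved) gives $(\bar T)^{\dagger} \subseteq T^{\dagger}$; the reverse inclusion follows because the graph-orthogonality characterization of $T^{\dagger}$ only sees the closure $\overline{\text{Graph}(T)} = \text{Graph}(\bar T)$, so $T^{\dagger}$ and $(\bar T)^{\dagger}$ have literally the same graph. Finally, the monotonicity statement ``$T \subseteq S \implies S^{\dagger} \subseteq T^{\dagger}$'' is the most elementary part: if $\ket{\psi} \in \D_{S^{\dagger}}$, then the bounded functional $\ket{\xi} \mapsto \braket{\psi}{S\xi}$ on $\D_S$ restricts to the functional $\ket{\xi} \mapsto \braket{\psi}{T\xi}$ on the smaller domain $\D_T$, which is therefore still bounded, so $\ket{\psi} \in \D_{T^{\dagger}}$ and $T^{\dagger}\ket{\psi} = S^{\dagger}\ket{\psi}$ by uniqueness of the representing vector (using density of $\D_T$).

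The main obstacle, such as it is, is bookkeeping rather than mathematics: one has to be scrupulous about which Hilbert space ($\H \oplus \K$ versus $\K \oplus \H$) each graph lives in and about the signs in the unitary $V$, since a sign error here silently breaks the complement identities. The one genuinely non-formal input is the Riesz lemma, already invoked in the text to define $\D_{T^{\dagger}}$, which is what lets us pass between ``the functional $\ket{\xi}\mapsto\braket{\psi}{T\xi}$ is bounded'' and ``there exists a representing vector $\ket{\psi'}$''; everything else is linear algebra of closed subspaces. I would present the graph/complement identity as a displayed lemma first and then read off all five conclusions from it.
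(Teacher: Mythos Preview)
Your proof is correct and complete; the graph-complement identity $\text{Graph}(T^{\dagger}) = \big(V(\text{Graph}(T))\big)^{\perp}$ is exactly the right organizing principle, and you read off all five conclusions from it cleanly. The paper itself does not prove this proposition but simply cites chapter 13 of Rudin's \emph{Functional Analysis}, and what you have written is essentially Rudin's argument, so there is nothing substantive to compare.
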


Now that we understand the properties of adjoints, we may define a \textit{Hermitian} or \textit{self-adjoint} operator as an operator from $\H$ to $\H$ satisfying $T = T^{\dagger}.$
Note that for this equation to hold, the domains of $T$ and $T^{\dagger}$ must be the same!
Note also that since adjoints are closed, any self-adjoint operator is automatically closed.
For self-adjoint operators, even if they are unbounded, we have the following version of the spectral theorem.
For a proof, see \cite[chapters 12 and 13]{rudin1991functional}.
\begin{definition}
	A subset of a topological space is \textbf{Borel} if it can be written using unions and intersections of at most countably many open or closed sets.
\end{definition}
\begin{definition}
	The \textbf{spectrum} of an operator $T$ is the set of all complex numbers $z$ for which $z-T$ cannot be inverted as a bounded operator.
\end{definition}
\begin{theorem}[Spectral theorem] \label{thm:spectral-theorem}
	Let $T : \D_T \to \H$ be a self-adjoint operator in $\H.$
	Then the spectrum of $T$ is a subset of the real line, and to every Borel subset $\omega$ of the spectrum, there is an associated spectral projection $P_{\omega},$ which is a projection operator that commutes with $T$.
	Projectors for disjoint subsets of the spectrum project onto orthogonal subspaces, and the projector for a countable union of disjoint sets is formed by taking the countable sum of the projectors for each set.
	The spectral projection for the empty set projects onto the zero vector; the spectral projection for the full spectrum is the identity operator.
	
	For any vectors $\ket{\psi}, \ket{\xi} \in \H,$ the function
	\begin{equation}
		\mu_{\psi, \xi}(\omega) = \langle \psi | P_{\omega} | \xi \rangle
	\end{equation}
	is a measure on the spectrum of $T$.
	From this, one may define an operator $f(T)$ for $f$ any measurable, complex function of the spectrum.
	This operator is closed and densely defined, and has the following properties.
	\begin{itemize}
		\item The domain of $f(T)$ is the set of all vectors $\ket{\psi}$ for which we have
		\begin{equation}
			\int_{\text{spectrum}} d\mu_{\psi, \psi} |f|^2 < \infty.
		\end{equation}
		This integral gives the norm-squared of the vector $f(T) \ket{\psi}.$
		\item Given $\ket{\psi}$ in the domain of $f(T)$ and $\ket{\xi}$ in $\H$, the matrix element is given by
		\begin{equation}
			\bra{\xi} f(T) \ket{\psi} = \int_{\text{spectrum}} d\mu_{\xi, \psi} f.
		\end{equation}
		\item Adjoints are given by complex conjugate functions: $f(T)^{\dagger} = \bar{f}(T).$
		\item We have $f(T) \bar{f}(T) = \bar{f}(T) f(T) = |f|^2(T).$
		\item The domain of $f(T) g(T)$ consists of vectors that are simultaneously in the domain of $g(T)$ and $(f\cdot g)(T)$; on this domain we have $f(T) g(T) = (f \cdot g)(T).$
		\item 
		If $a$ is a bounded operator that commutes with $T$ on every vector where both $aT$ and $Ta$ are defined, then $a$ commutes with $f(T)$ on every vector where both $a f(T)$ and $f(T) a$ are defined.
	\end{itemize}
	If, in addition to being self-adjoint, $T$ is positive (meaning it satisfies $\bra{\psi} T \ket{\psi} \geq 0$ for every $\ket{\psi}$), then its spectrum is contained in the range $[0, \infty).$
\end{theorem}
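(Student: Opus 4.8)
The plan is to reduce the unbounded case to the bounded one via the Cayley transform, after building the Borel functional calculus for bounded self-adjoint operators in two stages: a continuous calculus from polynomial approximation, then a Borel calculus from the Riesz representation theorem. I would begin with the claims about the spectrum, which are self-contained. For $z = x + iy$ with $y \neq 0$ and $\ket{\psi} \in \D_T$, the Pythagorean identity $\lVert (T - z)\ket{\psi}\rVert^2 = \lVert (T - x)\ket{\psi}\rVert^2 + y^2 \lVert\ket{\psi}\rVert^2$ shows $T - z$ is bounded below; since a self-adjoint operator is closed, its range is then closed, and $\ker (T-z)^\dagger = \ker(T - \bar z) = 0$ (same estimate) makes the range dense, so $T - z$ is boundedly invertible and $z \notin \sigma(T)$, giving $\sigma(T)\subseteq\reals$. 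If moreover $\bra{\psi}T\ket{\psi} \geq 0$, then for real $\lambda < 0$ the Cauchy--Schwarz estimate $\lVert (T-\lambda)\ket{\psi}\rVert\,\lVert\ket{\psi}\rVert \geq \bra{\psi}T\ket{\psi} + |\lambda|\lVert\ket{\psi}\rVert^2 \geq |\lambda|\lVert\ket{\psi}\rVert^2$ bounds $T - \lambda$ below by $|\lambda|$, and the same argument gives $\sigma(T) \subseteq [0,\infty)$.

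For a bounded self-adjoint $A$ the same bounded-below estimate shows $\sigma(A) \subseteq [-\lVert A\rVert_\infty, \lVert A\rVert_\infty]$. For a real polynomial $p$, the spectral mapping theorem together with the fact that norm equals spectral radius for self-adjoint operators gives $\lVert p(A)\rVert_\infty = \sup_{\sigma(A)}|p|$; applying this to $\bar p p$ handles complex $p$. By Weierstrass approximation, $p \mapsto p(A)$ extends to an isometric $*$-homomorphism from $C(\sigma(A))$ into $\B(\H)$. Fixing $\ket{\psi},\ket{\xi}$, the functional $f \mapsto \bra{\xi}f(A)\ket{\psi}$ is bounded on $C(\sigma(A))$, so the Riesz representation theorem produces a complex Borel measure $\mu_{\xi,\psi}$ with $\bra{\xi}f(A)\ket{\psi} = \int f\, d\mu_{\xi,\psi}$; it is positive when $\xi = \psi$ because $f \geq 0 \Rightarrow f(A) \geq 0$. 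For bounded Borel $f$ I would then define $f(A)$ through the sesquilinear form $(\xi,\psi) \mapsto \int f\, d\mu_{\xi,\psi}$, which is bounded by $\sup|f|$; the spectral projections are $P_\omega = \chi_\omega(A)$, and orthogonality, countable additivity, $P_\emptyset = 0$, and $P_{\sigma(A)} = I$ all drop out of the corresponding pointwise identities for indicator functions plus dominated convergence in the strong topology.

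For unbounded self-adjoint $T$, I would pass to the Cayley transform $U = (T - i)(T + i)^{-1}$. Since $\pm i \notin \sigma(T)$, the operators $T \pm i$ are bijections $\D_T \to \H$, $U$ is unitary, and $U - 1 = -2i(T+i)^{-1}$ has dense range $\D_T$, so $1$ is not an eigenvalue of $U$; conversely $T = i(1+U)(1-U)^{-1}$ on $\D_T$. Running the construction of the previous paragraph with trigonometric polynomials and Stone--Weierstrass on the unit circle in place of Weierstrass on an interval yields a spectral measure $E$ for $U$ supported on the circle with $E(\{1\}) = 0$. Pushing $E$ forward along the homeomorphism $c : \lambda \mapsto (\lambda - i)/(\lambda + i)$ from $\reals$ onto the circle minus $\{1\}$ defines the spectral projections $P_\omega$ for $T$, and reading the identity $T = i(1+U)(1-U)^{-1}$ through the functional calculus for $U$ (extended to unbounded Borel functions via $E$ in the usual way) gives $T = \int \lambda\, dP_\lambda$. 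Finally I would set $f(T) = \int f(\lambda)\, dP_\lambda$ with domain $\{\ket{\psi} : \int |f|^2\, d\mu_{\psi,\psi} < \infty\}$, and transcribe each listed property of $f(T)$ from the corresponding statement for the Borel functional calculus of $U$.

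I expect the two genuinely delicate points to be, first, the passage from the continuous to the Borel functional calculus --- in particular multiplicativity $f(A)g(A) = (fg)(A)$ for bounded Borel $f,g$, which needs a two-step limiting argument (hold $f$ Borel, approximate $g$ by continuous functions and use dominated convergence, then iterate) --- and, second, the bookkeeping of domains when everything is transported through the Cayley transform, where $f(T)$ is genuinely unbounded and one must check that the stated integral condition describes precisely its domain, with the integral computing $\lVert f(T)\ket{\psi}\rVert^2$. The commutation property with a bounded $a$ would be handled by the standard argument that such an $a$ commutes with the resolvents $(T-z)^{-1}$, hence with $U$ and with every bounded Borel function of $U$, hence with the $P_\omega$; the statement for unbounded $f(T)$ then follows on the common domain.
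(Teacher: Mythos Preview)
The paper does not give its own proof of this theorem; it simply states the result and refers the reader to Rudin's \emph{Functional Analysis}, chapters 12 and 13. Your proposal is essentially the standard argument found there --- continuous functional calculus via polynomial approximation, extension to Borel functions by Riesz representation, then the unbounded case via the Cayley transform --- so there is nothing to compare: you have supplied a correct outline of the proof the paper outsources to a textbook.
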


The last thing we will need to know about unbounded operators is the existence of polar decompositions.
To understand the statement of the theorem, it is helpful to know that a partial isometry from $\H$ to $\K$ is a map $u: \H \to \K$ for which $u^{\dagger} u$ and $u u^{\dagger}$ are both projections.
For a proof of the following theorem, see e.g. \cite[theorem 7.20]{weidmann2012linear}.
\begin{theorem} \label{thm:polar-decomposition}
	If $T$ is a densely defined, closed operator from $\H$ to $\K,$ then $T^{\dagger} T$ is positive and self-adjoint, and by the spectral theorem admits a positive square root $|T| = \sqrt{T^{\dagger} T}.$
	The operator $T$ can be written
	\begin{equation}
		T = u |T|
	\end{equation}
	for some partial isometry $u$ from $\H$ to $\K,$ and the projector $u^{\dagger} u$ projects onto the orthocomplement of $\ker(|T|),$ which is the same as the orthocomplement of $\ker(T).$
	Moreover, the polar decomposition is unique: given any decomposition
	\begin{equation}
		T = v P
	\end{equation}
	where $v$ is a partial isometry, $P$ is positive, and $v^{\dagger} v$ projects onto the orthocomplement of the kernel of $T$, we must have $v=u$ and $P=|T|.$
\end{theorem}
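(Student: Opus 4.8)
The plan is to base everything on the spectral theorem (Theorem~\ref{thm:spectral-theorem}) applied to $T^\dagger T$, so the first task is to check that $T^\dagger T$ is a positive self-adjoint operator. Positivity is immediate: on $\D_{T^\dagger T} = \{\psi \in \D_T : T\psi \in \D_{T^\dagger}\}$ one has $\langle\psi|T^\dagger T|\psi\rangle = \langle T\psi|T\psi\rangle \geq 0$, and in particular $\|(1+T^\dagger T)\psi\| \geq \|\psi\|$. Self-adjointness is von Neumann's theorem, which I would prove with graphs: since $T$ is closed and densely defined, $T^\dagger$ is closed, densely defined, and satisfies $T^{\dagger\dagger}=T$ (Proposition~\ref{prop:adjoint-properties}), and the ``flip'' unitary $\mathcal V : \K\oplus\H \to \H\oplus\K$, $\mathcal V(\eta\oplus\chi)=(-\chi)\oplus\eta$, identifies $\mathrm{Graph}(T^\dagger)$ with the orthogonal complement of $\mathrm{Graph}(T)$, so that $\H\oplus\K = \mathrm{Graph}(T)\oplus\mathcal V(\mathrm{Graph}(T^\dagger))$. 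Decomposing $\chi\oplus 0$ for arbitrary $\chi\in\H$ along this splitting shows $1+T^\dagger T$ maps $\D_{T^\dagger T}$ onto all of $\H$; together with the bound above, $1+T^\dagger T$ then has a bounded, everywhere-defined, injective, positive inverse $B$, which is therefore self-adjoint with dense range, so $\D_{T^\dagger T}=\operatorname{ran}B$ is dense and $T^\dagger T = B^{-1}-1$ is self-adjoint. The spectral theorem now defines $|T| := f(T^\dagger T)$ with $f(x)=\sqrt x$ on the spectrum $\subseteq[0,\infty)$, a closed densely defined positive operator with $|T|^2 = T^\dagger T$ and $\D_{|T|^2} = \D_{T^\dagger T}$.

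Next I would establish the isometry identity $\||T|\psi\| = \|T\psi\|$ together with $\D_{|T|}=\D_T$. On $\D_{T^\dagger T}=\D_{|T|^2}$ this is one line, $\||T|\psi\|^2 = \langle\psi| |T|^2|\psi\rangle = \langle\psi|T^\dagger T|\psi\rangle = \|T\psi\|^2$, using that $|T|$ is self-adjoint. To push it out to $\D_T$ and $\D_{|T|}$, I would use that $\D_{T^\dagger T}$ is a \emph{core} for both $T$ and $|T|$: for $T$ this follows from the graph picture of Remark~\ref{rem:core-graph}, since a vector $\chi\oplus T\chi$ orthogonal in the graph inner product to $\mathrm{Graph}(T|_{\D_{T^\dagger T}})$ satisfies $\langle\chi,(1+T^\dagger T)\psi\rangle = 0$ for all $\psi\in\D_{T^\dagger T}$, hence is orthogonal to all of $\H$ and so vanishes; for $|T|$ one runs the same argument (or appeals directly to the spectral theorem). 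Given the common core, if $\psi_n\in\D_{T^\dagger T}$ with $\psi_n\to\psi$ and $T\psi_n\to T\psi$, then $\||T|(\psi_n-\psi_m)\| = \|T(\psi_n-\psi_m)\|\to 0$, so $|T|\psi_n$ converges and, $|T|$ being closed, $\psi\in\D_{|T|}$ with $\||T|\psi\|=\|T\psi\|$; the reverse inclusion is symmetric. In particular $\ker T = \ker|T|$.

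With the isometry in hand the partial isometry builds itself. Define $u_0 : \operatorname{ran}(|T|)\to\K$ by $u_0(|T|\psi) = T\psi$; it is well-defined and isometric precisely because $\ker|T|=\ker T$ and $\||T|\psi\|=\|T\psi\|$. Extend $u_0$ by continuity to $\overline{\operatorname{ran}(|T|)}$ and by zero on its orthocomplement $\overline{\operatorname{ran}(|T|)}^\perp = \ker|T| = \ker T$ (using self-adjointness of $|T|$), giving a partial isometry $u$ with $u^\dagger u = P_{(\ker T)^\perp}$ and, by construction, $u|T|\psi = T\psi$ for every $\psi\in\D_{|T|}=\D_T$, i.e. $T = u|T|$. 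For uniqueness, given any $T = vP$ with $v$ a partial isometry, $P\geq 0$, and $v^\dagger v$ the projection onto $(\ker T)^\perp$: since $v$ is bounded, $T^\dagger = Pv^\dagger$ and hence $T^\dagger T = Pv^\dagger v P$; the bookkeeping of kernels and ranges of the partial isometry (notably $\ker v = \ker T$) forces $v^\dagger v P = P$, so $P^2 = T^\dagger T$, and uniqueness of the positive square root in the spectral theorem gives $P = |T|$; then $v|T|\psi = T\psi = u|T|\psi$ on $\operatorname{ran}(|T|)$ and $v$, $u$ both vanish on $\ker T$, so $v = u$.

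I expect the main obstacle to be the self-adjointness of $T^\dagger T$ (von Neumann's theorem) and the closely linked fact that $\D_{T^\dagger T}$ is a core for $T$ and for $|T|$. These are exactly what make the innocent-looking identity $\||T|\psi\| = \|T\psi\|$ valid on all of $\D_T$ rather than merely on a dense subspace, and without them $u$ would fail to be a bona fide partial isometry with the advertised initial space; everything else --- positivity, the definition of $|T|$, the construction and checking of $u$, and the uniqueness argument --- is routine manipulation of adjoints, kernels, and ranges once that input is available.
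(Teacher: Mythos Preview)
The paper does not actually prove this theorem: immediately before the statement it says ``For a proof of the following theorem, see e.g.\ \cite[theorem 7.20]{weidmann2012linear},'' and no argument is given. So there is nothing in the paper to compare your proof against; the polar decomposition is quoted as background. Your write-up is the standard textbook route (von Neumann's theorem via the graph decomposition $\H\oplus\K=\mathrm{Graph}(T)\oplus\mathcal V(\mathrm{Graph}(T^\dagger))$, then the core argument to transport $\||T|\psi\|=\|T\psi\|$ from $\D_{T^\dagger T}$ to $\D_T$, then build $u$ on $\overline{\operatorname{ran}|T|}$), and it is correct.

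One small caution on the uniqueness step: from $\ker v=\ker T$ you assert ``$v^\dagger v P=P$,'' i.e.\ $\operatorname{ran}P\subseteq(\ker T)^\perp$, equivalently $\ker T\subseteq\ker P$. What the hypotheses give directly is only $\ker P\subseteq\ker T$ (from $T=vP$) and $P(\ker T\cap\D_P)\subseteq\ker T$ (since $vP\chi=0$ forces $P\chi\in\ker v=\ker T$); the reverse inclusion does not follow without an extra condition. Indeed, for $T=0$ one has $v=0$ and $T=vP$ holds for \emph{any} positive $P$, so the uniqueness clause as literally stated in the paper is slightly too strong. The usual fix is to require $\ker v=\ker P$ (rather than $\ker v=\ker T$) in the uniqueness hypothesis, after which your computation $T^\dagger T=Pv^\dagger vP=P^2$ goes through cleanly. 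This is a wrinkle in the theorem's statement rather than in your strategy, and it is irrelevant to the paper's applications, where the Tomita operator has trivial kernel.
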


\begin{definition}
	Given a von Neumann algebra $\A,$ we say a closed operator $T$ is \textbf{affiliated with $\A$} if it commutes with each $a' \in \A'$ on every vector where both $a' T$ and $T a'$ are defined.
	This is the closest we can get to an unbounded operator being ``in'' a von Neumann algebra.
\end{definition}

\begin{theorem} \label{thm:neumann-polar}
	If $T$ is a bounded operator in the von Neumann algebra $\A,$ and its polar decomposition is $T = u |T|,$ then we have $u \in \A,$ and for every bounded function $f$ on the spectrum of $|T|$, we also have $f(|T|) \in \A.$
	
	If $T$ is a closed operator affiliated with the von Neumann algebra $\A$, and the polar decomposition of $T$ is $u |T|$, then we have $u \in \A,$ and for every bounded function $f$ on the spectrum of $|T|$, we also have $f(|T|) \in \A.$
\end{theorem}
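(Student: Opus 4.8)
My plan is to deduce membership in $\A$ from the double commutant theorem $\A = \A''$: since the commutant $\A'$ is itself a von Neumann algebra, every element of $\A'$ is a finite linear combination of unitary elements of $\A'$, so in each case it suffices to show that the operator under consideration commutes with every \emph{unitary} $u' \in \A'$. The entire argument then comes down to understanding how conjugation by such a $u'$ interacts with the polar decomposition, and the uniqueness clause of Theorem~\ref{thm:polar-decomposition} does the rest.

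Consider first the bounded case, and fix a unitary $u' \in \A'$. Because $T \in \A$, we have $u' T u'^{\dagger} = T$, and conjugating $T = u |T|$ gives $T = (u' u u'^{\dagger})(u' |T| u'^{\dagger})$. I would check that this is again a polar decomposition in the sense of Theorem~\ref{thm:polar-decomposition}: $u' u u'^{\dagger}$ is a partial isometry because conjugation by $u'$ sends the projections $u^{\dagger} u$ and $u u^{\dagger}$ to projections; $u' |T| u'^{\dagger}$ is positive; and $(u' u u'^{\dagger})^{\dagger}(u' u u'^{\dagger}) = u'(u^{\dagger} u) u'^{\dagger}$ projects onto $u'\!\left[(\ker T)^{\perp}\right]$, which equals $(\ker u' T u'^{\dagger})^{\perp} = (\ker T)^{\perp}$ as required. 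By uniqueness, $u' u u'^{\dagger} = u$ and $u' |T| u'^{\dagger} = |T|$. Since $u'$ was an arbitrary unitary of $\A'$, both $u$ and $|T|$ commute with all of $\A'$, hence lie in $\A'' = \A$. Finally, every $a' \in \A'$ commutes with $|T| \in \A$, so by the last property listed in the spectral theorem (Theorem~\ref{thm:spectral-theorem}) $a'$ also commutes with the bounded operator $f(|T|)$; therefore $f(|T|) \in \A'' = \A$.

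For a closed operator $T$ merely affiliated with $\A$ I would run the same argument, but one step genuinely needs care and I expect it to be the main obstacle: I must promote the hypothesis ``$u'$ commutes with $T$ on every vector where both compositions are defined'' to the honest operator identity $u' T u'^{\dagger} = T$, which in particular includes the equality of domains $u'\,\D_T = \D_T$. The pointwise hypothesis alone only controls $T$ on the possibly-smaller subspace $\D_T \cap u'\,\D_T$, so I would use closedness of $T$: identify a core for $T$ on which all relevant compositions with $u'$ and $u'^{\dagger}$ are defined, verify $u' T u'^{\dagger} = T$ there, and extend to all of $\D_T$ by taking limits along the core and invoking that the graph of $T$ is closed (alternatively, one may cite the standard equivalence of the two formulations of affiliation). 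Once $u' T u'^{\dagger} = T$ is established, conjugating the polar decomposition and applying uniqueness exactly as above gives $u' u u'^{\dagger} = u$ and $u' |T| u'^{\dagger} = |T|$ for every unitary $u' \in \A'$; the first identity puts $u \in \A'' = \A$, and the second says $|T|$ is affiliated with $\A$. For a bounded function $f$ the operator $f(|T|)$ is bounded and everywhere defined; writing an arbitrary $a' \in \A'$ as a finite combination of unitaries of $\A'$, each of which commutes with $|T|$ on $\D_{|T|}$ by the identity just proved, we see that $a'$ commutes with $|T|$ wherever both compositions are defined, hence commutes with $f(|T|)$ by the spectral theorem, hence commutes with it everywhere; so $f(|T|) \in \A'' = \A$, completing the plan.
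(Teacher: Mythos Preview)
Your proof is correct and follows exactly the approach the paper indicates: the paper does not give a detailed proof but simply remarks that the theorem ``is easy to prove using the bicommutant theorem and the uniqueness of the polar decomposition,'' which is precisely what you do. Your careful treatment of the domain issue in the affiliated case is a welcome elaboration of what the paper leaves implicit.
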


This last theorem is easy to prove using the bicommutant theorem and the uniqueness of the polar decomposition.
It has the following useful corollary.

\begin{corollary} \label{cor:neumann-approximation}
	If $T$ is an unbounded operator affiliated with the von Neumann algebra $\A$, then there exists a sequence $T_n$ of operators in $\A$ such that, for each $\ket{\psi}$ in the domain of $T$, we have $T_n \ket{\psi} \to T \ket{\psi}.$
\end{corollary}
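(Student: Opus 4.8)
The plan is to combine the polar decomposition of $T$ with a spectral truncation. First I would invoke Theorem~\ref{thm:polar-decomposition} to write $T = u |T|$, where $u$ is a partial isometry and $|T| = \sqrt{T^{\dagger} T}$ is positive and self-adjoint; since $u$ is bounded and defined on all of $\H$, the product $u|T|$ has the same domain as $|T|$, so $\D_T = \D_{|T|}$. Because $T$ is affiliated with $\A$ (and, like every operator to which we apply this corollary, densely defined), Theorem~\ref{thm:neumann-polar} puts $u$ in $\A$ and puts $f(|T|)$ in $\A$ for every bounded function $f$ on the spectrum of $|T|$.

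Next I would define the truncations. Let $P_n$ be the spectral projection of $|T|$ onto $[0,n]$, and set
\begin{equation}
	T_n = u\, |T|\, P_n .
\end{equation}
Here $|T| P_n = f_n(|T|)$ with $f_n(\lambda) = \lambda\, \mathbf{1}_{[0,n]}(\lambda)$, a bounded function of $|T|$ of norm at most $n$; hence $|T| P_n \in \A$, and since $u \in \A$ we get $T_n \in \A$. Moreover $P_n$ maps $\H$ into $\D_{|T|} = \D_T$, so $T_n \ket{\psi} = u\,|T|\,P_n \ket{\psi} = T(P_n \ket{\psi})$ makes sense as a bounded operator applied to $\ket{\psi}$.

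The final step is the convergence estimate. Using that $u$ is a partial isometry (hence has operator norm at most $1$) together with the spectral theorem (Theorem~\ref{thm:spectral-theorem}), for $\ket{\psi} \in \D_T$ I would bound
\begin{equation}
	\lVert T_n \ket{\psi} - T \ket{\psi} \rVert^2 \leq \lVert (|T| P_n - |T|)\ket{\psi} \rVert^2 = \int_{(n, \infty)} \lambda^2 \, d\mu_{\psi, \psi}(\lambda),
\end{equation}
where $\mu_{\psi, \psi}$ is the spectral measure of $|T|$ in the state $\ket{\psi}$. The condition $\ket{\psi} \in \D_T = \D_{|T|}$ is exactly $\int \lambda^2 \, d\mu_{\psi, \psi} < \infty$, so the tail integral on the right tends to $0$ as $n \to \infty$ by dominated convergence, which gives $T_n \ket{\psi} \to T \ket{\psi}$.

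I do not expect a genuine obstacle here; the argument is short, and the only fiddly points are bookkeeping ones — confirming that $\D_T$ really coincides with $\D_{|T|}$, so that the hypothesis on $\ket{\psi}$ translates into finiteness of $\int \lambda^2 \, d\mu_{\psi,\psi}$, and correctly citing Theorem~\ref{thm:neumann-polar} to land both $u$ and the truncated $|T| P_n$ inside $\A$.
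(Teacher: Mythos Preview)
Your proposal is correct and follows exactly the approach of the paper: polar decompose $T = u|T|$ and truncate the spectral range of $|T|$ to $[0,n]$. The paper's proof is only a one-line sketch, and your write-up simply fills in the routine details (that $u$ and $|T|P_n$ lie in $\A$ via Theorem~\ref{thm:neumann-polar}, and that the tail of the spectral integral vanishes because $\ket{\psi}\in\D_{|T|}$).
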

\begin{proof}
	Write $T = u |T|,$ and using the spectral theorem, take $T_n$ to be $u |T|_n$, where $|T|_n$ is the projection of $|T|$ onto the spectral range $[0, n].$
\end{proof}

\subsection{Analytic operator theory}
\label{sec:analytic-operators}

To study Tomita-Takesaki theory, it is essential to understand when an operator-valued function of the complex plane can be thought of as holomorphic.
The natural definition is the correct one --- given a function $f : \comps \to \B(\H)$, we say it is holomorphic at $z \in \comps$ if the limit
\begin{equation}
	f'(z) = \lim_{h \to 0} \frac{f(z+h) - f(z)}{h}
\end{equation} 
exists.
However, there is a subtlety: as explained in section \ref{subsec:operator-topologies}, there are many important topologies on spaces of operators, and this limit may exist with respect to some of those topologies and not with respect to others.
We therefore say, for example, that $f$ is \textit{weakly holomorphic} (or weakly analytic) at $z$ if this limit exists in the weak topology, and \textit{norm holomorphic} (or norm analytic) if this limit exists in the norm topology.
Note that by figure \ref{fig:topologies}, every norm analytic function is weakly analytic, but the converse is not guaranteed.\footnote{In fact one can show that for functions valued in spaces of bounded operators, norm analyticity and weak analyticity are equivalent, but we will not use this.
See \cite[section 9.24]{struatilua2019lectures}.}

Many of the basic theorems of complex analysis hold for analytic operator-valued functions, regardless of which topology is used.
The basic theorems of complex analysis all essentially follow from Cauchy's theorem that the integral of a holomorphic function around a simple closed curve vanishes.
So to understand complex analysis for operator-valued functions, it is necessary to develop a theory of operator-valued integration.
The theory of integration is easiest to understand for the norm topology on $\B(\H)$, where the theory is known as \textit{Bochner integration}.
This theory is explained in the textbooks \cite{dunford1988linear} and \cite{yosida2012functional}; see also my recent blog post \cite{sorce-blog-bochner}.
The point is that for any Banach space $X$ (e.g. $\B(\H)$ or $\H$ itself), and for any measure space $\Omega,$ there exists a class of integrable functions $f: \Omega \to X$ for which the integral $\int_{\Omega} d\mu f$ can be defined.
If $\Omega$ is a subset of some finite-dimensional real or complex space, then a continuous function is in the integrable class if and only if its norm has finite integral in the standard sense of integrals of real-valued functions.

The Bochner integral is linear, and it satisfies the triangle inequality
\begin{equation}
	\left\lVert \int_{\Omega} d\mu\, f \right\rVert \leq \int_{\Omega} d\mu\, \lVert f \rVert.
\end{equation}
The most important property of the Bochner integral is that if $Y$ is a Banach space and $\phi: X \to Y$ is a bounded linear functional of $X$, then we have
\begin{equation}
	\phi\left( \int d\mu\, f \right) = \int d\mu\, \phi(f).
\end{equation}
In particular, this is true for bounded linear functionals $\phi : X \to \comps.$
It is a basic fact of Banach space theory that a vector in Banach space is completely determined by the values it gives to bounded linear functionals.
(I.e., if $\phi(x)$ vanishes for every bounded linear functional, then $x$ is the zero vector.)
Consequently, the integral $\int d\mu\, f$ is completely determined by the integrals $\int d\mu\, \phi(f),$ which are ordinary integrals of complex-valued functions.
This lets one reduce the theory of contour integration in Banach space to the theory of contour integration for ordinary complex functions.
For more details, consult the references mentioned above; for our purposes, it is enough to know that if an operator- or vector-valued function of the complex plane is norm holomorphic, then its contour integral around any closed curve vanishes.
From this one can prove the residue theorem, Morera's theorem (that a function is holomorphic in a domain if all closed contour integrals vanish), and the Weierstrass theorem (that a uniformly convergent sequence of holomorphic functions converges to a holomorphic function).

Two essential properties of operator-valued complex analysis are as follows.
Detailed proofs can be found in \cite[sections 2.28, 2.30]{struatilua2019lectures} or \cite{sorce-blog-stone}.
The statements are sketched in figure \ref{fig:bounded-continuation}.
\begin{itemize}
	\item If a bounded, positive operator $T$ on $\H$ has spectrum bounded away from zero, then the function $z \mapsto T^z$ is norm analytic in the entire complex plane.
	This is easy to see, because it can be written in terms of an exponential as $T = e^{z \log{T}}.$
	\item If a bounded, positive operator $T$ on $\H$ has spectrum going all the way to zero, then the function\footnote{The operator $T^z$ is defined using the spectral theorem.
	While the function $x \mapsto x^{z}$ is only defined for positive $x$, the operator $T^z$ is defined in the case where $T$ has a nontrivial kernel by defining $x \mapsto x^z$ at zero by $0 \mapsto 0.$
	In other words, $T^z$ is defined to act as the zero operator on the kernel of $T$.} $z \mapsto T^z$ is norm analytic in the right half-plane and strongly continuous (but not norm continuous!) on the imaginary axis.
	This can be shown by projecting $T$ onto some subset of its spectrum that is bounded away from zero, using the previous bullet point, and taking a limit while applying the theorem that uniformly converging limits of holomorphic functions are holomorphic.
\end{itemize}

\begin{figure}
	\centering
	\includegraphics{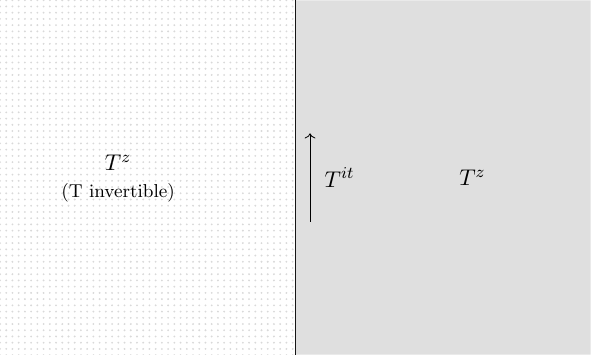}
	\caption{If $T$ is bounded and invertible, then the function $T^z$ is norm analytic in the entire complex plane.
	If it is bounded but not invertible, then $T^z$ is norm analytic in the right half-plane, and strongly continuous on the imaginary axis.}
	\label{fig:bounded-continuation}
\end{figure}

Another important tool for studying bounded operators using complex analysis is the resolvent integral.
If $T$ is a bounded self-adjoint operator, then its spectrum is some bounded subset of the real line.
Away from this subset, the function $z \mapsto (z - T)^{-1}$ is norm analytic.
One can show that if $f : \comps \to \comps$ is a function that is analytic in a neighborhood of the spectrum of $T$, then the operator $f(T)$ --- defined by the spectral theorem (theorem \ref{thm:spectral-theorem}) --- can be computed as a residue integral of the resolvent:
\begin{equation} \label{eq:bounded-residue}
	f(T)
		= \frac{1}{2 \pi i} \int_{\gamma} dz\, f(z) (z-T)^{-1}.
\end{equation}
In this equation, $\gamma$ is a simple, closed, counterclockwise contour surrounding the spectrum of $T$, and contained in the domain of analyticity of $f.$

The above considerations can be upgraded to unbounded operators.
If $T$ is a positive, self-adjoint, \textit{unbounded} operator, then one can always define the operators $T^z$ using the spectral theorem, but these operators will mostly be unbounded, and their domains will all be different.
It is therefore impossible to talk about the function $z \mapsto T^z$ being analytic in norm.
What one can say instead is that for certain subsets of the complex plane, there exist vectors $\ket{\psi}$ contained in the domain of $T^z$ for every $z$ in this subset, and that within these subsets of the complex plane, the functions $z \mapsto T^z \ket{\psi}$ are analytic.

Let $w = x + i y$ be a complex number, let $T$ be a positive, unbounded operator, and suppose that the vector $\ket{\psi}$ is in the domain of $T^{w}.$
Then, using the first bullet point of the spectral theorem (theorem \ref{thm:spectral-theorem}), it is not hard to show that $\ket{\psi}$ is in the domain of every $T^z$ where $z$ lies in the vertical strip between the imaginary axis and $w.$
See figure \ref{fig:vector-strip}.
The intuitive reason for this is that raising $T$ to an imaginary power produces a bounded operator (in fact, a partial isometry), so changing the imaginary part of $w$ does not affect the magnitude of the operator $T^w$, and moving the real part of $w$ closer to the imaginary axis makes $T^w$ less prone to diverge.
One can show that the function $z \mapsto T^z \ket{\psi}$ is holomorphic in the interior of this strip, and continuous at the boundary.\footnote{This fact, incidentally, is why holomorphic functions in a strip show up so often in functional analysis.}
For a proof, see for example \cite[section 9.15]{struatilua2019lectures}.

\begin{figure}
	\centering
	\includegraphics{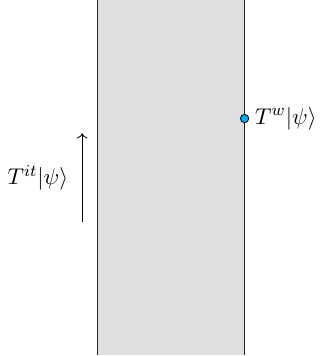}
	\caption{If $T$ is an unbounded positive operator, and the vector $\ket{\psi}$ is in the domain of $T^w,$ then it is also in the domain of $T^z$ for each $z$ in the vertical strip between $w$ and the imaginary axis.
	In the example sketched here, $w$ has positive real part and the imaginary axis is the left boundary of the strip.
	The function $z \mapsto T^z \ket{\psi}$ is holomorphic in the interior of this strip and continuous on the boundary.
	}
	\label{fig:vector-strip}
\end{figure}

Now, suppose that in addition to being positive, the unbounded operator $T$ has trivial kernel.
Even if the spectrum of $T$ contains zero, so that $T^{-1}$ is not invertible as a bounded operator, the spectral theorem (theorem \ref{thm:spectral-theorem}) implies that when the kernel of $T$ is trivial, the operator $f(T)$ is independent of the behavior of $f$ at zero.
So we may unambiguously define an unbounded operator $T^{-1}$ by applying the function $x \mapsto 1/x$ to $T$.
The domain of $T^{-1}$ is the closure of the image of $T$, and we have
\begin{equation}
	T T^{-1} = 1_{\D_{T^{-1}}}, \quad T^{-1} T = 1_{\D_T}.
\end{equation}
When $T$ is invertible, it is easy to show using the spectral theorem that $T^{it}$ is unitary, and so functions like $z \mapsto T^{z} \ket{\psi}$ are holomorphic in some strip and limit to unitary flows $it \mapsto T^{it} \ket{\psi}$ on the imaginary axis.
It is possible to show from this the following beautiful result, which allows us to completely determine the domain of $T$ in terms of analytic functions in a vertical strip.
\begin{theorem} \label{thm:vector-continuation}
	Let $T$ be a positive, invertible, unbounded operator.
	Then $\ket{\psi}$ is in the domain of $T^w$ if and only if, for each $\ket{\xi} \in \H,$ the map $it \mapsto \langle \xi | T^{it} \ket{\psi}$ admits an analytic continuation to the vertical strip between the imaginary axis and $w.$
	The overlap $\langle \xi | T^{w} | \psi \rangle$ is obtained by evaluating this analytic continuation at the point $w.$
\end{theorem}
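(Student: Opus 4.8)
The plan is to dispose of the forward implication by quoting the strip result stated just above the theorem, and to prove the converse by a spectral-truncation argument whose one nontrivial input is the closed graph theorem. First reduce to the case $\Re w > 0$: if $\Re w = 0$ then $T^w$ is unitary, its domain is all of $\H$, and the ``strip'' is a segment of the imaginary axis, so there is nothing to prove; if $\Re w < 0$, apply the result to $T^{-1}$ (again positive, invertible, and unbounded), using $(T^{-1})^{it} = T^{-it}$ and $(T^{-1})^{-w} = T^w$, which turns the claim for $T$ at $w$ into the claim for $T^{-1}$ at $-w$ with the strip and the evaluation point reflected across the imaginary axis. So assume $w = x + iy$ with $x > 0$, let $S = \{z : 0 < \Re z < x\}$, and let $\bar S$ be its closure.

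For the forward direction, suppose $\ket{\psi} \in \D_{T^w}$. By the strip result recalled just above the theorem, $\ket{\psi}$ lies in the domain of $T^z$ for every $z \in \bar S$, the $\H$-valued map $z \mapsto T^z \ket{\psi}$ is holomorphic on $S$ and continuous on $\bar S$, and by the spectral theorem $\lVert T^z \ket{\psi} \rVert^2 = \int \lambda^{2 \Re z}\, d\mu_{\psi,\psi} \leq \int (1 + \lambda^{2x})\, d\mu_{\psi,\psi} = \lVert \ket{\psi} \rVert^2 + \lVert T^x \ket{\psi} \rVert^2$, so this map is bounded on $\bar S$. Pairing against a fixed $\ket{\xi}$, the scalar function $z \mapsto \langle \xi | T^z | \psi \rangle$ is holomorphic on $S$, continuous on $\bar S$, and restricts on the imaginary axis to $it \mapsto \langle \xi | T^{it} | \psi \rangle$; it is therefore the required analytic continuation, and its value at $w$ is $\langle \xi | T^w | \psi \rangle$.

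For the converse, let $F_\xi$ denote the analytic continuation provided by the hypothesis, which I take to be holomorphic on $S$ and continuous on $\bar S$. Let $P_n = P_{[1/n, n]}(T)$ be the spectral projection of $T$ onto $[1/n, n]$; on its range $T$ is bounded and bounded away from zero, so $z \mapsto T^z P_n$ is norm-holomorphic on all of $\comps$, and $z \mapsto \langle \xi | T^z P_n | \psi \rangle$ is entire. On the imaginary axis this entire function equals $\langle P_n \xi | T^{it} | \psi \rangle = F_{P_n \xi}(it)$, so by uniqueness of analytic continuation across the line $\Re z = 0$ (a Schwarz-reflection/Morera argument, valid because both functions are continuous up to that line) it coincides with $F_{P_n \xi}$ throughout $\bar S$; in particular $F_{P_n \xi}(w) = \langle \xi | T^w P_n | \psi \rangle$. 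Since $P_n \ket{\psi} \to \ket{\psi}$ and $\lVert T^w P_n \ket{\psi} \rVert^2 = \int_{[1/n, n]} \lambda^{2x}\, d\mu_{\psi,\psi}$ increases to $\int \lambda^{2x}\, d\mu_{\psi,\psi}$, the proof will be complete once I establish the single estimate $N := \sup\{ |F_\eta(w)| : \lVert \ket{\eta} \rVert \leq 1 \} < \infty$. Indeed, granting this, $\lVert T^w P_n \ket{\psi} \rVert = \sup_{\lVert \ket{\xi} \rVert \leq 1} |F_{P_n \xi}(w)| \leq N$ for every $n$ (because $\lVert P_n \ket{\xi} \rVert \leq \lVert \ket{\xi} \rVert$), hence $\int \lambda^{2x}\, d\mu_{\psi,\psi} \leq N^2 < \infty$, which is exactly the condition $\ket{\psi} \in \D_{T^x} = \D_{T^w}$; and then $T^w P_n \ket{\psi} \to T^w \ket{\psi}$ by dominated convergence, so $\langle \xi | T^w | \psi \rangle = \lim_n F_{P_n \xi}(w) = F_\xi(w)$, which is the last sentence of the theorem.

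The step I expect to be the main obstacle is this estimate $N < \infty$ --- equivalently, that $\ket{\eta} \mapsto F_\eta(w)$ is a bounded functional --- and the difficulty is precisely that $w$ lies on the boundary of $S$, where a bound cannot be read off from the unitarity of $T^{it}$ in the way it can on the imaginary axis. I would obtain it from the closed graph theorem. Let $X$ be the space of functions on $\bar S$ that are continuous on $\bar S$ and holomorphic on $S$, topologized by uniform convergence on compact subsets of $\bar S$; since continuity and holomorphy both survive uniform-on-compacts limits (the latter by Weierstrass/Morera), $X$ is a closed subspace of the Fr\'echet space of continuous functions on $\bar S$, hence itself Fr\'echet. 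The hypothesis is exactly that $\ket{\eta} \mapsto F_\eta$ is a well-defined map $\H \to X$, and it is conjugate-linear because $F_\eta(it) = \langle \eta | T^{it} | \psi \rangle$ is, by the uniqueness argument just used. Its graph is closed: if $\ket{\eta_k} \to \ket{\eta}$ in $\H$ and $F_{\eta_k} \to G$ in $X$, then for each real $t$ one has $F_{\eta_k}(it) = \langle \eta_k | T^{it} | \psi \rangle \to \langle \eta | T^{it} | \psi \rangle = F_\eta(it)$, and also $F_{\eta_k}(it) \to G(it)$, so $G$ and $F_\eta$ agree on the line $\Re z = 0$ and therefore on all of $\bar S$. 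By the closed graph theorem (applied over $\reals$, if one is fussy about the conjugate-linearity), $\ket{\eta} \mapsto F_\eta$ is continuous, so composing with evaluation at $w$ gives $|F_\eta(w)| \leq C \lVert \ket{\eta} \rVert$ for some constant $C$, which is the desired estimate. Everything else in the converse is routine bookkeeping with the spectral theorem.
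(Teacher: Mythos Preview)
The paper does not actually prove this theorem: it is quoted as background in section \ref{sec:analytic-operators} with a pointer to \cite[sections 9.15--9.20]{struatilua2019lectures}. So there is no ``paper's own proof'' to compare against, and your proposal stands on its own.

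Your argument is correct. The forward implication is exactly the strip result stated around figure~\ref{fig:vector-strip}, and your reduction to $\Re w>0$ is the right normalization. For the converse, the spectral truncation $P_n=P_{[1/n,n]}(T)$ is the standard device, and the identification $F_{P_n\xi}(w)=\langle\xi|T^wP_n|\psi\rangle$ via the boundary-uniqueness/Morera argument is sound (a function holomorphic on the open strip, continuous up to the imaginary axis, and vanishing there extends by zero across that axis and hence vanishes by the identity theorem). The only substantive step is the bound $N<\infty$, and your closed-graph argument handles it cleanly: the space $X$ of functions continuous on $\bar S$ and holomorphic on $S$, with the compact-open topology, is a closed subspace of the Fr\'echet space $C(\bar S)$ and hence Fr\'echet; the graph of $\eta\mapsto F_\eta$ is closed by the same boundary-uniqueness argument; and evaluation at $w$ is continuous on $X$. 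The final identification $F_\xi(w)=\langle\xi|T^w|\psi\rangle$ then follows because the continuity you just established gives $F_{P_n\xi}\to F_\xi$ in $X$ as $P_n\xi\to\xi$.

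One small point worth flagging explicitly in a write-up: you are reading ``admits an analytic continuation to the strip'' as ``holomorphic on the open strip and continuous on the closed strip,'' which is the convention used elsewhere in the paper (e.g.\ the KMS discussion and figure~\ref{fig:vector-strip}) and is necessary for the evaluation at the boundary point $w$ in the theorem's last sentence to make sense. With that reading, your proof is complete; the closed-graph route is perhaps slightly less direct than a three-lines/uniform-boundedness argument one sometimes sees, but it has the virtue of not requiring any a priori boundedness hypothesis on the $F_\eta$.
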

\noindent See figure \ref{fig:weak-vector-strip} for intuition.
For a proof, see e.g. \cite[sections 9.15-9.20]{struatilua2019lectures}.

\begin{figure}
	\centering
	\includegraphics{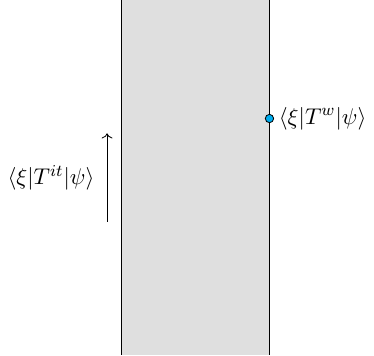}
	\caption{A sketch of the statement of theorem \ref{thm:vector-continuation}.
	To check that a vector $\ket{\psi}$ is in the domain of $T^{w},$ it is sufficient to check that all functions of the form $\langle \xi | T^{it} | \psi\rangle$ admit an analytic continuation from the imaginary axis to the vertical strip bounded by $w$.
	The overlap $\langle \xi | T^w |\psi \rangle$ is obtained by evaluating this analytic function at $w$.
	The main difference from figure \ref{fig:vector-strip} is that it is easier to study the analyticity of complex-valued functions like $\langle \xi | T^z | \psi \rangle$ than the analyticity of vector-valued functions like $T^z \ket{\psi}.$
	Note that once we know $\ket{\psi}$ is in the domain of $T^w$, it follows from figure \ref{fig:vector-strip} that the vector-valued function also admits an analytic continuation.}
	\label{fig:weak-vector-strip}
\end{figure}

The above discussion characterizes analytic extensions of unitary flows on vectors.
It will also be important to understand analytic extensions of unitary flows on operators.
\begin{theorem} \label{thm:operator-continuation}
	Let $T$ be a positive, invertible, unbounded operator.
	Let $a$ be a bounded operator.
	If $w$ is in the complex plane and the operator $T^{w} a T^{-w}$ is densely defined and bounded on its domain, then the operator $T^{z} a T^{-z}$ is densely defined and bounded on its domain for every $z$ in the vertical strip between the imaginary axis and w.
	Consequently, each $T^{z} a T^{-z}$ can be closed to a bounded operator.
	The function
	\begin{equation}
		F(z) = \bar{T^{z} a T^{-z}}
	\end{equation}
	is defined on the strip, and is norm analytic in the interior and strongly continuous on the boundary.
	(See figure \ref{fig:operator-strip}.)
	In fact, it is not necessary to check that $T^{w} a T^{-w}$ is bounded on its full domain; it suffices to check that $T^{w} a T^{-w}$ is defined and bounded on a core (definition \ref{def:core}) for the operator $T^{-w}.$
	
	Conversely, if $it \mapsto T^{it} a T^{-it}$ admits a norm-analytic continuation to the strip between the imaginary axis and $w,$ such that this analytic continuation is strongly continuous on the boundary of the strip, then $T^{w} a T^{-w}$ must be densely defined and bounded on its domain, and the analytic continuation is given by $z \mapsto \bar{T^{z}a T^{-z}}.$
\end{theorem}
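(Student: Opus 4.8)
The plan is to turn the operator statements into scalar statements about holomorphic functions in a strip, apply Hadamard's three-lines theorem, and recover the operator-level conclusions from theorem~\ref{thm:vector-continuation}. Write $w = r + iy_{0}$. If $r = 0$ the strip is (a segment of) the imaginary axis, where $T^{it} a T^{-it}$ is everywhere defined, bounded with norm $\lVert a \rVert$, and strongly continuous, so the claim is immediate; if $r < 0$, replacing $T$ by $T^{-1}$ and $w$ by $-w$ leaves $T^{w} a T^{-w}$ and the whole family $\{T^{z} a T^{-z}\}$ unchanged (and the proof below only uses positivity, injectivity, and unitarity of imaginary powers of $T$), so I would assume $r > 0$ and set $S = \{z : 0 \le \Re z \le r\}$. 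Throughout I would work with the dense subspace $D_{0} \subseteq \H$ spanned by the ranges of the spectral projections $P_{[\mu,1/\mu]}$ of $T$, $\mu > 0$: its vectors lie in $\D_{T^{z}}$ for every $z \in \comps$, the functions $z \mapsto T^{\pm z}\xi$ are entire for $\xi \in D_{0}$, and $D_{0}$ is a core for every operator obtained from $T$ by the spectral theorem (theorem~\ref{thm:spectral-theorem}), in particular for $T^{-z}$.

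For $\xi, \eta \in D_{0}$ I would study $f_{\xi,\eta}(z) = \langle T^{\bar z}\xi, \, a T^{-z}\eta \rangle$; because the first slot of the inner product is conjugate-linear this is holomorphic (indeed entire) in $z$, and it is bounded on $S$ since $\lVert T^{\bar z}\xi \rVert = \lVert T^{\Re z}\xi\rVert$ and $\lVert T^{-z}\eta\rVert = \lVert T^{-\Re z}\eta\rVert$ are bounded on $S$ for $\xi, \eta \in D_{0}$. On $\Re z = 0$ one has $|f_{\xi,\eta}| \le \lVert a\rVert\,\lVert\xi\rVert\,\lVert\eta\rVert$ trivially. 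On $\Re z = r$ I would use the hypothesis: if $T^{w} a T^{-w}$ is bounded by $M$ on its domain (equivalently, by the final remark of the theorem, on a core $\mathcal{C}$ for $T^{-w}$), then from $\langle\psi, T^{w} a T^{-w}\phi\rangle = \langle T^{\bar w}\psi,\, a T^{-w}\phi\rangle$ and a graph-norm limiting argument in $\D_{T^{-w}}$ one gets $|\langle T^{\bar w}\psi,\, a T^{-w}\phi\rangle| \le M\,\lVert\psi\rVert\,\lVert\phi\rVert$ for all $\psi \in \D_{T^{\bar w}}$, $\phi \in \D_{T^{-w}}$; then, since $T^{r+is} = T^{i(s-y_{0})} T^{w}$, conjugating by the unitary $T^{i(s-y_{0})}$ reduces the estimate at $z = r + is$ to this bound, giving $|f_{\xi,\eta}(r+is)| \le M\,\lVert\xi\rVert\,\lVert\eta\rVert$. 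Hadamard's three-lines theorem then yields $|f_{\xi,\eta}(z)| \le \lVert a\rVert^{1-\theta} M^{\theta}\,\lVert\xi\rVert\,\lVert\eta\rVert$ on $\Re z = \theta r$, so for each $z \in S$ the sesquilinear form $(\xi,\eta) \mapsto f_{\xi,\eta}(z)$ is bounded and determines a bounded operator $F(z)$ on $\H$ with $\langle\xi, F(z)\eta\rangle = \langle T^{\bar z}\xi,\, a T^{-z}\eta\rangle$ for $\xi \in \D_{T^{\bar z}}$, $\eta \in \D_{T^{-z}}$ (extending by graph-norm density) and $\lVert F(z)\rVert \le \lVert a\rVert^{1-\theta} M^{\theta}$.

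To close the first half of the theorem I would identify $F(z)$ with $\bar{T^{z} a T^{-z}}$. Writing $\mathcal{E}_{u} = \{\psi \in \D_{T^{u}} : a\psi \in \D_{T^{u}}\}$ one has $\D_{T^{u} a T^{-u}} = T^{u}(\mathcal{E}_{u})$, so density of this domain is equivalent to graph-density of $\mathcal{E}_{u}$ in $\D_{T^{u}}$. The hypothesis is exactly that $\mathcal{E}_{w}$ is graph-dense in $\D_{T^{w}}$; since $\D_{T^{w}} \subseteq \D_{T^{z}}$ and hence $\mathcal{E}_{w} \subseteq \mathcal{E}_{z}$ for $z \in S$, since $D_{0} \subseteq \D_{T^{w}}$ is a core for $T^{z}$, and since $\lVert T^{z}\psi\rVert^{2} \le \lVert T^{w}\psi\rVert^{2} + \lVert\psi\rVert^{2}$ for $z$ between $0$ and $w$, the graph-density propagates from $\D_{T^{w}}$ to $\D_{T^{z}}$ and $T^{z} a T^{-z}$ is densely defined; the form bound above shows it is bounded, and for $\phi \in \D_{T^{z}aT^{-z}}$ and $\xi \in \D_{T^{\bar z}}$ one has $\langle\xi, T^{z} a T^{-z}\phi\rangle = \langle T^{\bar z}\xi,\, aT^{-z}\phi\rangle = \langle\xi, F(z)\phi\rangle$, so $T^{z} a T^{-z} \subseteq F(z)$ and therefore $\bar{T^{z} a T^{-z}} = F(z)$ (a closed operator contained in a bounded everywhere-defined one, with dense domain, equals it). Norm analyticity of $F$ in the interior of $S$ follows from weak analyticity of the $f_{\xi,\eta}$ plus uniform boundedness; weak continuity up to the boundary is likewise automatic, and $F(it) = T^{it} a T^{-it}$ is strongly continuous; I would upgrade weak to strong boundary continuity using subharmonicity of $z \mapsto \lVert F(z)\psi\rVert^{2}$ and a Phragm\'{e}n--Lindel\"{o}f argument, or by truncating $T$ to a bounded spectral window, applying the boundary-continuity statement for bounded positive operators recorded earlier, and passing to a limit.

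For the converse, suppose $G : S \to \B(\H)$ is norm holomorphic in the interior, strongly continuous on the boundary, and $G(it) = T^{it} a T^{-it}$. For $\chi, \psi \in D_{0}$ the holomorphic functions $z \mapsto \langle\chi, G(z)\psi\rangle$ and $z \mapsto f_{\chi,\psi}(z)$ agree on $\Re z = 0$, hence on $S$; in particular $\langle\chi, G(w)\psi\rangle = \langle T^{\bar w}\chi,\, aT^{-w}\psi\rangle$. Fixing $\psi \in D_{0}$ and putting $v = a T^{-w}\psi$, the identity $T^{is} v = (T^{is} a T^{-is})(T^{is-w}\psi) = G(is)\, T^{is-w}\psi$ exhibits $z \mapsto \langle\xi, G(z)\, T^{z-w}\psi\rangle$ as an analytic continuation to $S$ of $is \mapsto \langle\xi, T^{is} v\rangle$, for every $\xi \in \H$ (holomorphic inside, continuous on $S$ once one notes $G$ is locally norm-bounded by the uniform boundedness principle); theorem~\ref{thm:vector-continuation} then gives $v \in \D_{T^{w}}$ and $T^{w} a T^{-w}\psi = T^{w} v = G(w)\psi$. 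Since $\psi$ ranges over the core $D_{0}$ of $T^{-w}$, this makes $T^{w} a T^{-w}$ densely defined and bounded, so the first half applies, yielding $\bar{T^{z} a T^{-z}} = F(z)$; and $F = G$ because they agree at the matrix-element level on $D_{0}$, so the given continuation is indeed $z \mapsto \bar{T^{z} a T^{-z}}$. I expect the real work to be at the two places where the hypothesis is used essentially rather than cosmetically: the density of $\D_{T^{z}aT^{-z}}$ for interior $z$, which genuinely fails without the assumption at $z = w$ because $a$ need not map $\D_{T^{z}}$ into itself, and the promotion of weak boundary continuity of $F$ to strong boundary continuity.
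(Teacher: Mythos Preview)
The paper does not prove this theorem. It is stated in section~\ref{sec:analytic-operators} as a background result, without proof; the surrounding discussion points the reader to \cite[sections 9.15--9.20 and 9.24]{struatilua2019lectures} for details on this and the closely related theorem~\ref{thm:vector-continuation}. So there is no in-paper proof to compare against.

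Your approach --- reduce to scalar matrix elements on vectors of bounded spectral support, apply the three-lines theorem to get a uniformly bounded sesquilinear form $F(z)$, then use theorem~\ref{thm:vector-continuation} for the converse --- is exactly the standard textbook route and is what one finds in \cite{struatilua2019lectures}. The two places you flag as the real work are indeed the delicate ones. One remark on the density step: your claim that ``density of $\D_{T^{z}aT^{-z}}$ is equivalent to graph-density of $\mathcal{E}_{z}$ in $\D_{T^{z}}$'' is stronger than what you need and not obviously true as stated (density in $\H$ of $T^{z}(\mathcal{E}_{z})$ is weaker than $\mathcal{E}_{z}$ being a core for $T^{z}$). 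What you actually use, and what is correct, is the one-sided implication: $\mathcal{E}_{w}\subseteq\mathcal{E}_{z}$, and the hypothesis that $T^{w}(\mathcal{E}_{w})$ is dense, together with the factorization $T^{z} = T^{z-w}T^{w}$ on $\mathcal{E}_{w}$ and the fact that $T^{z-w}$ has dense range, give density of $T^{z}(\mathcal{E}_{w})\subseteq \D_{T^{z}aT^{-z}}$. With that adjustment the argument goes through.
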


\begin{figure}
	\centering
	\includegraphics{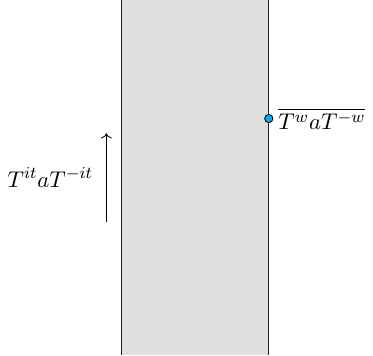}
	\caption{A sketch of the statement of theorem \ref{thm:operator-continuation}.
	Given an unbounded, invertible, positive operator $T$, a bounded operator $a,$ and a complex number $w$ such that $T^w a T^{-w}$ is densely defined and bounded on its domain, the function on the strip between the imaginary axis and $w$ given by $z \mapsto \bar{T^z a T^{-z}}$ is norm analytic in the interior of the strip and strongly continuous on its boundary.
	On the imaginary axis, it takes the values $T^{it} a T^{-it}.$}
	\label{fig:operator-strip}
\end{figure}

\section{Uniqueness of thermal symmetries}
\label{sec:uniqueness}

In a quantum system admitting density matrices, as explained in the introduction, the density matrix $\rho$ is said to be thermal with respect to the Hamiltonian $H$ at inverse temperature $\beta$ if it has the form 
\begin{equation}
	\rho_{\beta} = \frac{e^{-\beta H}}{\tr(e^{-\beta H})} \equiv \frac{e^{-\beta H}}{Z}.
\end{equation}
In order for this density matrix to make sense, the spectrum of $H$ must be bounded below, and must be sufficiently tame so that $e^{- \beta H}$ has finite trace.

The time-evolved two-point function of the thermal state can be written as
\begin{equation} \label{eq:lattice-thermal-2pt}
	\langle e^{i H t} a e^{-i H t} b \rangle_{\beta}
		= \frac{1}{Z} \tr(e^{- (\beta - i t) H} a e^{-i H t} b).
\end{equation}
Kubo, Martin, and Schwinger (KMS) \cite{kubo1957statistical, martin1959theory} observed that this function admits an analytic continuation from $it$ to more general complex $z.$
Intuitively, we would like to simply make the substitution $it \mapsto z$, and write our analytic continuation as
\begin{equation}
	F(z) = \frac{1}{Z} \tr(e^{-(\beta-z) H} a e^{- z H} b).
\end{equation}
If $H$ is bounded, then this is an analytic function of the entire complex plane.
If $H$ is only bounded below, however, then this function is not even well defined for arbitrary $z$.
For $\Re(z) < 0$ the operator $e^{-z H}$ diverges, and for $\Re(z) > \beta$ the operator $e^{-(\beta - z) H}$ diverges.
However, in the vertical strip of the complex plane given by $0 \leq \Re(z) \leq \beta,$ the function $F(z)$ is well defined, and since it is written in terms of exponentials, it is analytic.

The function $F(z)$ therefore furnishes an analytic continuation of the two-point function to the strip of width $\beta.$
Furthermore, the boundary values of this analytic continuation are given by
\begin{align}
	F(it)
		& = \langle e^{i H t} a e^{- i H t} b \rangle_{\beta}
\end{align}
and
\begin{align}
	\begin{split}
	F(\beta + i t)
		& = \frac{1}{Z} \tr(e^{i H t} a e^{- \beta H} e^{- i H t} b) \\
		& = \frac{1}{Z} \tr(e^{- \beta H} b e^{i H t} a e^{- i H t}) \\
		& = \langle b e^{i H t} a e^{-i H t} \rangle_{\beta}.
	\end{split}
\end{align}
See figure \ref{fig:KMS}.

\begin{figure}
	\centering
	\includegraphics{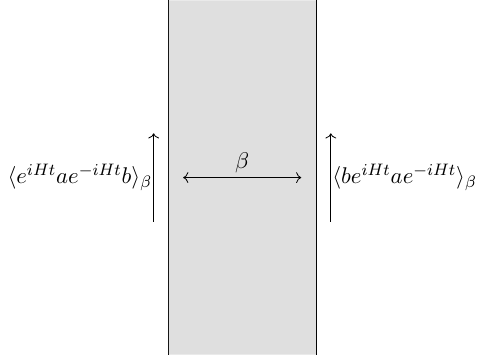}
	\caption{In a thermal state of inverse temperature $\beta,$ the time-evolved two-point function can be analytically continued to a function in a vertical strip of width $\beta.$
	On the right edge of the strip, the function is equal to the ``flipped'' time-evolved two-point function.}
	\label{fig:KMS}
\end{figure}

We would like to use the above observations to produce a definition of thermality that makes sense in settings where there are no density matrices.
To do this, we will characterize thermality of a state $\ket{\Omega}$ within the algebra $\A$ in terms of the analytic structure of the two-point functions $\bra{\Omega} a b \ket{\Omega}$ with $a, b \in \A$.
So far, we have written everything in terms of two-point functions of the form $\tr(\rho_{\beta} a b)$; to generalize, we must rewrite these correlators in terms of a pure state.
In other words, we must purify the Gibbs ensemble.

A convenient and familiar purification for the Gibbs ensemble is provided by the ``thermofield double'' state, which is a special case of the canonical purification \cite{Dutta:2019gen}, itself a special case of the GNS purification (see e.g. \cite[chapter 7]{conway2000course}) for algebraic states.
The thermofield double purification is constructed by doubling the Hilbert space $\H$ to $\H \otimes\H'$ and defining the state $\ket{\text{TFD}_{\beta}} \in \H\otimes \H'$ by
\begin{equation}
	\ket{\text{TFD}_{\beta}}
		= \frac{1}{\sqrt{Z}} \sum_{n} e^{- \beta E_n/2} \ket{E_n}_\H \otimes \ket{E_n'}_{\H'}.
\end{equation}
For operators $a$ and $b$ acting on $\H,$ the two-point function in this pure state is given by
\begin{equation}
	\bra{\text{TFD}_{\beta}} a b \ket{\text{TFD}_{\beta}} = \tr(\rho_{\beta} a b) = \langle a b \rangle_{\beta}.
\end{equation}
It is straightforward to check that $\ket{\text{TFD}_{\beta}}$ is fixed by time evolution with respect to the operator $H - H',$ and that conjugating an operator on $\H$ by $e^{i (H - H') t}$ is the same as conjugating it by $e^{i H t}.$
So the KMS observation can be written in terms of two-sided, pure-state quantities by saying that for operators $a$ and $b$ acting on $\H$, there is a Hamiltonian $H - H'$ acting on $\H \otimes \H',$ and a state $\ket{\text{TFD}_{\beta}}$ in $\H \otimes \H',$ with the following properties:
\begin{itemize}
	\item $\ket{\text{TFD}_{\beta}}$ is a purification of the thermal state on $\H$ at inverse temperature $\beta.$
	\item $e^{i (H - H') t} \ket{\text{TFD}_{\beta}} = \ket{\text{TFD}_{\beta}}.$
	\item For $a \in \B(\H) \otimes 1_{\H'}$, we have $e^{i (H - H') t} a e^{- i (H - H') t} \in \B(\H) \otimes 1_{\H'}$.
	\item For $a, b \in \B(\H) \otimes 1_{\H'},$ the time-evolved two-point function
	\begin{equation}
		\bra{\text{TFD}_{\beta}} e^{i (H - H') t} a e^{- i (H - H') t} b \ket{\text{TFD}_{\beta}}
	\end{equation}
	admits an analytic continuation of KMS form, in that there exists a function $F(z)$ in the strip of width $\beta$ with boundary values
	\begin{equation}
		F(it)
			= \bra{\text{TFD}_{\beta}} e^{i (H - H') t} a e^{- i (H - H') t} b \ket{\text{TFD}_{\beta}}
	\end{equation}
	and
	\begin{equation}
		F(\beta+i t)
			= \bra{\text{TFD}_{\beta}} b e^{i (H - H') t} a e^{- i (H - H') t} \ket{\text{TFD}_{\beta}}
	\end{equation}
\end{itemize}

The above observations motivate the following abstract definition of thermality for quantum systems described by von Neumann algebras.
Note that in the definition given below, the operator that was called $H - H'$ in the lattice setting has been renamed $H$; there is no longer any need to reserve labels for Hamiltonians on individual subsystems, since in the general setting these ``one-sided'' Hamiltonians do not always exist.
\begin{definition}[KMS condition] \label{def:KMS}
	Let $\H$ be a Hilbert space, $\ket{\Omega}$ a state, and $\A$ a von Neumann algebra.
	Let $H$ be a self-adjoint, possibly unbounded operator.
	The state $\ket{\Omega}$ is said to satisfy the \textit{KMS condition} with respect to $\A$ and $H$  at inverse temperature $\beta$ if the following three properties hold.
	\begin{enumerate}[(i)]
		\item
		$H$ generates a symmetry of the state: for every real number $t,$ we have
		\begin{equation}
			e^{- i H t} \ket{\Omega} = \ket{\Omega}.
		\end{equation}
		\item 
		$H$ generates an automorphism of the algebra: for every real number $t$ and every $a \in \A,$ we have
		\begin{equation}
			e^{i H t} a e^{-i H t} \in \A.
		\end{equation}
		\item
		Two-point functions of $\A$ in the state $\ket{\Omega}$ look thermal with respect to the flow generated by $H$: for every $a, b \in \A,$ the function
		\begin{equation}
			F(it) = \bra{\Omega} e^{i H t} a e^{-i H t} b \ket{\Omega}
				= \bra{\Omega} a e^{- i H t} b \ket{\Omega}
		\end{equation}
		admits a bounded analytic continuation to the vertical strip $0 \leq \Re(z) \leq \beta,$ and on the right boundary of this strip the analytic continuation is given by
		\begin{equation}
			F(\beta + i t)
				= \bra{\Omega} b e^{i H t} a e^{- i H t} \ket{\Omega}
				= \bra{\Omega} b e^{i H t} a \ket{\Omega}.
		\end{equation}
	\end{enumerate}
\end{definition}

The point of this definition is that it has stripped away everything having to do with density matrices and tensor product decompositions.
It lets us talk about thermal physics in systems without density matrices, such as quantum field theories, by expressing thermality as a property of the analytic structure of two-point functions.
I find it helpful to think in terms of the following slogan.
\begin{quote}
	Thermality on the lattice is always defined with respect to some Hamiltonian, and therefore with respect to some time-evolution map.
	The KMS condition puts the time-evolution map front and center: it requires that two-point functions evolved with respect to a given time-evolution map have the same structure as the two-point functions of a Gibbs state in a lattice system evolved with respect to the lattice Hamiltonian.
\end{quote}
Given a generic state $\ket{\Omega}$ and a generic von Neumann algebra $\A,$ two natural questions arise:
\begin{itemize}
	\item Does there exist a Hamiltonian for which the state $\ket{\Omega}$ looks thermal in the system $\A$?
	\item Could there exist multiple different Hamiltonians with this property?
\end{itemize}
The first question is much harder to answer than the second.
Using Tomita-Takesaki theory, we will ultimately see that the answer is \textit{yes}, at least when the state $\ket{\Omega}$ is cyclic and separating (defined in section \ref{sec:cyclic-separating}).
But to motivate the introduction of Tomita-Takesaki theory, we will begin by answering the second question.
We will show that if $\ket{\Omega}$ and $\A$ satisfy the KMS condition with respect to \textit{some} Hamiltonian $H$, then that Hamiltonian \textit{must} be the Tomita-Takesaki modular Hamiltonian $K = - \log \Delta.$
This motivates the introduction of the modular operator directly from thermal physics, rather than introducing it as an abstract mathematical tool.

\begin{theorem}
	Let $\A$ be a von Neumann algebra, and let $\ket{\Omega}$ be a cyclic and separating state for $\A.$
	Suppose that $H$ is a self-adjoint operator such that $\A$, $\ket{\Omega}$, and $H$ satisfy the KMS condition with $\beta=1.$
	Then $e^{-H/2}$ must be the positive part of the closed, antilinear operator $S$ that acts on $\A \ket{\Omega}$ as
	\begin{equation}
		S (a \ket{\Omega})
			= a^{\dagger} \ket{\Omega}.
	\end{equation}
	Consequently, $H$ is given by $- \log \Delta,$ where $\Delta = S^{\dagger} S$ is the Tomita-Takesaki modular operator.
\end{theorem}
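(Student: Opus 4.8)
The plan is to show that $e^{-H/2}$ equals $\Delta^{1/2}$, where $\Delta = S^\dagger S$ and $S$ denotes the closed operator in the statement. Granting this, ``the positive part of $S$'' is $\Delta^{1/2}$: by the (antilinear) polar decomposition proved in the appendix, $S = J\Delta^{1/2}$ with $J$ a partial isometry, and since $\ket{\Omega}$ is separating $S$ is injective with dense range, so $J$ is antiunitary and the positive factor is indeed $\Delta^{1/2}$; and applying $x\mapsto -2\log x$ to $e^{-H/2}=\Delta^{1/2}$ via the spectral theorem (Theorem~\ref{thm:spectral-theorem}), legitimate since both operators are positive and invertible, gives $H = -\log\Delta$. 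Two preliminary observations: property (i) of the KMS condition gives $e^{-iHt}\ket{\Omega}=\ket{\Omega}$, hence (differentiating) $H\ket{\Omega}=0$, so $\ket{\Omega}$ lies in $\operatorname{dom}(e^{-zH})$ and is fixed by it for every complex $z$; and properties (i)--(ii) give, for $a\in\A$, that $e^{-iHt}Se^{iHt}$ and $S$ agree on $\A\ket{\Omega}$ (because $e^{iHt}a\ket{\Omega}=(e^{iHt}ae^{-iHt})\ket{\Omega}\in\A\ket{\Omega}$, whose image under $S$ is $(e^{iHt}a^\dagger e^{-iHt})\ket{\Omega}$, and $e^{-iHt}$ returns $a^\dagger\ket{\Omega}$), so $S$ --- and hence $S^\dagger$, $\Delta$, $J$, and every $\Delta^{is}$ --- commutes with the group $e^{iHt}$. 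This makes precise the sense in which property (iii) carries the content: without it, $H$ could be shifted by any self-adjoint operator commuting with all the modular data and killing $\ket{\Omega}$.

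The main step is to prove that $a\ket{\Omega}\in\operatorname{dom}(e^{-H/2})$ for every $a\in\A$, with $e^{-H/2}(a\ket{\Omega}) = \Delta^{1/2}(a\ket{\Omega}) = Ja^\dagger\ket{\Omega}$. For $a,b\in\A$, property (iii) supplies a bounded holomorphic function $G_{a,b}$ on $0\le\Re z\le 1$ with $G_{a,b}(it)=\bra{\Omega}ae^{-iHt}b\ket{\Omega}$ and $G_{a,b}(1+it)=\bra{\Omega}be^{iHt}a\ket{\Omega}$; its boundary values are bounded by $\lVert a^\dagger\Omega\rVert\,\lVert b\Omega\rVert$ and $\lVert b^\dagger\Omega\rVert\,\lVert a\Omega\rVert$, so Hadamard's three-lines theorem bounds $G_{a,b}$ throughout the strip. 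Feeding this into the analytic-continuation characterizations of domains of unbounded powers (Theorems~\ref{thm:vector-continuation} and~\ref{thm:operator-continuation}), one shows that the orbit $it\mapsto e^{-iHt}(a\ket{\Omega}) = (e^{-iHt}ae^{iHt})\ket{\Omega}$ extends to an $\H$-valued holomorphic function on the strip of width $1/2$ --- which is exactly the assertion $a\ket{\Omega}\in\operatorname{dom}(e^{-H/2})$. Once the orbit is known to extend, testing against $c\ket{\Omega}$ for $c\in\A$ identifies $\bra{c\Omega}e^{-H/2}\ket{a\Omega}$ with the value $G_{c^\dagger,a}(1/2)$ of the continuation at the midpoint of the strip; the same computation, run instead with the modular data and the commutation relations above, shows $\bra{c\Omega}\Delta^{1/2}\ket{a\Omega}$ equals the same number, and since $\A\ket{\Omega}$ is dense we conclude $e^{-H/2}(a\ket{\Omega})=\Delta^{1/2}(a\ket{\Omega})$.

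With the main step in hand the theorem is immediate. The subspace $\A\ket{\Omega}$ is a core for $\Delta^{1/2}$: it is a core for $S$ by construction, and $\Delta^{1/2}=J^{-1}S$ with $J$ antiunitary, so the graph of $\Delta^{1/2}|_{\A\Omega}$ is the image of that of $S|_{\A\Omega}$ under the isometry $1\oplus J$ of $\H\oplus\H$, and closures correspond. Since $e^{-H/2}$ is closed and agrees with $\Delta^{1/2}$ on this core, $e^{-H/2}\supseteq\Delta^{1/2}$; and since both operators are self-adjoint, maximality of self-adjoint operators among symmetric ones --- from $A\subseteq B$ with $A,B$ self-adjoint, Proposition~\ref{prop:adjoint-properties} gives $B=B^\dagger\subseteq A^\dagger=A$ --- forces $e^{-H/2}=\Delta^{1/2}$, hence $H=-\log\Delta$.

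I expect the hard part to be the domain membership $a\ket{\Omega}\in\operatorname{dom}(e^{-H/2})$. The clean criterion of Theorem~\ref{thm:vector-continuation} asks for an analytic continuation of $it\mapsto\bra{\xi}e^{-iHt}\ket{a\Omega}$ for \emph{every} $\ket{\xi}\in\H$, but property (iii) only hands this to us for $\ket{\xi}$ in the dense subspace $\A\ket{\Omega}$, and the three-lines bounds on those continuations control them by $\lVert S\xi\rVert$ rather than by $\lVert\xi\rVert$ --- a direct symptom of the unboundedness of $S$ --- so getting from a dense set of test vectors to all of $\H$ (equivalently, arranging the operator orbit $e^{-iHt}ae^{iHt}$ to extend in a form strong enough to invoke Theorem~\ref{thm:operator-continuation}, transferring between operators and vectors via $e^{\pm zH}\ket{\Omega}=\ket{\Omega}$) is the delicate point where the proof really has to work.
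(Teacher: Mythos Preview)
Your overall architecture matches the paper's exactly: use the KMS analytic continuation together with Theorem~\ref{thm:vector-continuation} to get domain membership, match on a dense set, and finish with the self-adjointness trick $A\subseteq B\Rightarrow A=B$. The preliminary commutation observations are correct and useful.

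The gap you flag in your last paragraph is real, and it is not resolved by your proposal. You work on $\A\ket{\Omega}=\operatorname{dom}(\Delta^{1/2})$ and try to show $a\ket{\Omega}\in\operatorname{dom}(e^{-H/2})$. As you say, Theorem~\ref{thm:vector-continuation} requires the continuation for \emph{every} $\ket{\xi}$, but KMS only gives it for $\ket{\xi}=c\ket{\Omega}$, and the three-lines bound on $G_{c^\dagger,a}$ is controlled by $\lVert c^\dagger\Omega\rVert=\lVert S\xi\rVert$ on the right edge, so passing to general $\ket{\xi}$ by density fails. Your suggestion to route through Theorem~\ref{thm:operator-continuation} does not escape this: that theorem needs $e^{-H/2}ae^{H/2}$ bounded on a core, which is the same unproved statement in operator form.

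The paper's choice to work with $\Delta$ rather than $\Delta^{1/2}$ is exactly what closes this gap, and it is worth seeing why. Take $\ket{\psi}=T\ket{\Omega}\in\operatorname{dom}(\Delta)$ (with $T$ affiliated). For $\ket{\xi}=a^\dagger\ket{\Omega}$ the KMS function has right-boundary value
\[
F_\xi(1+it)=\langle T^\dagger\Omega\,|\,a_t\Omega\rangle
=\langle \Delta^{1/2}a_t^\dagger\Omega\,|\,\Delta^{1/2}T\Omega\rangle
=\langle a_t^\dagger\Omega\,|\,\Delta T\Omega\rangle
=\langle \xi\,|\,e^{-iHt}\Delta T\Omega\rangle,
\]
where the second equality is the modular swap \eqref{eq:exact-K-swap}, the third uses $T\ket{\Omega}\in\operatorname{dom}(\Delta)$ to absorb both factors of $\Delta^{1/2}$ onto the right, and the last uses $a_t^\dagger\ket{\Omega}=e^{iHt}\ket{\xi}$. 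Now both boundary values are bounded by $\lVert\xi\rVert$ times a fixed constant, so for $\xi_n\to\xi$ in $\H$ the functions $F_{\xi_n}$ are uniformly Cauchy on the closed strip (three-lines applied to their differences), and the limit furnishes the continuation for arbitrary $\ket{\xi}$. This is the content of the paper's terse ``taking limits using cyclicity.'' Your version, with only $a\ket{\Omega}\in\operatorname{dom}(\Delta^{1/2})$, leaves one $\Delta^{1/2}$ stranded on the $\xi$ side, producing exactly the $\lVert S\xi\rVert$ obstruction you identified. The fix is simply to prove $\Delta\subseteq e^{-H}$ on $\operatorname{dom}(\Delta)$ first; $e^{-H/2}=\Delta^{1/2}$ then follows by spectral calculus.

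A smaller point: the step ``the same computation, run instead with the modular data, shows $\langle c\Omega|\Delta^{1/2}|a\Omega\rangle$ equals $G_{c^\dagger,a}(1/2)$'' is unclear as written---there is no KMS strip for $\Delta$ available before Tomita's theorem. Once domain membership is established the matching is cleaner at $z=1$, where KMS gives $\langle a^\dagger\Omega|c^\dagger\Omega\rangle$ and the modular swap identity gives the same thing directly.
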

\begin{proof}
	Fix two operators $a, b \in \A.$
	By the KMS condition, there exists a function $F(z),$ defined in the vertical strip $0 \leq \Re(z) \leq 1,$ analytic in the interior of the strip and continuous on its boundary, and with boundary values
	\begin{align}
		F(it)
			& = \bra{\Omega} a e^{- i H t} b \ket{\Omega}, \label{eq:left-boundary-KMS} \\
		F(1 + i t)
			& = \bra{\Omega} b e^{i H t} a \ket{\Omega}.
	\end{align}
	See figure \ref{fig:concrete-KMS}.
	
	\begin{figure}
		\centering
		\includegraphics{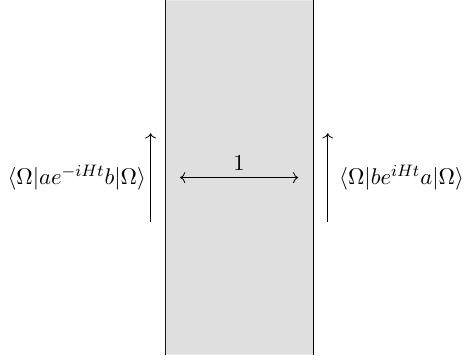}
		\caption{Given a vector $\ket{\Omega}$ and an algebra $\A$ satisfying the KMS condition for the Hamiltonian $H$, the function  $\bra{\Omega} a e^{-i H t} b \ket{\Omega}$ can be analytically continued to a vertical strip of width $1,$ with the analytic continuation being equal to $\bra{\Omega} b e^{i H t} a \ket{\Omega}$ on the right edge of the strip.}
		\label{fig:concrete-KMS}
	\end{figure}
	
	Intuitively, we would expect the function $F(z)$ to be obtained by substituting $it \mapsto z$ in equation \eqref{eq:left-boundary-KMS}; that is, we would like to have a formula like
	\begin{equation}
		F(z)
			= \bra{\Omega} a e^{- z H} b \ket{\Omega}.
	\end{equation}
	Unfortunately, this expression does not necessarily make sense; the Hamiltonian $H$ is not assumed to be bounded below, so $e^{- z H}$ is an unbounded operator, and there is no guarantee that $b \ket{\Omega}$ is in the domain of $e^{-z H}.$
	If we take this expression as a guiding principle, however, and evaluate it at $z=1,$ then by the KMS condition we expect there to be some sense in which the operator $H$ satisfies
	\begin{equation} \label{eq:KMS-swap}
		\langle a^{\dagger} \Omega| e^{- H} | b \Omega \rangle
			= \langle  b^{\dagger} \Omega | a \Omega \rangle. 
	\end{equation}
	Again, this expression is not strictly correct, but it gives us an intuitive understanding of why the operator $H$ is uniquely determined by the KMS condition --- cyclicity of the vector $\ket{\Omega}$ implies that vectors of the form $a \ket{\Omega}$ are dense in $\H$, so equation \eqref{eq:KMS-swap} provides a complete set of matrix elements for the operator $e^{- H}.$
	
	Now, let us take our ``morally correct'' expression \eqref{eq:KMS-swap} and rewrite it as
	\begin{equation} \label{eq:antilinear-H}
		\langle e^{-H/2} a^{\dagger} \Omega | e^{-H/2} b \Omega \rangle
			= \langle b^{\dagger} \Omega | a \Omega \rangle.
	\end{equation}
	This expression is interesting because it is reminiscent of the identities satisfied by antilinear operators.
	An antilinear operator can be thought of as a map from $\H$ to the complex conjugate space $\bar{\H},$ which is a Hilbert space with the same vectors as $\H$ but with inner product
	\begin{equation}
		\braket{\psi}{\xi}_{\bar{\H}}
			= \braket{\xi}{\psi}_{\H}.
	\end{equation}
	If $e^{-H/2}$ were an antilinear operator mapping $a^{\dagger} \ket{\Omega}$ to $a \ket{\Omega}$ and $b \ket{\Omega}$ to $b^{\dagger} \ket{\Omega},$ then we would have
	\begin{equation}
		\langle e^{-H/2} a^{\dagger} \Omega | e^{-H/2} b \Omega \rangle_{\bar{\H}}
			= \langle a \Omega | b^{\dagger} \Omega \rangle_{\bar{\H}}
			= \langle b^{\dagger} \Omega | a \Omega \rangle_{\H},
	\end{equation}
	which is exactly equation \eqref{eq:antilinear-H}.
	
	Now, while $e^{-H/2}$ is not an antilinear operator, the above considerations suggest that it is worthwhile to define and study an antilinear operator that maps $a^{\dagger} \ket{\Omega}$ to $a \ket{\Omega}$ and $b \ket{\Omega}$ to $b^{\dagger} \ket{\Omega}.$
	We define the antilinear operator $S_0$ on the domain $\D_{S_0} = \A \ket{\Omega}$ by
	\begin{equation}
		S_0 x \ket{\Omega} = x^{\dagger} \ket{\Omega}, \quad x \in \A.
	\end{equation}
	Because $\ket{\Omega}$ is cyclic, the operator $S_0$ is densely defined, but generally unbounded.
	One can show (see appendix \ref{app:tomita}) that because $\ket{\Omega}$ is cyclic and separating, the operator $S_0$ has the following properties.
	\begin{itemize}
		\item $S_0$ is preclosed.
		Its closure is denoted $S,$ and is called the \textbf{Tomita operator}.
		\item Denoting the polar decomposition of $S$ by
		\begin{equation}
			S
				= J \Delta^{1/2},
		\end{equation}
		the antilinear partial isometry $J$ is called the \textbf{modular conjugation} and the operator $\Delta$ is called the \textbf{modular operator}.
		The modular operator has trivial kernel, and is therefore invertible.
		The operator $K = - \log \Delta$ is called the \textbf{modular Hamiltonian}.
		\item The antilinear partial isometry $J$ is in fact an antiunitary operator, and satisfies $J^2 = 1,$ hence $J = J^{\dagger}.$
		\item
		The domain of $S$, which is the same as the domain of $\Delta^{1/2}$, consists of all vectors of the form $T \ket{\Omega}$ such that either (i) $T$ is in $\A,$ or (ii) $T$ is affiliated to $\A$ and $\ket{\Omega}$ is in the domain of both $T$ and $T^{\dagger}.$
		In either case, we have $S T \ket{\Omega} = T^{\dagger} \ket{\Omega}.$
	\end{itemize}
	Now, for every $a, b \in \A,$ the vectors $a^{\dagger} \ket{\Omega}$ and $b \ket{\Omega}$ are in the domain of $\Delta^{1/2} = e^{-K/2}.$
	So the operator $K$ satisfies
	\begin{align} \label{eq:exact-K-swap}
		\begin{split}
		\langle e^{-K/2} a^{\dagger} \Omega | e^{-K/2} b \Omega \rangle
			& = \langle J^2 e^{-K/2} a^{\dagger} \Omega | e^{-K/2} b \Omega \rangle \\
			& = \langle J e^{-K/2} b \Omega | J e^{-K/2} a^{\dagger} \Omega \rangle \\
			& = \langle S b \Omega | S a^{\dagger} \Omega \rangle \\
			& = \langle b^{\dagger} \Omega | a \Omega \rangle.
		\end{split}
	\end{align}
	This expression is exactly correct, and matches the ``morally correct'' expression \eqref{eq:antilinear-H} for the operator $H.$
	We will now use this observation to show $\Delta=e^{-H},$ hence $K=H.$
	
	Note first that both $\Delta$ and $e^{-H}$ are self-adjoint operators.
	It therefore suffices to show that $e^{-H}$ is an extension of $\Delta$, since the inclusion
	\begin{equation}
		\Delta \subseteq e^{-H}
	\end{equation}
	implies\footnote{Here we have used the implication $T \subseteq S \Rightarrow S^{\dagger} \subseteq T^{\dagger}$, which was stated in proposition \ref{prop:adjoint-properties}.}
	\begin{equation}
		e^{-H} = (e^{-H})^{\dagger} \subseteq \Delta^{\dagger} = \Delta,
	\end{equation}
	and hence $\Delta = e^{-H}.$
	So at a concrete level, the remaining proof reduces to showing that for every vector $\ket{\psi}$ in the domain of $\Delta,$ the vector $\ket{\psi}$ is also in the domain of $e^{-H},$ and we have $\Delta \ket{\psi} = e^{-H} \ket{\psi}.$
	
	Let us consider an arbitrary vector in the domain of $\Delta$.
	As explained in section \ref{sec:analytic-operators}, any vector in the domain of $\Delta$ is also in the domain of $\Delta^{1/2}.$
	As explained in the bulleted list above, any vector in the domain of $\Delta^{1/2}$ may be written as $T \ket{\Omega},$ where $T$ is affiliated to $\A$ and where $\ket{\Omega}$ is in the domain of both $T$ and $T^{\dagger}.$
	We aim to show that $T \ket{\Omega}$ is in the domain of $e^{-H}.$
	By theorem \ref{thm:vector-continuation}, this is equivalent to showing that for every $\ket{\xi} \in \H,$ the function
	\begin{equation}
		it \mapsto \langle \xi| e^{- i H t} |T \Omega\rangle
	\end{equation}
	admits an analytic continuation to the strip $0 \leq \Re(z) \leq 1.$
	If $\ket{\xi}$ is of the form $a^{\dagger} \ket{\Omega}$ for $a \in \A,$ then this analytic continuation exists by the KMS condition.\footnote{Technically we must show that the KMS condition holds for affiliated operators, but this can be shown to follow from the KMS condition for bounded operators via corollary \ref{cor:neumann-approximation}.}
	The general case follows by taking limits using cyclicity of $\ket{\Omega}.$
	So $T \ket{\Omega}$ is in the domain of $e^{-H},$ and the action of $e^{-H}$ on that vector is given (via the KMS condition and theorem \ref{thm:vector-continuation}) by the formula
	\begin{equation}
		\langle a^{\dagger} \Omega| e^{- H} |T \Omega\rangle
			= \langle T^{\dagger} \Omega | a \Omega \rangle.
	\end{equation}
	But we also have
	\begin{equation}
		\langle a^{\dagger} \Omega | \Delta | T \Omega \rangle
			= \langle \Delta^{1/2} a^{\dagger} \Omega | \Delta^{1/2} T \Omega \rangle
			= \langle T^{\dagger} \Omega | a \Omega \rangle.
	\end{equation}
	Since vectors of the form $a^{\dagger}\ket{\Omega}$ are dense in Hilbert space,
	These two identities give the vector equation
	\begin{equation}
		e^{-H} |T \Omega \rangle
			= \Delta |T \Omega \rangle.
	\end{equation}
	So every vector in the domain of $\Delta$ is in the domain of $e^{-H},$ and the actions of $\Delta$ and $e^{-H}$ agree on such vectors.
	As explained above, this completes the proof of $\Delta = e^{-H}.$
\end{proof}

So far, we have shown that if there exists a Hamiltonian $H$ satisfying the KMS condition (definition \ref{def:KMS}), then it must be the modular Hamiltonian $K$.
It is important to note that this does not tell us whether or not the modular Hamiltonian \textit{does} satisfy the KMS condition; all we know so far is that the modular operator is the only operator that could work in principle.
It is easy to see from the identity $\Delta = S^{\dagger} S$ that we have $\Delta \ket{\Omega} = \ket{\Omega},$ so the modular Hamiltonian certainly satisfies property (i) of definition \ref{def:KMS}.
The second property is the hard one to show --- we must show that for $a \in \A,$ we have
\begin{equation}
	e^{i K t} a e^{- i K t} = \Delta^{-it} a \Delta^{it} \in \A.
\end{equation}
This is Tomita's theorem; the proof is quite involved, and is the subject of the next section.
To convince ourselves that it is actually worthwhile to prove this theorem, however, let us show that once we have proved Tomita's theorem, it will immediately follow that $K$ satisfies condition (iii) of the KMS condition, and therefore does indeed provide a completely general, unique, thermal time-evolution map for an out-of-equilibrium state.

\begin{theorem}[Modular KMS]
	Let $\H$ be a Hilbert space, $\A$ a von Neumann algebra, $\ket{\Omega}$ a cyclic and separating state, $\Delta$ the modular operator, and $K = - \log \Delta$ the modular Hamiltonian.
	Suppose Tomita's theorem is true, so that we have
	\begin{equation}
		e^{i K t} a e^{-i K t} \in \A, \quad a \in \A.
	\end{equation}
	Then $K$ satisfies the third condition of the KMS condition.
	I.e., for every $a, b \in \A$, the function
	\begin{equation}
		F(it) = \bra{\Omega} a e^{-i K t} b \ket{\Omega}
	\end{equation}
	admits a bounded analytic continuation to the vertical strip $0 \leq \Re(z) \leq 1,$ and on the right boundary of this strip the analytic continuation is given by
	\begin{equation}
		F(1 + i t) = \bra{\Omega} b e^{i K t} a \ket{\Omega}.
	\end{equation}
\end{theorem}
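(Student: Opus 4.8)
The plan is to make rigorous the same heuristic that motivates the KMS condition itself. Since $e^{-iKt} = \Delta^{it}$, the function $F(it) = \bra{\Omega} a\, \Delta^{it} b\ket{\Omega}$ ``wants'' to be continued as $z \mapsto \bra{\Omega} a\, \Delta^{z} b\ket{\Omega}$, but—exactly as in the proof of the KMS uniqueness theorem—$\Delta^{z}$ is unbounded, and $b\ket{\Omega}$ need not lie in its domain once $\Re z > 1/2$. The fix is to split the power of $\Delta$ symmetrically across the inner product. Both $a^{\dagger}\ket{\Omega}$ and $b\ket{\Omega}$ are of the form $T\ket{\Omega}$ with $T \in \A$ and hence lie in $\D_{\Delta^{1/2}}$ (see the bulleted list in the proof of the preceding theorem), and any vector in $\D_{\Delta^{1/2}}$ lies in $\D_{\Delta^{w}}$ for every $0 \le \Re w \le 1/2$ (the strip-domain fact of section~\ref{sec:analytic-operators}; figure~\ref{fig:vector-strip}). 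So on the strip $0 \le \Re z \le 1$, where $\Re(z/2) \in [0,1/2]$, I would set
\begin{equation}
	G(z) = \langle \Delta^{\bar{z}/2}\, a^{\dagger}\Omega \mid \Delta^{z/2}\, b\Omega\rangle.
\end{equation}

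Next I would verify the regularity the KMS condition demands of $G$. By figure~\ref{fig:vector-strip}, $z \mapsto \Delta^{z/2} b\Omega$ is holomorphic on the open strip $0 < \Re z < 1$ and continuous on its closure, and $z \mapsto \Delta^{\bar{z}/2} a^{\dagger}\Omega$ is the corresponding antiholomorphic map; since the inner product is conjugate-linear in its first slot, pairing an antiholomorphic vector-valued function against a holomorphic one yields a holomorphic scalar function, continuous up to the boundary. For boundedness on the strip I would combine Cauchy--Schwarz with the observation that, for $\psi \in \D_{\Delta^{1/2}}$, the spectral theorem gives $\lVert \Delta^{s}\psi\rVert^{2} = \int \lambda^{2s}\, d\mu_{\psi,\psi}(\lambda)$, which is log-convex in $s$ on $[0,1/2]$ and finite at both endpoints, hence bounded there by $\max(\lVert\psi\rVert^{2}, \lVert\Delta^{1/2}\psi\rVert^{2})$; applied to $a^{\dagger}\Omega$ and $b\Omega$ with $s = \Re z / 2$, this bounds $\lvert G(z)\rvert$ uniformly on the strip.

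It remains to identify the two boundary values, and the right edge $\Re z = 1$ is the step I expect to be the main obstacle. On the left edge, unitarity of $\Delta^{it/2}$ immediately gives $G(it) = \langle a^{\dagger}\Omega \mid \Delta^{it} b\Omega\rangle = \bra{\Omega} a\, e^{-iKt} b\ket{\Omega} = F(it)$, as required. On the right edge, I would use $\Delta^{1/2} b\Omega = J b^{\dagger}\Omega$ and $\Delta^{1/2} a^{\dagger}\Omega = J a\Omega$ (immediate from $S\, T\Omega = T^{\dagger}\Omega$, the polar decomposition $S = J\Delta^{1/2}$, and $J^{2} = 1$) to write
\begin{equation}
	G(1 + it) = \langle \Delta^{-it/2}\, J a\Omega \mid \Delta^{it/2}\, J b^{\dagger}\Omega\rangle = \langle J a\Omega \mid \Delta^{it}\, J b^{\dagger}\Omega\rangle,
\end{equation}
and then invoke the commutation relation $J\Delta^{it} = \Delta^{it} J$—which follows from $S$ being an involution, since $S = J\Delta^{1/2} = S^{-1} = \Delta^{-1/2} J$ forces $J\Delta^{1/2} J = \Delta^{-1/2}$ and hence $J\Delta^{it} J = \Delta^{it}$ by functional calculus—together with the antiunitarity of $J$, to reduce the right-hand side to $\langle \Delta^{it} b^{\dagger}\Omega \mid a\Omega\rangle = \bra{\Omega} b\, \Delta^{-it} a\ket{\Omega} = \bra{\Omega} b\, e^{iKt} a\ket{\Omega} = F(1 + it)$. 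Combined with the regularity established above, this exhibits $G$ as the required bounded analytic continuation with the correct boundary data, completing the proof; note that the analytic content is dictated entirely by the domains of the powers of $\Delta$ and the elementary algebra of $S$, $J$, and $\Delta$, with Tomita's theorem entering only through the identity $e^{-iKt} b e^{iKt} \in \A$ that lets one read $F(it)$ as a two-point function of operators in $\A$.
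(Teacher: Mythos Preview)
Your argument is correct and uses exactly the same analytic continuation as the paper, namely $G(z)=\langle \Delta^{\bar z/2}a^{\dagger}\Omega\,|\,\Delta^{z/2}b\Omega\rangle$. The only difference is how you handle the right boundary. The paper invokes Tomita's theorem to place $\Delta^{\mp it/2}a^{\dagger}\Delta^{\pm it/2}$ and $\Delta^{\pm it/2}b\Delta^{\mp it/2}$ back in $\A$, and then applies the swap identity $\langle\Delta^{1/2}x^{\dagger}\Omega\,|\,\Delta^{1/2}y\Omega\rangle=\langle y^{\dagger}\Omega\,|\,x\Omega\rangle$ for $x,y\in\A$. You instead compute directly using $\Delta^{1/2}x\Omega=Jx^{\dagger}\Omega$, the commutation $J\Delta^{it}J=\Delta^{it}$ (which does follow from $J\Delta^{1/2}J=\Delta^{-1/2}$ via functional calculus, since antiunitary conjugation sends $f(\Delta)$ to $\bar f(\Delta^{-1})$), and antiunitarity of $J$. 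This is a mild sharpening: your right-boundary computation never uses the hypothesis that $\Delta^{-it}\A\Delta^{it}\subseteq\A$, so you have actually shown that condition (iii) of the KMS definition holds for $K$ \emph{without} assuming Tomita's theorem. Your closing sentence slightly undersells this; Tomita's theorem is not needed to make sense of $F(it)$ as a two-point function (the expression $\langle\Omega|a\,\Delta^{it}b|\Omega\rangle$ is well defined regardless), but rather is exactly the content of condition (ii), which is the only place the hypothesis is genuinely required.
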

\begin{proof}
	According to the properties of the modular operator described in appendix \ref{app:tomita}, the states $b \ket{\Omega}$ and $a^{\dagger} \ket{\Omega}$ are in the domain of the operator $\Delta^{1/2} = e^{-K/2}.$
	So, as explained in section \ref{sec:analytic-operators} around figure \ref{fig:vector-strip}, these states are in the domain of $\Delta^w$ for every $0 \leq \Re(w) \leq 1/2,$ and the functions
	\begin{equation}
		w \mapsto \Delta^{w} b\ket{\Omega}
	\end{equation} 
	and
	\begin{equation}
		w \mapsto \Delta^{w} a^{\dagger} \ket{\Omega}
	\end{equation}
	are bounded and analytic in this strip.
	Consequently, the function
	\begin{equation}
		w \mapsto \langle \Delta^{\bar{w}} a^{\dagger} \Omega | \Delta^{w} b \Omega \rangle
	\end{equation}
	is bounded and analytic in the strip $0 \leq \Re(w) \leq 1/2,$ and the function
	\begin{equation}
		F(z)
			= \langle \Delta^{\bar{z}/2} a^{\dagger} \Omega | \Delta^{z/2} b \Omega \rangle
	\end{equation}
	is bounded and analytic in the strip $0 \leq z \leq 1.$
	
	On the left side of the strip, we have
	\begin{equation}
		F(it) = \langle \Delta^{-it/2} a^{\dagger} \Omega| \Delta^{it/2} b \Omega \rangle
		= \langle \Omega | a \Delta^{it} b | \Omega \rangle
		= \langle \Omega | a e^{- i K t} b | \Omega \rangle.
	\end{equation}
	So this function has the correct KMS boundary value on the left side of the strip.
	On the right side of the strip, we have
	\begin{align}
		\begin{split}
		F(1+it)
			& = \langle \Delta^{1/2} \Delta^{-it/2} a^{\dagger} \Omega | \Delta^{1/2} \Delta^{it/2} b \Omega \rangle \\
			& = \langle \Delta^{1/2} (\Delta^{-it/2} a^{\dagger} \Delta^{it/2}) \Omega | \Delta^{1/2} (\Delta^{it/2} b \Delta^{-it/2}) \Omega \rangle.
		\end{split}
	\end{align}
	Since we are assuming that Tomita's theorem holds, we have
	\begin{equation}
		\Delta^{-it/2} a^{\dagger} \Delta^{it/2} \in \A
	\end{equation}
	and
	\begin{equation}
		\Delta^{it/2} b \Delta^{-it/2} \in \A.
	\end{equation}
	So we may use equation \eqref{eq:exact-K-swap} to obtain
	\begin{align}
		\begin{split}
			F(1+it)
			& = \langle (\Delta^{it/2} b^{\dagger} \Delta^{-it/2}) \Omega | (\Delta^{-it/2} a \Delta^{it/2}) \Omega \rangle \\
			& = \langle b^{\dagger} \Omega | \Delta^{-it} | a \Omega \rangle \\
			& = \langle \Omega | b e^{i K t} a | \Omega \rangle.
		\end{split}
	\end{align}
	So the function $F(z)$ provides an analytic continuation of the two-point function satisfying the KMS condition.
\end{proof}

\section{Tomita's theorem, tidy operators, and the existence of thermal symmetries} \label{sec:tomitas-theorem}

Hopefully by now it is clear why we should care so much about proving Tomita's theorem.
In the previous section, we saw not only that the modular Hamiltonian is the unique Hamiltonian that \textit{could} provide a general notion of thermal time, but also that Tomita's theorem is the only obstacle to guaranteeing that it \textit{does}.

This section presents a proof of Tomita's theorem.
The idea of the proof is to study analytic continuations of maps like
\begin{equation} \label{eq:operator-flow}
	it \mapsto \Delta^{-it} a \Delta^{it}
\end{equation}
for $a \in \A.$
Since analytic functions are highly constrained, we will be able to use analyticity to show explicitly that all commutators of the form $[\Delta^{-it} a \Delta^{it}, b']$ vanish when $b'$ is an operator in the commutant.
By the bicommutant theorem (cf. section \ref{subsec:operator-topologies}), this will imply $a \in \A.$

The structure of analytic continuations of maps like \eqref{eq:operator-flow} is described by theorem \ref{thm:operator-continuation}.
In particular, that theorem implies that the map in equation \eqref{eq:operator-flow} admits an analytic extension to the entire complex plane if and only if, for every integer $n$, the operator $\Delta^{-n} a \Delta^n$ is densely defined and bounded on its domain.
This is a big demand, and in fact we should not expect it to be true for arbitrary $a \in \A$ except in very special situations where the modular operator $\Delta$ and its inverse $\Delta^{-1}$ are bounded.
So the way we will actually proceed is by constructing some dense subspace of $\A$ for which each $\Delta^{-n} a \Delta^n$ is bounded, and in fact for which the norm of $\Delta^{-n} a \Delta^n$ is bounded at infinity by an exponential function of $n.$
I will call this the \textbf{tidy subspace} of $\A$, denoted $\A_{\text{tidy}}$, and its members will be called \textbf{tidy operators}.
I will prove Tomita's theorem for tidy operators by studying the operators $\Delta^{-n} a \Delta^{n}$ and applying Carlson's theorem, then obtain the general result by continuity.

The idea behind constructing tidy operators is that they should be operators for which the modular operator and its inverse ``look bounded.''
Given $0 < \lambda_1 < \lambda_2,$ the Heaviside theta function
\begin{equation}
	\Theta_{[\lambda_1, \lambda_2]}(\Delta) \equiv \Theta((\lambda_2 - \Delta)(\Delta - \lambda_1))
\end{equation}
truncates the spectrum of $\Delta$ to the range $[\lambda_1, \lambda_2].$
We will obtain tidy operators by starting with a vector like $x \ket{\Omega}$ for some arbitrary $x \in \A,$ then acting on this vector with $\Theta_{[\lambda_1, \lambda_2]}(\Delta).$
The resulting vector has no support in the spectral subspaces of $\Delta$ near zero and infinity.
We will show that this vector can be written as some other operator in $\A$ acting on $\ket{\Omega}$ via an equation like
\begin{equation}
	\Theta_{[\lambda_1, \lambda_2]}(\Delta) x \ket{\Omega} = x_{[\lambda_1, \lambda_2]} \ket{\Omega},
\end{equation} 
and take $x_{[\lambda_1, \lambda_2]}$ to be one of our tidy operators.

Unfortunately, it takes some effort to get control over the operators $\Theta_{[\lambda_1 ,\lambda_2]}(\Delta).$
It is much easier to control analytic functions of $\Delta,$ since these can sometimes be expressed as contour integrals using the resolvent of $\Delta$.
We will therefore proceed by developing a theory of analytic mollifiers, which are operators $f(\Delta)$ for $f$ an analytic function in a neighborhood of $[0, \infty)$ such that $f$ vanishes at infinity.
We will obtain tidy operators by approximating step functions using analytic mollifiers.

It may be interesting to note that while Zsid\'{o}'s proof \cite{zsido-proof} constructed analytic mollifiers like the ones I construct here, it was necessary in that case to appeal to the theory of analytic generators \cite{cioranescu1976analytic} to finish proving Tomita's theorem. 
This is because while one can use analytic mollifiers to produce operators $a \in \A$ for which $\Delta^{-it} a \Delta^{it}$ admits an entire analytic continuation, these analytic continuations are superexponentially growing at infinity,\footnote{See e.g. \cite[section 10.22]{struatilua2019lectures} for a discussion of when the analytic continuation of modular flow is exponentially bounded.} and cannot be constrained using Carlson's theorem.

In section \ref{sec:resolvent-lemma}, I present a proof of a lemma due to Takesaki \cite{takesaki2006tomita} concerning what happens when the resolvent of the modular operator is used as a mollifier.
In section \ref{sec:tidy-subspace}, I present a construction of the tidy subspace $\A_{\text{tidy}}$ by extending arguments made in \cite{zsido-proof}.
In section \ref{sec:main-proof}, I show that for $a$ a tidy operator and $z \in \comps,$ the operator $\Delta^{-z} a \Delta^{z}$ is densely defined and bounded on its domain, and furthermore that when $z$ is an integer, the closure of $\Delta^{-z} a \Delta^z$ lies in $\A.$
I then use Carlson's theorem to show $\Delta^{-it} a \Delta^{it} \in \A$ for $a$ in the tidy subspace, and obtain the statement for general $a \in \A$ by continuity.
In section \ref{sec:modular-conjugation}, I show that the techniques developed in this section also lead to an easy proof of the statement that for $a \in \A$ and $J$ the modular conjugation, the operator $J a J$ is in $\A'.$

\subsection{Takesaki's resolvent lemma}
\label{sec:resolvent-lemma}

The following lemma is extremely important in the development of the ensuing theory.
Unfortunately, its proof is not particularly instructive --- it involves a mathematical trick without any obvious physical meaning.
Nevertheless, I hope that the reasons for trying to prove such a lemma are apparent from the preceding discussion, so the lack of insight provided by the proof will be acceptable, if undesirable.

\begin{lemma}[Takesaki's resolvent lemma] \label{lem:resolvent-lemma}
	Let $\A$ be a von Neumann algebra, $\ket{\Omega}$ a cyclic and separating vector, and $\Delta$ the associated modular operator.
	Fix $x' \in \A'.$
	Then for every complex number $z$ such that $(z - \Delta)$ is invertible as a bounded operator, we have
	\begin{equation}
		(z - \Delta)^{-1} x' \ket{\Omega} = x \ket{\Omega}
	\end{equation}
	for some unique $x \in \A$ which depends on $z$.
	Furthermore, the norm of this operator satisfies the bound
	\begin{equation}
		\lVert x \rVert_{\infty}
			\leq \frac{\lVert x' \rVert_{\infty}}{\sqrt{2 (|z| - \Re(z))}}.
	\end{equation}
\end{lemma}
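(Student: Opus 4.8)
The plan is to reduce the statement to an estimate on ``left-bounded'' vectors. Recall the standard fact that a vector $\ket{\eta} \in \H$ equals $x\ket{\Omega}$ for some $x \in \A$ with $\lVert x \rVert_{\infty} \le c$ if and only if $\lVert b'\ket{\eta}\rVert \le c\, \lVert b'\ket{\Omega}\rVert$ for every $b' \in \A'$: given the inequality, the assignment $b'\ket{\Omega} \mapsto b'\ket{\eta}$ is well defined (because $\ket{\Omega}$ is separating for $\A'$) and bounded by $c$, so it extends to a bounded operator commuting with $\A'$, which by the bicommutant theorem lies in $\A'' = \A$ and sends $\ket{\Omega}$ to $\ket{\eta}$; the converse is immediate. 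Uniqueness of $x$ then follows since $\ket{\Omega}$ is separating for $\A$. So with $\ket{\eta} = (z-\Delta)^{-1} x'\ket{\Omega}$, the lemma reduces to showing
\begin{equation}
	\lVert b' (z-\Delta)^{-1} x'\ket{\Omega} \rVert \;\le\; \frac{\lVert x'\rVert_{\infty}}{\sqrt{2(|z| - \Re z)}}\, \lVert b'\ket{\Omega}\rVert \qquad \text{for all } b' \in \A'.
\end{equation}

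Two ingredients feed this. First, since $x' \in \A'$, the vector $x'\ket{\Omega}$ lies in the domain of $\Delta^{-1/2}$ with $\Delta^{-1/2} x'\ket{\Omega} = J x'^{\dagger}\ket{\Omega}$: this is the $\A'$-side of the Tomita structure recalled in appendix \ref{app:tomita}, using that $S^{\dagger} = J\Delta^{-1/2}$ acts on $\A'\ket{\Omega}$ by $x'\ket{\Omega} \mapsto x'^{\dagger}\ket{\Omega}$. Hence $\ket{\eta} = (z-\Delta)^{-1}\Delta^{1/2}\big(J x'^{\dagger}\ket{\Omega}\big)$. Second, a one-variable computation shows that $\mu \mapsto \mu/|z-\mu|^2$ on $[0,\infty)$ is maximized at $\mu = |z|$, with maximum $1/\big(2(|z|-\Re z)\big)$; so, away from the trivial case in which $z$ is a nonnegative real (where the claimed bound is infinite), the function $\mu \mapsto \sqrt{\mu}/(z-\mu)$ is bounded on $[0,\infty)$, and the spectral theorem (theorem \ref{thm:spectral-theorem}) then shows that $\Delta^{1/2}(z-\Delta)^{-1}$ closes to a bounded operator of norm at most $1/\sqrt{2(|z|-\Re z)}$. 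This bounded operator is the source of the constant in the lemma.

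The main obstacle is combining these ingredients, and here --- as the author warns --- one is forced into a manipulation with no transparent physical meaning. The trouble is that $b'$ does not commute with the resolvent, so bounding $\lVert b'(z-\Delta)^{-1} x'\ket{\Omega}\rVert$ by pulling $b'$ out as an operator norm $\lVert b'\rVert_{\infty}$ is far too lossy: we need $b'$ to end up acting on $\ket{\Omega}$, producing $\lVert b'\ket{\Omega}\rVert$. I would attack this by conjugating through $J$ and using $J\Delta J = \Delta^{-1}$ (which incidentally makes the spectrum of $\Delta$ symmetric under $\mu \mapsto \mu^{-1}$, so that $z$, $\bar z$, $1/z$, $1/\bar z$ all avoid it at once) together with resolvent identities such as $\Delta(z-\Delta)^{-1} = -1 + z(z-\Delta)^{-1}$, so as to rewrite $b'(z-\Delta)^{-1}x'\ket{\Omega}$ in a form where $b'$ can be commuted back onto $\ket{\Omega}$ while the remaining spectral factor is exactly the bounded operator $\Delta^{1/2}(z-\Delta)^{-1}$ of the previous paragraph. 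Making this bookkeeping land on the sharp constant $1/\sqrt{2(|z|-\Re z)}$, rather than some weaker constant of $L^1$-Fourier-transform type, is the step I expect to demand the most care.
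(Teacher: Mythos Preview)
Your reduction to the left-bounded vector criterion is clean and correct, and the spectral bound $\sup_{\mu\ge 0}\mu/|z-\mu|^2 = 1/(2(|z|-\Re z))$ is indeed exactly where the constant comes from. But the proposal stops short of a proof at precisely the hard step, and the route you sketch for closing the gap is either too vague or risks circularity. Conjugating by $J$ gets you $J\Delta J=\Delta^{-1}$, which is available, but to ``commute $b'$ back onto $\ket{\Omega}$'' after such a conjugation you would need to know something about where $Jb'J$ lives---and that is part of Tomita's theorem, which this lemma is meant to help prove. I do not see how resolvent identities alone let you pull a generic $b'\in\A'$ through $(z-\Delta)^{-1}$ and land on the sharp constant.

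The paper's argument sidesteps this entirely by working on the $\A$ side rather than the $\A'$ side. It first writes $(z-\Delta)^{-1}x'\ket{\Omega}=x\ket{\Omega}$ with $x$ \emph{affiliated} to $\A$ (this much is automatic from $\ket{\eta}\in\D_{\Delta^{1/2}}$), takes the polar decomposition $x=u|x|$, and then tests $x^{\dagger}$ not against $b'\ket{\Omega}$ but against $\Pi_I(|x^{\dagger}|)\ket{\Omega}$, where $\Pi_I$ is a spectral projection of $|x^{\dagger}|$. The point is that $\Pi_I\in\A$, so $x'$ commutes with it, and the fundamental modular identity $\langle\Delta^{1/2}a^{\dagger}\Omega|\Delta^{1/2}b\Omega\rangle=\langle b^{\dagger}\Omega|a\Omega\rangle$ converts $\lVert x^{\dagger}\Pi_I\ket{\Omega}\rVert^2$ into $\langle \Pi_I x\Omega|\Delta x\Omega\rangle$. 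Expanding $\lVert x'\Pi_I\ket{\Omega}\rVert^2=\lVert\Pi_I(z-\Delta)x\ket{\Omega}\rVert^2$ and applying the AM--GM inequality $|z|^2q^2+p^2\ge 2|z|pq$ together with Cauchy--Schwarz then yields exactly $\lVert x^{\dagger}\Pi_I\ket{\Omega}\rVert\le \lVert x'\rVert_{\infty}\lVert\Pi_I\ket{\Omega}\rVert/\sqrt{2(|z|-\Re z)}$. From there, if $I=[s_1,s_2]$ with $s_1$ exceeding the claimed bound, one gets $\Pi_I\ket{\Omega}=0$, hence $\Pi_I=0$ by separatingness, so the spectrum of $|x^{\dagger}|$ is confined below the bound. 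The missing idea in your proposal is this choice of test vectors coming from the spectral theory of the affiliated operator itself; that is what makes the modular identity bite and produces the sharp constant without any appeal to $J\A'J$.
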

\begin{proof}
	First note that because $(z - \Delta)^{-1} x' \ket{\Omega}$ is in the domain of $z - \Delta,$ it is in the domain of $\Delta,$ and so in the domain of $\Delta^{1/2}.$
	It follows (cf. appendix \ref{app:tomita}) that it can always be written in the form $x \ket{\Omega}$ for some operator $x$ \textit{affiliated} to $\A.$
	The whole work of the lemma is in showing that $x$ is bounded.
	
	To do this, it would suffice to show that $x$ has bounded action on vectors of the form $b' \ket{\Omega}$ for $b' \in \A'.$
	Unfortunately, showing this does not seem to be tractable in general.
	Instead, we will write the polar decomposition of $x$ as
	\begin{equation}
		x = u |x|,
	\end{equation}
	and try to show that $x$ is bounded when acting on vectors of the form $P \ket{\Omega}$ where $P$ is a spectral projection of $|x|.$
	In fact, because of the way the modular operator shows up in the lemma statement, it will be easier to show that $x^{\dagger}$ is bounded when acting on vectors of the form $Q \ket{\Omega}$ for $Q$ a spectral projection of $|x^{\dagger}|.$
	Once we have shown this, we will apply the fact that $\ket{\Omega}$ is separating to conclude that $x$ is bounded, and derive the specific bound given in the statement of the lemma.
	
	Let $I$ be an interval in $[0, \infty),$ and let $\Pi_I(|x^{\dagger}|)$ denote the spectral projection of $|x^{\dagger}|$ in this range.
	Consider the vector
	\begin{equation}
		x^{\dagger} \Pi_{I}(|x^{\dagger}|) \ket{\Omega}.
	\end{equation}
	Using the expression $x^{\dagger} = |x| u^{\dagger},$ and the easy-to-verify identity $u |x| u^{\dagger} = |x^{\dagger}|,$ we can rewrite this vector as
	\begin{equation}
		x^{\dagger} \Pi_{I}(|x^{\dagger}|) \ket{\Omega}
			= \Pi_{I}(|x|) x^{\dagger} \ket{\Omega}.
	\end{equation}
	The norm-squared of this vector can be written as
	\begin{equation}
		\lVert x^{\dagger} \Pi_{I}(|x^{\dagger}|) \ket{\Omega} \rVert^2
		= \langle x^\dagger \Omega | \Pi_{I}(|x|) x^{\dagger} \Omega \rangle.
	\end{equation}
	Using the fundamental identity \eqref{eq:exact-K-swap} for the modular operator, we may rewrite this as
	\begin{equation} \label{eq:resolvent-lemma-intermediate}
		\lVert x^{\dagger} \Pi_{I}(|x^{\dagger}|) \ket{\Omega} \rVert^2
			= \langle x \Pi_{I}(|x|) \Omega | \Delta x \Omega \rangle
			= \langle \Pi_{I}(|x^{\dagger}|) x \Omega | \Delta x \Omega \rangle.
	\end{equation}

	By contrast, consider the action of the operator $x'$ on the vector $\Pi_{I}(|x^{\dagger}|) \ket{\Omega}.$
	We have
	\begin{equation}
		x' \Pi_{I}(|x^{\dagger}|) \ket{\Omega}
			= \Pi_{I}(|x^{\dagger}|) x' \ket{\Omega}
			= \Pi_{I}(|x^{\dagger}|) (z - \Delta) x \ket{\Omega}.
	\end{equation}
	Taking the norm squared and expanding in terms of the inner product, we obtain
	\begin{equation}
		\lVert x' \Pi_{I}(|x^{\dagger}|) \ket{\Omega} \rVert^2
			= |z|^2 \lVert \Pi_{I}(|x^{\dagger}|) x \ket{\Omega} \rVert^2
				+ \lVert \Pi_{I}(|x^{\dagger}|) \Delta x \ket{\Omega} \rVert^2
				- 2 \Re\left( z \langle \Delta x \Omega | \Pi_{I}(|x^{\dagger}|) x \Omega \rangle \right).
	\end{equation}
	The last term in this expression can be compared to equation \eqref{eq:resolvent-lemma-intermediate} to write
	\begin{equation}
		\lVert x' \Pi_{I}(|x^{\dagger}|) \ket{\Omega} \rVert^2
		= |z|^2 \lVert \Pi_{I}(|x^{\dagger}|) x \ket{\Omega} \rVert^2
		+ \lVert \Pi_{I}(|x^{\dagger}|) \Delta x \ket{\Omega} \rVert^2
		- 2 \Re(z) \lVert x^{\dagger} \Pi_{I}(|x^{\dagger}|) \ket{\Omega} \rVert^2.
	\end{equation}
	To make the first two terms look more like equation \eqref{eq:resolvent-lemma-intermediate}, we can use the universal inequality $p^2 + |z|^2 q^2 \geq 2 |z| p q,$ which follows from the expression $(p - |z| q)^2 \geq 0.$
	Using this gives
	\begin{equation}
		\lVert x' \Pi_{I}(|x^{\dagger}|) \ket{\Omega} \rVert^2
		\geq 2 |z| \lVert \Pi_{I}(|x^{\dagger}|) x \ket{\Omega} \rVert \lVert \Pi_{I}(|x^{\dagger}|) \Delta x \ket{\Omega} \rVert
		- 2 \Re(z) \lVert x^{\dagger} \Pi_{I}(|x^{\dagger}|) \ket{\Omega} \rVert^2.
	\end{equation}
	Applying the Cauchy-Schwarz inequality then gives
	\begin{equation}
		\lVert x' \Pi_{I}(|x^{\dagger}|) \ket{\Omega} \rVert^2
		\geq 2 |z| \left|\langle \Pi_{I}(|x^{\dagger}|) x \Omega | \Delta x \Omega \rangle \right|
		- 2 \Re(z) \lVert x^{\dagger} \Pi_{I}(|x^{\dagger}|) \ket{\Omega} \rVert^2.
	\end{equation}
	Finally, invoking equation \eqref{eq:resolvent-lemma-intermediate} gives
	\begin{equation}
		\lVert x' \Pi_{I}(|x^{\dagger}|) \ket{\Omega} \rVert^2
		\geq 2 (|z| - \Re(z)) \lVert x^{\dagger} \Pi_{I}(|x^{\dagger}|) \ket{\Omega} \rVert^2,
	\end{equation}
	and hence
	\begin{equation}
		\lVert x^{\dagger} \Pi_{I}(|x^{\dagger}|) \ket{\Omega} \rVert
			\leq \frac{\lVert x' \rVert_{\infty}}{\sqrt{2 (|z| - \Re(z))}}
				\lVert \Pi_{I}(|x^{\dagger}|) \ket{\Omega} \rVert.
	\end{equation}
	Now let us denote the endpoints of the interval $I$ by $I = [s_1, s_2].$
	We have
	\begin{equation}
		\lVert x^{\dagger} \Pi_{I}(|x^{\dagger}|) \ket{\Omega} \rVert
			\geq s_1	\lVert \Pi_{I}(|x^{\dagger}|) \ket{\Omega} \rVert.
	\end{equation}
	Hence 
	\begin{equation}
		\lVert \Pi_{I}(|x^{\dagger}|) \ket{\Omega} \rVert
		\leq \frac{1}{s_1} \frac{\lVert x' \rVert_{\infty}}{\sqrt{2 (|z| - \Re(z))}}
		\lVert \Pi_{I}(|x^{\dagger}|) \ket{\Omega} \rVert.
	\end{equation}
	If the norm appearing in this inequality is nonzero, then we can divide by it on either side, and therefore obtain the inequality
	\begin{equation}
		s_1 \leq \frac{\lVert x' \rVert_{\infty}}{\sqrt{2 (|z| - \Re(z))}}.
	\end{equation}
	So if, by contrast, $I$ is such that its left endpoint $s_1$ satisfies
	\begin{equation}
		s_1 > \frac{\lVert x' \rVert_{\infty}}{\sqrt{2 (|z| - \Re(z))}},
	\end{equation}
	then we must have
	\begin{equation}
		x^{\dagger} \Pi_{I}(|x^{\dagger}|) \ket{\Omega} = 0,
	\end{equation}
	which, by the fact that $\ket{\Omega}$ is separating, implies
		\begin{equation}
		x^{\dagger} \Pi_{I}(|x^{\dagger}|) = 0,
	\end{equation}
	and hence $\Pi_{I}(|x^{\dagger}|) = 0.$
	
	We conclude that the spectral support of $|x^{\dagger}|$ lies entirely within the interval lower bounded by zero and upper bounded by $\frac{\lVert x' \rVert_{\infty}}{\sqrt{2 (|z| - \Re(z))}}$, and therefore that $x^{\dagger}$ is bounded and we have
	\begin{equation}
		\lVert x \rVert_{\infty} = \lVert x^{\dagger} \rVert_{\infty}	
			\leq \frac{\lVert x' \rVert_{\infty}}{\sqrt{2 (|z| - \Re(z))}},
	\end{equation}
	as desired.
\end{proof}

\subsection{Constructing the tidy subspace}

\label{sec:tidy-subspace}

We will now use Takesaki's resolvent lemma, proved in the previous subsection, to study vectors like $f(\Delta) a \ket{\Omega}$ for certain functions $f$ analytic in a neighborhood of $[0, \infty).$
The idea will be to restrict our attention to functions $f$ that die off sufficiently quickly at infinity, and then to show that for $a' \in \A',$ we can write $f(\Delta) a' \ket{\Omega}$ as a contour integral
\begin{equation}
	f(\Delta) a' \ket{\Omega}
		= \frac{1}{2\pi i} \int_{\gamma} dz\, f(z) (z-\Delta)^{-1} a' \ket{\Omega},
\end{equation}
where the integral has the properties of the Bochner integral on Hilbert space discussed in section \ref{sec:analytic-operators}.
Using Takesaki's resolvent lemma (lemma \ref{lem:resolvent-lemma}), we will write $(z - \Delta)^{-1} a' \ket{\Omega}$ as
\begin{equation}
	(z - \Delta)^{-1} a' \ket{\Omega}
		= a_z \ket{\Omega}
\end{equation}
for some operator $a_z \in \A.$
This will let us express the contour integral as
\begin{equation}
	f(\Delta) a' \ket{\Omega}
		= \frac{1}{2\pi i} \int_{\gamma} dz\, f(z) a_z \ket{\Omega}.
\end{equation}
We will then show that this can be written in terms of an operator in $\A$ as $a_f \ket{\Omega}.$
By judiciously choosing a sequence of analytic functions $f_k$ to approximate the Heaviside function, we will be able to construct an operator $a_{[0, \lambda_2]}$ satisfying the equation
\begin{equation} \label{eq:a'-to-a-projection}
	\Theta(\lambda_2 - \Delta) a' \ket{\Omega} = a_{[0, \lambda_2]} \ket{\Omega}.
\end{equation}
By invoking a symmetric argument, with the substitutions $\A \leftrightarrow \A'$ and $\Delta \leftrightarrow \Delta^{-1},$ we will show that for any $a \in \A,$ there exists an operator $a'_{[\lambda_1, \infty)}$ in $\A'$ satisfying
\begin{equation} \label{eq:a-to-a'-projection}
	\Theta(\Delta - \lambda_1) a \ket{\Omega} = 
		a'_{[\lambda_1, \infty)} \ket{\Omega}.
\end{equation}

By combining equations \eqref{eq:a'-to-a-projection} and \eqref{eq:a-to-a'-projection}, we will be able to show, for any $a \in \A$, the existence of operators $a_{[\lambda_1, \lambda_2]} \in \A$ and $a'_{[\lambda_1, \lambda_2]}$ in $\A'$ satisfying
\begin{equation}
	\Theta(\lambda_2 - \Delta) \Theta(\Delta - \lambda_1) a \ket{\Omega}
	= a_{[\lambda_1, \lambda_2]} \ket{\Omega}
	= a'_{[\lambda_1, \lambda_2]} \ket{\Omega}.
\end{equation}
We will use this to construct a dense set of states in $\H$ that have support in compact spectral ranges of $\Delta$ and $\Delta^{-1}$, and that can be written either in terms of an operator in $\A$ acting on $\ket{\Omega}$ or in terms of an operator in $\A'$ acting on $\ket{\Omega}.$\footnote{This is interesting in part because while $\A \ket{\Omega}$ and $\A' \ket{\Omega}$ are both dense in $\H$, we did not know a priori whether their intersection was dense --- the vectors constructed here furnish a dense subset of the intersection $\A \ket{\Omega} \cap \A' \ket{\Omega}.$}
These vectors will even have the property that if we act on them with an integer power of the modular operator, $\Delta^n,$ they can still be written in terms of operators in $\A$ or $\A',$ and that the norm of the resulting operator is bounded by some exponential function of $n$.

The operators in $\A$ that produce the special states described in the preceding paragraph will be called the \textbf{tidy operators} in $\A.$
The ones in $\A'$ will be called the tidy operators in $\A'.$
These can be thought of as the operators for which the modular operator ``looks bounded.''

The rest of the section is written as a series of lemmas, propositions, and theorems.

\begin{lemma} \label{lem:full-unbounded-residue}
	Let $f$ be a bounded, analytic function in a neighborhood of $[0, \infty),$ and let $\gamma$ be a simple contour in that neighborhood surrounding $[0, \infty)$ counterclockwise, and such that $\gamma$ is contained in some bounded horizontal strip (so that the top and bottom parts of the contour do not get arbitrarily far away from each other at infinity).
	Suppose further that in the interior of the contour, $f(z)$ vanishes uniformly in the limit $\text{Re}(z) \to \infty$, and does so quickly enough that the real integral
	\begin{equation}
		\int_{\gamma} ds |f(z)| \lVert (z - \Delta)^{-1} \rVert_{\infty}
	\end{equation}
	is finite, where $s$ is the arclength parameter.
	Then for any vector $\ket{\psi},$ we have
	\begin{equation} \label{eq:full-unbounded-residue}
		f(\Delta) \ket{\psi}
			= \frac{1}{2\pi i} \int dz\, f(z) (z - \Delta)^{-1} \ket{\psi},
	\end{equation}
	where this integral is evaluated in the sense of the Bochner integral from section \ref{sec:analytic-operators}.
\end{lemma}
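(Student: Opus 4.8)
The plan is to verify the identity by pairing both sides with an arbitrary vector $\ket{\xi} \in \H$ and reducing to the ordinary Cauchy integral formula via the spectral calculus for $\Delta$. Since $\ket{\xi}$ is arbitrary and a Hilbert-space vector is determined by its overlaps with all other vectors, this suffices. First I would check that the right-hand side is a genuine $\H$-valued Bochner integral: the integrand $z \mapsto f(z)(z-\Delta)^{-1}\ket{\psi}$ is norm-continuous along $\gamma$ (because $f$ is analytic and $z \mapsto (z-\Delta)^{-1}$ is norm-analytic off the spectrum of $\Delta$, which $\gamma$ avoids), and its norm is bounded by $|f(z)|\,\lVert (z-\Delta)^{-1}\rVert_{\infty}\,\lVert \psi \rVert$, whose arclength integral over $\gamma$ is finite by hypothesis. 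Note also that since $f$ is bounded, $f(\Delta)$ is a bounded operator defined on all of $\H$, so there is no domain issue on the left.

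Next, for a fixed $\ket{\xi}$, I would use the defining property of the Bochner integral to write $\bra{\xi}$ applied to the right-hand side as $\tfrac{1}{2\pi i}\int_{\gamma} dz\, f(z)\,\bra{\xi}(z-\Delta)^{-1}\ket{\psi}$. By the spectral theorem (theorem \ref{thm:spectral-theorem}), $\bra{\xi}(z-\Delta)^{-1}\ket{\psi} = \int_{[0,\infty)} (z-\lambda)^{-1}\,d\mu_{\xi,\psi}(\lambda)$ and $\bra{\xi}f(\Delta)\ket{\psi} = \int_{[0,\infty)} f\,d\mu_{\xi,\psi}$, where $\mu_{\xi,\psi}$ is the complex spectral measure, whose total variation is at most $\lVert \xi \rVert\,\lVert \psi \rVert$. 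I would then invoke Fubini's theorem to interchange the $\gamma$-integral with the spectral integral: since $\lvert z-\lambda\rvert^{-1} \le \lVert (z-\Delta)^{-1}\rVert_{\infty}$ for every $\lambda$ in the spectrum of $\Delta$, the double integral of the absolute value is bounded by $\lVert \xi \rVert\,\lVert \psi \rVert \int_{\gamma} ds\, |f(z)|\,\lVert (z-\Delta)^{-1}\rVert_{\infty}$, which is finite by hypothesis. After the swap, $\bra{\xi}$ applied to the right-hand side becomes $\int_{[0,\infty)}\Bigl(\tfrac{1}{2\pi i}\int_{\gamma}\tfrac{f(z)}{z-\lambda}\,dz\Bigr)\,d\mu_{\xi,\psi}(\lambda)$.

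Finally I would evaluate the inner contour integral. For $\lambda \in [0,\infty)$, which lies in the interior of the region bounded by $\gamma$, the integral $\tfrac{1}{2\pi i}\int_{\gamma}\tfrac{f(z)}{z-\lambda}\,dz$ equals $f(\lambda)$. Since $\gamma$ is not a compact closed curve, I would establish this by truncating $\gamma$ at $\Re(z)=R$ and closing it off with the vertical segment there; because $\gamma$ lies in a bounded horizontal strip, that segment has bounded length, and because $f$ is bounded (indeed $f(z)\to 0$ uniformly as $\Re(z)\to\infty$) while $\lvert z-\lambda\rvert \to \infty$ on it, its contribution vanishes as $R\to\infty$. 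The truncated curve encircles $\lambda$ once counterclockwise, so Cauchy's integral formula gives $f(\lambda)$ in the limit. Hence $\bra{\xi}$ applied to the right-hand side equals $\int_{[0,\infty)} f\,d\mu_{\xi,\psi} = \bra{\xi}f(\Delta)\ket{\psi}$ for every $\ket{\xi}$, which proves the claimed vector identity.

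The main obstacle is purely the non-compactness of the contour $\gamma$: it is what makes the Bochner integral potentially delicate, it is what forces the Fubini interchange to rely on the integrability hypothesis for $|f(z)|\,\lVert (z-\Delta)^{-1}\rVert_{\infty}$, and it is what requires the bounded-strip condition and the uniform decay of $f$ to "close up" the contour at infinity before applying Cauchy's formula. Once those three points are handled, the rest is routine bookkeeping with the spectral theorem.
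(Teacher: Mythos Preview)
Your proof is correct, but it follows a different route from the paper's. The paper truncates the \emph{operator}: it introduces the spectral projections $\Pi_n$ of $\Delta$ onto $[0,n]$, sets $\Delta_n = \Delta \Pi_n$, and invokes the bounded residue formula \eqref{eq:bounded-residue} for $f(\Delta_n)$ around the compact contour $\gamma_n + v_n$ (where $v_n$ is a vertical cut at $n+1/2$). It then deforms $v_n$ off to infinity using the decay hypothesis, and finally sends $n \to \infty$ via dominated convergence at the level of vectors. By contrast, you truncate nothing at the operator level: you pair with an arbitrary $\ket{\xi}$, pass to the spectral measure $\mu_{\xi,\psi}$, use Fubini (legitimized by the same integrability hypothesis) to land on the scalar integral $\tfrac{1}{2\pi i}\int_{\gamma}\tfrac{f(z)}{z-\lambda}\,dz$ for each fixed $\lambda$, and only then close up the contour at infinity. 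The two arguments share the same ``vertical segment of bounded length'' step to handle non-compactness of $\gamma$; what differs is whether the limit is taken before or after reducing to scalars. Your approach is arguably more elementary in that it avoids quoting the operator-valued residue formula as a black box, while the paper's approach stays closer to the Bochner-integral framework used elsewhere in the text and makes the strong convergence $f(\Delta)\Pi_n \to f(\Delta)$ do the work.
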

\begin{proof}
	Let $\Pi_n$ be the spectral projection of $\Delta$ onto the range $[0, n],$ and let $\Delta_n = \Delta \Pi_n$ be the restriction of $\Delta$ to that spectral range.
	For each $n,$ we can construct a vertical segment $v_n$ crossing the real line at $n+1/2$ and cutting through the contour $\gamma$; see figure \ref{fig:contour-cutoff}.
	We call the part of $\gamma$ to the left of this vertical segment $\gamma_n.$
	
	\begin{figure}
		\centering
		\includegraphics{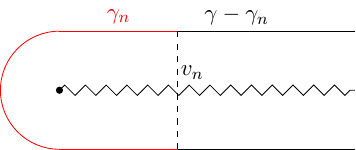}
		\caption{Given a sufficiently nice contour $\gamma$ enclosing the positive real axis $[0, \infty]$, and given an integer $n,$ one can draw a vertical segment $v_n$ passing through the real axis at $n + 1/2,$ which splits the contour into a piece $\gamma_n$ to the left of the vertical segment, and a piece $\gamma - \gamma_n$ to the right of the vertical segment.}
		\label{fig:contour-cutoff}
	\end{figure}
	
	It is a straightforward exercise using the spectral theorem to show the identity $f(\Delta) \Pi_n = f(\Delta_n) \Pi_n.$
	From this, and from the fact that each $\Delta_n$ is bounded, we may apply the residue formula \eqref{eq:bounded-residue} to obtain
	\begin{equation} \label{eq:projected-unbounded-residue}
		f(\Delta) \Pi_n \ket{\psi}
			= f(\Delta_n) \Pi_n \ket{\psi}
			= \frac{1}{2 \pi i} \int_{\gamma_n + v_n} dz\, f(z) (z - \Delta_n)^{-1} \Pi_n \ket{\psi}.
	\end{equation}
	Since the spectrum of $\Delta_n$ is contained in the range $[0, n],$ we may actually deform the contour to $\gamma_{m} + v_m$ for any $m \geq n.$
	It is easy to see from the assumptions of the lemma that the integral over $v_m$ vanishes in the limit $m \to \infty,$ so we may write
	\begin{align}
	\begin{split}
		f(\Delta) \Pi_n \ket{\psi}
			& = \frac{1}{2 \pi i} \int_{\gamma} dz\, f(z) (z - \Delta_n)^{-1} \Pi_n \ket{\psi} \\
			& = \frac{1}{2 \pi i} \int_{\gamma} dz\, f(z) (z - \Delta)^{-1} \Pi_n \ket{\psi}.
	\end{split}
	\end{align}
	The spectral theorem implies that in the limit $n \to \infty,$ the projections $\Pi_n$ converge strongly to the identity operator.
	So taking the $n \to \infty$ limit on the left side of this equation gives the vector $f(\Delta) \ket{\psi}.$
	Taking the $n \to \infty$ limit on the right side proves the theorem provided that we can interchange the limit and the integral.
	This interchange can be justified using a straightforward application of Lebesgue's dominated convergence theorem.
\end{proof}

\begin{prop} \label{prop:analytic-mollifier}
	Let $f$ be a bounded analytic function in a neighborhood of $[0, \infty)$ and $\gamma$ a contour satisfying the conditions of the previous lemma.
	Suppose also that the function $z^{1/2} f(z)$ is bounded on $[0, \infty).$
	Fix $a' \in \A'.$
	The vector
	\begin{equation}
		f(\Delta) a' \ket{\Omega}
	\end{equation}
	can be written as $a_f \ket{\Omega}$ for some unique $a_f \in \A.$
	Furthermore, the norm of this operator is bounded by
	\begin{align}
		\begin{split}
			\lVert a_f \rVert_{\infty}
			& \leq \frac{\lVert a' \rVert_{\infty}}{2 \pi} \int_{\gamma} ds\, \frac{|f(z)|}{\sqrt{2 (|z| - \Re(z))}}.
		\end{split}
	\end{align}
\end{prop}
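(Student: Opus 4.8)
The plan is to represent $f(\Delta) a' \ket{\Omega}$ as a Bochner integral of an $\A$-valued function and read off the result. First I would apply Lemma~\ref{lem:full-unbounded-residue} with $\ket{\psi} = a' \ket{\Omega}$, which is legitimate since $\gamma$ is assumed to satisfy that lemma's hypotheses, to obtain
\begin{equation}
	f(\Delta) a' \ket{\Omega}
		= \frac{1}{2\pi i} \int_{\gamma} dz\, f(z)\, (z - \Delta)^{-1} a' \ket{\Omega}.
\end{equation}
Because $\Delta$ is positive and self-adjoint, its spectrum lies in $[0, \infty),$ so every $z \in \gamma$ lies in the resolvent set of $\Delta$ and Takesaki's resolvent lemma (Lemma~\ref{lem:resolvent-lemma}) applies pointwise along $\gamma$: for each $z$ there is a unique $a_z \in \A$ with $(z - \Delta)^{-1} a' \ket{\Omega} = a_z \ket{\Omega}$ and $\lVert a_z \rVert_{\infty} \leq \lVert a' \rVert_{\infty}/\sqrt{2(|z| - \Re(z))}.$ The candidate operator is then $a_f := \frac{1}{2\pi i}\int_{\gamma} dz\, f(z)\, a_z,$ understood as a Bochner integral in the norm topology of $\B(\H).$

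The technical heart is showing this $\B(\H)$-valued integral converges. For measurability one checks that $z \mapsto a_z$ is norm-continuous on the resolvent set of $\Delta$: this follows by first establishing weak analyticity --- for $\xi \in \H$ and $c' \in \A'$ one has $\langle \xi | a_z c' | \Omega\rangle = \langle (c')^{\dagger}\xi | (z-\Delta)^{-1} a' \Omega\rangle,$ which is analytic in $z,$ and a density argument (approximating a general vector by elements of the dense set $\A' \ket{\Omega}$, uniformly on compacta in $z$ using the Takesaki bound on $\lVert a_z\rVert_{\infty}$) promotes this to all matrix elements --- and then invoking the coincidence of weak and norm analyticity for bounded-operator-valued functions. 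For absolute convergence one estimates $\int_{\gamma} ds\, \lVert f(z) a_z\rVert_{\infty} \leq \frac{\lVert a'\rVert_{\infty}}{\sqrt 2}\int_{\gamma} ds\, |f(z)|/\sqrt{|z| - \Re(z)},$ and I expect the real work to be in verifying the right-hand side is finite: the bounded-horizontal-strip condition on $\gamma$ controls the growth of $1/\sqrt{|z| - \Re(z)}$ at the far end of the contour (where $|z| - \Re(z) \sim (\Im z)^2/(2\Re z)$), and the assumed decay of $f$ --- in particular that $z^{1/2} f(z)$ is bounded on $[0,\infty)$, together with analyticity of $f$ in a genuine neighborhood of $[0,\infty)$ so the decay propagates onto $\gamma$ --- has to be strong enough to beat it. Granting convergence, the triangle inequality for Bochner integrals gives
\begin{equation}
	\lVert a_f \rVert_{\infty}
		\leq \frac{1}{2\pi} \int_{\gamma} ds\, |f(z)|\, \lVert a_z \rVert_{\infty}
		\leq \frac{\lVert a' \rVert_{\infty}}{2\pi} \int_{\gamma} ds\, \frac{|f(z)|}{\sqrt{2(|z| - \Re(z))}},
\end{equation}
which is the stated bound.

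It remains to identify $a_f$ with the desired operator. First, $a_f \in \A$: a Bochner integral of a function valued in a norm-closed subspace lies in that subspace, and $\A$ is norm-closed since every von Neumann algebra is a C* algebra. Second, $a_f \ket{\Omega} = f(\Delta) a' \ket{\Omega}$: the map $\B(\H) \to \H,$ $T \mapsto T \ket{\Omega},$ is bounded and linear, hence commutes with the Bochner integral, so
\begin{equation}
	a_f \ket{\Omega}
		= \frac{1}{2\pi i} \int_{\gamma} dz\, f(z)\, a_z \ket{\Omega}
		= \frac{1}{2\pi i} \int_{\gamma} dz\, f(z)\, (z - \Delta)^{-1} a' \ket{\Omega}
		= f(\Delta) a' \ket{\Omega}.
\end{equation}
Finally, uniqueness of $a_f$ is immediate from $\ket{\Omega}$ being separating for $\A$: if $b \ket{\Omega} = a_f \ket{\Omega}$ with $b \in \A,$ then $(b - a_f)\ket{\Omega} = 0$ forces $b = a_f.$
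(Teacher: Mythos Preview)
Your argument is correct, but it takes a genuinely different route from the paper's.

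The paper first uses the hypothesis that $z^{1/2} f(z)$ is bounded on $[0,\infty)$ to conclude that $f(\Delta)a'\ket{\Omega}$ lies in the domain of $\Delta^{1/2}$, and then invokes the domain characterization of the Tomita operator (appendix~\ref{app:tomita}) to define $a_f$ a priori as a possibly unbounded operator affiliated to $\A$. The boundedness of $a_f$ is then established by estimating $\lVert a_f b'\ket{\Omega}\rVert$ for $b'\in\A'$, using only the $\H$-valued Bochner integral already set up in Lemma~\ref{lem:full-unbounded-residue}; the core property of $\A'\ket{\Omega}$ finishes the job. In particular, the paper never forms a $\B(\H)$-valued Bochner integral and never needs to know that $z\mapsto a_z$ is norm-continuous.

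Your approach instead defines $a_f$ directly as the operator-norm Bochner integral $\frac{1}{2\pi i}\int_\gamma dz\,f(z)\,a_z$, which buys you $a_f\in\A$ for free (norm-closedness of $\A$) and makes the identity $a_f\ket{\Omega}=f(\Delta)a'\ket{\Omega}$ a one-liner. The price is the extra measurability step: you must establish norm-continuity of $z\mapsto a_z$, which you do via weak analyticity on $\A'\ket{\Omega}$, a density-plus-uniform-bound extension, and the weak-to-norm analyticity principle for bounded-operator-valued functions. This is a legitimate argument, but it is work the paper avoids entirely. Conversely, your route does not need the affiliated-operator machinery or the core argument, and the hypothesis on $z^{1/2}f(z)$ plays no structural role for you beyond (as you note) contributing to the convergence of the norm integral. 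Both proofs are equally casual about the finiteness of $\int_\gamma ds\,|f(z)|/\sqrt{2(|z|-\Re z)}$; the paper simply asserts that the hypotheses of Lemma~\ref{lem:full-unbounded-residue} guarantee it.
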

\begin{proof}
	Since the function $z^{1/2} f(z)$ is bounded by assumption, the vector $f(\Delta) a' \ket{\Omega}$ is in the domain of $\Delta^{1/2}.$
	As discussed in appendix \ref{app:tomita}, this means there exists some operator $a_f$ affiliated to $\A$ satisfying $a_f \ket{\Omega} = f(\Delta) a' \ket{\Omega}.$
	Our job is to show that $a_f$ is bounded.
	Uniqueness is then an immediate consequence of the fact that $\ket{\Omega}$ is separating.
	
	Given $b' \in \A',$ by the previous lemma and the properties of the Bochner integral discussed in section \ref{sec:analytic-operators}, we have
	\begin{align}
		\begin{split}
			a_f b' \ket{\Omega}
				& = b' a_f \ket{\Omega} \\
				& = b' f(\Delta) a' \ket{\Omega} \\
				& = \frac{1}{2 \pi i} b' \int_{\gamma} dz\, f(z) (z - \Delta)^{-1} a' \ket{\Omega} \\
				& = \frac{1}{2 \pi i} \int_{\gamma} dz\, f(z) b' (z - \Delta)^{-1} a' \ket{\Omega}.
		\end{split}
	\end{align}
	By Takesaki's resolvent lemma (lemma \ref{lem:resolvent-lemma}), there exists some bounded operator $a_z \in \A$ satisfying $(z-\Delta)^{-1} a' \ket{\Omega} = a_z \ket{\Omega}.$
	Using this identity gives
	\begin{align}
	\begin{split}
		a_f b' \ket{\Omega}
			& = \frac{1}{2 \pi i} \int_{\gamma} dz\, f(z) b' a_z \ket{\Omega} \\
			& = \frac{1}{2 \pi i} \int_{\gamma} dz\, f(z) a_z b' \ket{\Omega}.
	\end{split}
	\end{align}
	The norm of the vector on the left-hand side satisfies the bound
	\begin{align}
	\begin{split}
		\lVert a_f b' \ket{\Omega} \rVert
			& =  \frac{1}{2 \pi } \left \lVert \int_{\gamma} dz\, f(z) a_z b' \ket{\Omega} \right \rVert \\
			& \leq \frac{1}{2 \pi} \lVert b' \ket{\Omega} \rVert \int_{\gamma} ds\, |f(z)| \lVert a_z \rVert
	\end{split}
	\end{align}
	Applying the specific bound derived in lemma \ref{lem:resolvent-lemma} gives the estimate
	\begin{align}
	\begin{split}
		\lVert a_f b' \ket{\Omega} \rVert
		& \leq \frac{\lVert a' \rVert_{\infty} }{2 \pi} \lVert b' \ket{\Omega} \rVert \int_{\gamma} ds\, \frac{|f(z)|}{\sqrt{2 (|z| - \Re(z))}}.
	\end{split}
	\end{align}
	The assumptions of lemma \ref{lem:full-unbounded-residue} guarantee that the integral is finite, so the action of $a_f$ on $b' \ket{\Omega}$ is bounded by a constant times $\lVert b' \ket{\Omega} \rVert.$
	As explained in appendix \ref{app:tomita}, the vectors $b' \ket{\Omega}$ form a core\footnote{The core of an operator was defined in definition \ref{def:core}.} for affiliated operators constructed from the domain of $\Delta^{1/2},$ so showing that $a_f$ is bounded on these vectors suffices to show that it is bounded on all vectors.
\end{proof}

\begin{theorem} \label{thm:brutal-mollifier}
	Let $\Theta$ be the Heaviside theta function, with the convention $\Theta(0) = \frac{1}{2}.$
		
	For $\lambda_2 > 0$ and $a' \in \A,$ there exists a unique $a_{[0, \lambda_2]} \in \A$ satisfying
	\begin{equation}
		a_{[0, \lambda_2]} \ket{\Omega} = \Theta(\lambda_2 - \Delta) a' \ket{\Omega}.
	\end{equation}
	For any nonnegative integer $n,$ there also exists a unique $a_{[0, \lambda_2], n} \in \A$ satisfying
	\begin{equation}
		a_{[0, \lambda_2], n} \ket{\Omega} = \Delta^n \Theta(\lambda_2 - \Delta) a' \ket{\Omega}.
	\end{equation}

	Furthermore, the norm of these operators is exponentially bounded in $n$, in that there exist $\lambda_2$-independent constants $\alpha, \beta$ satisfying
	\begin{equation}
		\lVert a_{[0, \lambda_2], n}\rVert_{\infty}
			\leq \alpha e^{\beta n}.
	\end{equation}
\end{theorem}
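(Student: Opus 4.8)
The strategy is to obtain the spectral truncation $\Theta(\lambda_2 - \Delta)$, and its modular twist $\Delta^n \Theta(\lambda_2 - \Delta)$, as a limit of the analytic mollifiers controlled by Proposition \ref{prop:analytic-mollifier}. Concretely, I would produce a sequence of functions $f_{n,k}$, each bounded and analytic in a neighborhood of $[0,\infty)$, each with $z^{1/2} f_{n,k}(z)$ bounded on $[0,\infty)$, each decaying fast enough at infinity along a suitable contour $\gamma_k$ for Lemma \ref{lem:full-unbounded-residue} to apply, and with $f_{n,k}(z) \to z^n \Theta(\lambda_2 - z)$ pointwise on $[0,\infty)$ (using the convention $\Theta(0) = \frac{1}{2}$). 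A convenient family is $f_{n,k}(z) = z^n (1 + e^{k(z - \lambda_2)})^{-1}$: its singularities sit at $z = \lambda_2 + i \pi (2j+1)/k$, hence off $[0,\infty)$; it decays like $z^n e^{-k(\Re z - \lambda_2)}$ as $\Re z \to \infty$, uniformly in any bounded horizontal strip; $z^{1/2} f_{n,k}$ is bounded on $[0,\infty)$; and the pointwise limit is $z^n$ for $z < \lambda_2$, equals $\frac{1}{2}\lambda_2^n$ at $z = \lambda_2$, and is $0$ for $z > \lambda_2$, as desired. The contour $\gamma_k$ would hug $[0,\infty)$ at a fixed distance except near $z = \lambda_2$, where it must be pinched to within distance less than $\pi/k$ of the axis so as to stay in the domain of analyticity of $f_{n,k}$.

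Proposition \ref{prop:analytic-mollifier}, applied to the given $a'$, then yields operators $a_{f_{n,k}} \in \A$ with $a_{f_{n,k}} \ket{\Omega} = f_{n,k}(\Delta) a' \ket{\Omega}$ and $\lVert a_{f_{n,k}} \rVert_\infty \leq \frac{\lVert a' \rVert_\infty}{2\pi} \int_{\gamma_k} ds\, \frac{|f_{n,k}(z)|}{\sqrt{2(|z| - \Re z)}}$. First, the vectors converge: writing $\mu$ for the spectral measure of $\Delta$ associated with $a'\ket{\Omega}$, one has $\lVert f_{n,k}(\Delta) a'\ket{\Omega} - \Delta^n \Theta(\lambda_2 - \Delta) a'\ket{\Omega} \rVert^2 = \int d\mu\, |f_{n,k}(z) - z^n \Theta(\lambda_2 - z)|^2$, and since $|f_{n,k}(z)| \leq z^n$ on $[0,\infty)$ and $f_{n,k}(z)$ is monotone in $k$ on either side of $\lambda_2$, one builds a $\mu$-integrable dominating function by splitting the spectrum at $\lambda_2 \pm \epsilon$ --- on $[0, \lambda_2 + \epsilon]$ the vector $\Theta(\lambda_2 + \epsilon - \Delta) a'\ket{\Omega}$ has bounded $\Delta$-spectrum and hence lies in the domain of $\Delta^n$, while on $[\lambda_2 + \epsilon, \infty)$ the $f_{n,k}$ are dominated by the bounded, rapidly decaying function $z^n e^{-k_0(z - \lambda_2)}$ for $k \geq k_0$ --- so Lebesgue's dominated convergence theorem gives norm convergence to $\Delta^n \Theta(\lambda_2 - \Delta) a'\ket{\Omega}$. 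Second, the contour estimate is uniform in $k$: on the pinched part of $\gamma_k$ near $\lambda_2$ the factor $\frac{1}{\sqrt{2(|z| - \Re z)}}$ is of order $k\sqrt{\lambda_2}$ but the arclength of that part is of order $1/k$, so that contribution is $k$-independent, while on the rest of $\gamma_k$ the integrand does not depend on $k$ at all; the resulting bound depends on $n$ only through the factor $|z|^n$ appearing along the contour, and hence has the exponential form $\alpha e^{\beta n}$.

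With a uniform norm bound in hand, the net $a_{f_{n,k}}$ lies in a ball of $\B(\H)$ that is compact in the weak topology, so it has a weakly convergent subnet whose limit $a_{[0,\lambda_2],n}$ lies in $\A$ (von Neumann algebras are weakly closed). Weak convergence of the subnet gives weak convergence of $a_{f_{n,k}}\ket{\Omega}$, but this net already converges in norm to $\Delta^n \Theta(\lambda_2 - \Delta) a'\ket{\Omega}$, so $a_{[0,\lambda_2],n}\ket{\Omega} = \Delta^n \Theta(\lambda_2 - \Delta) a'\ket{\Omega}$. Because $\ket{\Omega}$ is separating, this equation pins down $a_{[0,\lambda_2],n}$ uniquely; hence every subnet limit agrees and the whole net converges to it, and $\lVert a_{[0,\lambda_2],n}\rVert_\infty \leq \alpha e^{\beta n}$ by weak lower semicontinuity of the norm. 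The same separating argument gives uniqueness of $a_{[0,\lambda_2]}$ and of $a_{[0,\lambda_2],n}$ in all cases, and setting $n = 0$ recovers the statement about $a_{[0,\lambda_2]}$.

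The main obstacle is the uniform control of the contour integral. The contour cannot be held fixed: any analytic function approximating the sharp cutoff at $\lambda_2$ must develop singularities pinching the real axis there, forcing $\gamma_k$ to be squeezed, and one must verify that the blow-up of $\frac{1}{\sqrt{2(|z| - \Re z)}}$ on the squeezed region is exactly offset by its shrinking arclength, uniformly in $k$. One must also confirm that $\gamma_k$ continues to satisfy every hypothesis of Lemma \ref{lem:full-unbounded-residue} --- in particular that $\int_{\gamma_k} ds\, |f_{n,k}(z)| \lVert (z - \Delta)^{-1}\rVert_\infty$ stays finite despite the pinch --- and that the dominating function in the spectral-measure estimate is genuinely $\mu$-integrable even though $a'\ket{\Omega}$ need not lie in the domain of $\Delta^{n/2}$; the splitting at $\lambda_2 \pm \epsilon$ is the device that handles this. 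Everything downstream of these estimates --- extracting the weak limit, reading off uniqueness from the separating property --- is routine.
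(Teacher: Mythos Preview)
Your overall strategy matches the paper's: approximate $\Theta(\lambda_2-\cdot)$ by sigmoids $(1+e^{k(z-\lambda_2)})^{-1}$, apply Proposition~\ref{prop:analytic-mollifier} to get $a_{f_{n,k}}\in\A$ with explicit norm bounds, and pass to the limit. Your weak-compactness extraction of the limit is fine; the paper instead first notes that $\Delta^n\Theta(\lambda_2-\Delta)a'\ket{\Omega}$ already lies in $\D_{\Delta^{1/2}}$, hence equals $a_{[0,\lambda_2],n}\ket{\Omega}$ for some operator a priori \emph{affiliated} to $\A$, and then bounds that operator directly on the core $\A'\ket{\Omega}$ by $\limsup_j\lVert a_{f_j,n}\rVert_\infty$. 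Either route works.

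The substantive divergence, and the place where your argument fails, is the contour. You correctly flag the uniform control of the contour integral as the main obstacle, but the resolution you sketch does not hold up. The poles of $f_{n,k}$ all sit on the vertical line $\Re z=\lambda_2$, at heights $\pi(2m+1)/k$, so any contour that runs at fixed height $h$ away from $\lambda_2$ must descend from $h$ to height $<\pi/k$ somewhere near $\lambda_2$. That descent alone has arclength $\gtrsim h$, not $1/k$. Along the descending segment on the \emph{left} of $\lambda_2$ (where $\Re z<\lambda_2$ and $|f_{n,k}|\to\lambda_2^n$, bounded below rather than small), the weight satisfies $1/\sqrt{2(|z|-\Re z)}\sim\sqrt{\lambda_2}/y$, and $\int_{\pi/k}^{h}dy/y\sim\log k$. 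Thus the bound from Proposition~\ref{prop:analytic-mollifier} grows like $\log k$, the $\limsup$ diverges, and you cannot conclude $\lVert a_{[0,\lambda_2],n}\rVert_\infty\leq\alpha e^{\beta n}$. No rearrangement of the pinch geometry fixes this: the left side of any dip is always through a region where the sigmoid is essentially $1$.

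The paper avoids pinching altogether by exploiting periodicity. Restricting to integer $j$ and taking the contour at height exactly $2\pi$, one has $e^{2\pi i j}=1$ and hence $f_j(t\pm 2\pi i)=f_j(t)$. The integrand along the half-lines is then
\[
\frac{(t^2+4\pi^2)^{n/2}}{\bigl(1+e^{j(t-\lambda_2)}\bigr)\sqrt{2\bigl(\sqrt{t^2+4\pi^2}-t\bigr)}},
\]
which is $j$-uniformly dominated and converges to a step-function cutoff at $t=\lambda_2$; the half-circle piece is handled trivially. The exponential bound in $n$ drops out of the factor $(t^2+4\pi^2)^{n/2}\leq(\lambda_2^2+4\pi^2)^{n/2}$ on the surviving range $[0,\lambda_2]$, with no $k$-dependence to manage.
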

\begin{proof}
	Note that because the step function $\Theta(\lambda_2 - \Delta)$ cuts off the large spectral subspaces of $\Delta,$ each vector
	\begin{equation}
		\Delta^n \Theta(\lambda_2 - \Delta) a' \ket{\Omega}
	\end{equation}
	is in the domain of $\Delta^{1/2},$ and hence (by the properties of the Tomita operator explained in appendix \ref{app:tomita}) can be written uniquely as 
	\begin{equation}
		a_{[0, \lambda_2], n} \ket{\Omega} = \Delta^n \Theta(\lambda_2 - \Delta) a' \ket{\Omega},
	\end{equation}
	where $a_{[0, \lambda_2], n}$ is a potentially unbounded operator affiliated with $\A$.
	We want to show that it is bounded and give an explicit bound on its norm as a function of $n$.
		
	This is an engineering problem.
	What we need is a sequence of functions $f_j,$ each analytic in a neighborhood of $[0, \infty),$ and satisfying the conditions of proposition \ref{prop:analytic-mollifier}.
	We also want these functions to decay sufficiently quickly at infinity so that each for each nonnegative integer $n,$ the functions $z^n f_j(z)$ also satisfy the conditions of proposition \ref{prop:analytic-mollifier}. 
	Finally, we want this sequence of functions to approximate the step function $\Theta(\lambda_2 - \Delta)$ in the sense that we have
	\begin{equation}
		\lim_{j \to \infty} \Delta^n f_j(\Delta) a' \ket{\Omega}
			= \Delta^n \Theta(\lambda_2 - \Delta) a' \ket{\Omega}.
	\end{equation}
	This will allow us to write the following chain of equalities for any $b' \in \A'$.
	\begin{align}
		\begin{split}
			a_{[0, \lambda_2], n} b' \ket{\Omega}
				& = b' a_{[0, \lambda_2], n} \ket{\Omega} \\
				& = b' \Delta^n  \Theta(\lambda_2 - \Delta) a' \ket{\Omega} \\
				& = b' \lim_{j \to \infty} \Delta^n f_j(\Delta) a' \ket{\Omega} \\
				& = b' \lim_{j \to \infty} a_{f_j, n} \ket{\Omega} \\
				& = \lim_{j \to \infty} a_{f_j, n} b' \ket{\Omega}.
		\end{split}
	\end{align}
	This lets us write the norm as
	\begin{align}
		\begin{split}
		\lVert a_{[0, \lambda_2], n} b' \ket{\Omega} \rVert
			& = \lim_{j \to \infty} \lVert a_{f_j, n} b' \ket{\Omega} \rVert \\
			& \leq \lVert b' \ket{\Omega} \rVert \limsup_{j \to \infty} \lVert a_{f_j, n} \rVert_{\infty}.
		\end{split}
	\end{align}
	So long as our functions $f_j$ are sufficiently well behaved that this limsup is bounded, we may conclude that $a_{[0, \lambda_2], n}$ is a bounded operator.
	But by the estimate on the operator norms given in proposition \ref{prop:analytic-mollifier}, it suffices to check
	\begin{equation} \label{eq:limsup}
		\limsup_{j \to \infty} \int_{\gamma} ds\, \frac{|z|^n |f_j(z)|}{\sqrt{2 (|z| - \Re(z))}} < \infty.
	\end{equation}

	The classic sequence of analytic functions approximating a step function is the sequence of sigmoid functions
	\begin{equation}
		f_j(z) = \frac{1}{1 + e^{j (z - \lambda_2)} }.
	\end{equation}
	Indeed, it is a simple exercise using the spectral theorem to show that $\Delta^n f_j(\Delta)$ converges strongly to $\Delta^n \Theta(\lambda_2 - \Delta).$\footnote{To see this concretely, fix $\ket{\psi} \in \H,$ write
	\begin{equation}
		\lVert\Delta^n \Theta(\lambda_2 - \Delta) \ket{\psi} - \Delta^n f_j(\Delta) \ket{\psi} \rVert^2
			= \int d\mu_{\psi, \psi}(t)\, |t^n \Theta(\lambda_2 - t) - t^n f_j(t)|^2,
	\end{equation}
	and apply the dominated convergence theorem to show that this goes to zero in the limit $j \to \infty.$}
	So all we need to do is pick a good contour $\gamma$ on which we can estimate inequality \eqref{eq:limsup}.
	A good contour is provided by combining the half-lines $t + 2 \pi i$ and $t - 2 \pi i$ for $t \geq 0$ with the half-circle of radius $2 \pi$ at the origin.
	See figure \ref{fig:specific-contour}.
	
	\begin{figure}
		\centering
		\includegraphics{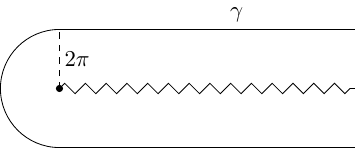}
		\caption{The specific contour used to evaluate the integral in equation \eqref{eq:limsup}, to show that operators in the tidy subspace satisfy a nice exponential bound.}
		\label{fig:specific-contour}
	\end{figure}
	
	For the portion of the contour consisting of infinite half-lines, the function $f_j(z)$ is equal to $\frac{1}{1 + e^{j (t - \lambda_2)}}.$
	For fixed $n$ and large $j,$ it converges to a step function in the range $[0, \lambda_2]$.
	Consequently, the contribution of the half-lines to the contour integral in equation \eqref{eq:limsup} gives\footnote{I have been a little cavalier here about moving the limit $\limsup_{j \to \infty}$ inside the integral, but this can be justified rigorously by a simple application of the dominated convergence theorem.}
	\begin{equation}
		\limsup_{j \to \infty} \int_{\text{half-line}} ds\, \frac{|z|^n |f_j(z)|}{\sqrt{2 (|z| - \Re(z))}}
			\leq \int_{0}^{\lambda_2} dt\, \frac{(t^2 + 4 \pi^2)^{n/2}}{\sqrt{2 (\sqrt{t^2 + 4 \pi^2} - t)}}.
	\end{equation}
	Under a fairly brutal approximation, we can bound this by
	\begin{equation}
		\limsup_{j \to \infty} \int_{\text{half-line}} ds\, \frac{|z|^n |f_j(z)|}{\sqrt{2 (|z| - \Re(z))}}
		\leq \lambda_2 \frac{(\lambda_2^2 + 4 \pi^2)^{n/2}}{\sqrt{2(\sqrt{\lambda_2^2 + 4 \pi^2} - \lambda_2)}}.
	\end{equation}
	Using another fairly brutal approximation, it is straightforward to see that the half-circle portion of the contour integral is bounded by
	\begin{equation}
		\limsup_{j \to \infty} \int_{\text{half-circle}} ds\, \frac{|z|^n |f_j(z)|}{\sqrt{2 (|z| - \Re(z))}}
			\leq (2 \pi)^{n+1} \frac{\sqrt{\pi}}{2}.
	\end{equation}
	So the full contour integral is bounded by
	\begin{align}
	\begin{split}
			& \limsup_{j \to \infty} \int_{\gamma} ds\, \frac{|z|^n |f_j(z)|}{\sqrt{2 (|z| - \Re(z))}} \\
			& \qquad \qquad = 2 \limsup_{j \to \infty} \int_{\text{half-line}} ds\, \frac{|z|^n |f_j(z)|}{\sqrt{2 (|z| - \Re(z))}} 
				+ \limsup_{j \to \infty} \int_{\text{half-circle}} ds\, \frac{|z|^n |f_j(z)|}{\sqrt{2 (|z| - \Re(z))}} \\
			&\qquad \qquad \leq 2 \lambda_2 \frac{(\lambda_2^2 + 4 \pi^2)^{n/2}}{\sqrt{2(\sqrt{\lambda_2^2 + 4 \pi^2} - \lambda_2)}}
				+ (2 \pi)^{n+1} \frac{\sqrt{\pi}}{2}.
	\end{split}
	\end{align}
	Since the norm $a_{[0, \lambda_2], n}$ is bounded by $\frac{\lVert a' \rVert_{\infty}}{2 \pi}$ times this integral, it is easy to see that there exist some constants $\alpha, \beta$ satisfying
	\begin{equation}
		\lVert a_{[0, \lambda_2], n}\rVert_{\infty}
			\leq \alpha e^{\beta n}.
	\end{equation}
\end{proof}

\begin{remark} \label{rem:symmetric-mollifier}
	By a completely symmetric argument to those given above, substituting $\Delta \leftrightarrow \Delta^{-1}$ and $\A \leftrightarrow \A'$, it is easy to see that for $\lambda_1 > 0,$ if we start with an operator $a \in \A,$ then there exists, for every integer $n \leq 0,$ an operator $a'_{[\lambda_1, \infty),n} \in \A'$ satisfying
	\begin{equation}
		a'_{[\lambda_1, \infty), n} \ket{\Omega}
			= \Delta^n \Theta(\Delta - \lambda_1) a \ket{\Omega},
	\end{equation}
	and that these operators are exponentially bounded in $|n|$.
\end{remark}

\begin{theorem}[Construction of the tidy subspace] \label{thm:tidy-construction}
	Fix $\lambda_1, \lambda_2$ satisfying $0 < \lambda_1 < \lambda_2.$
	Fix $a \in \A.$
	Then for any integer $n,$ there exist unique operators
	\begin{align}
		a_{[\lambda_1, \lambda_2], n} 
			& \in \A, \\
		a'_{[\lambda_1, \lambda_2], n}
			& \in \A'
	\end{align}
	satisfying
	\begin{equation}
		\Delta^n \Theta(\lambda_2 - \Delta) \Theta(\Delta - \lambda_1) a \ket{\Omega}
			= a_{[\lambda_1, \lambda_2], n} \ket{\Omega}
			= a'_{[\lambda_1, \lambda_2], n} \ket{\Omega}.
	\end{equation}
	Furthermore, these operators are exponentially bounded in $|n|.$
	
	The set of all operators of the form $a_{[\lambda_1, \lambda_2], n}$ is called the \textbf{tidy subspace} $\A_{\text{tidy}}.$
	The vectors $\A_{\text{tidy}} \ket{\Omega}$ are dense in $\H$.
\end{theorem}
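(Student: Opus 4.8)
The plan is to assemble the operators $a_{[\lambda_1,\lambda_2],n}$ and $a'_{[\lambda_1,\lambda_2],n}$ by composing the two mollifier constructions already in hand: theorem \ref{thm:brutal-mollifier}, which takes an operator in $\A'$ and, after truncating the \emph{large} spectral subspaces of $\Delta$, produces an operator in $\A$ and permits multiplication by $\Delta^n$ for $n\geq 0$; and remark \ref{rem:symmetric-mollifier}, its mirror image, which takes an operator in $\A$, truncates the \emph{small} spectral subspaces of $\Delta$, produces an operator in $\A'$, and permits multiplication by $\Delta^n$ for $n\leq 0$. Because the double cutoff $\Theta(\lambda_2-\Delta)\Theta(\Delta-\lambda_1)$ truncates both tails of the spectrum, the vector $w := \Theta(\lambda_2-\Delta)\Theta(\Delta-\lambda_1)a\ket{\Omega}$ has spectral support inside $[\lambda_1,\lambda_2]$; in particular $\Delta^n w$ is well defined for every integer $n$, lies in the domain of $\Delta^{1/2}$, and so by the properties of the Tomita operator recalled in appendix \ref{app:tomita} is automatically of the form $T\ket{\Omega}$ for a unique operator $T$ affiliated with $\A$. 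The whole content of the theorem is that this $T$ is bounded and lies in $\A$ (resp.\ $\A'$); once that is known, uniqueness is immediate from the separating property of $\ket{\Omega}$.

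To produce the operator in $\A$, set $n_+ = \max(n,0)$ and $n_- = \min(n,0)$, so that $n_+ + n_- = n$. Applying remark \ref{rem:symmetric-mollifier} to $a$ with power $n_-\leq 0$ and threshold $\lambda_1$ yields an operator $c'_n\in\A'$ with $c'_n\ket{\Omega} = \Delta^{n_-}\Theta(\Delta-\lambda_1)a\ket{\Omega}$ and $\lVert c'_n\rVert_\infty$ bounded by a constant (depending on $a$ and $\lambda_1$) times an exponential in $|n_-|\leq|n|$. Feeding $c'_n$ into theorem \ref{thm:brutal-mollifier} with power $n_+\geq 0$ and threshold $\lambda_2$ then yields $a_{[\lambda_1,\lambda_2],n}\in\A$ with
\[
a_{[\lambda_1,\lambda_2],n}\ket{\Omega} = \Delta^{n_+}\Theta(\lambda_2-\Delta)c'_n\ket{\Omega} = \Delta^{n_+}\Theta(\lambda_2-\Delta)\,\Delta^{n_-}\Theta(\Delta-\lambda_1)\,a\ket{\Omega} = \Delta^n w,
\]
where the middle step uses that everything in sight is a function of $\Delta$ and hence commutes. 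Since the bound in theorem \ref{thm:brutal-mollifier} is linear in $\lVert c'_n\rVert_\infty$ and exponential in $n_+\leq|n|$, we get $\lVert a_{[\lambda_1,\lambda_2],n}\rVert_\infty \leq \alpha e^{\beta|n|}$ for constants $\alpha,\beta$ depending only on $a$, $\lambda_1$, $\lambda_2$.

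To produce the operator in $\A'$ I would not mirror the previous step but instead recycle $a_{[\lambda_1,\lambda_2],n}$: apply remark \ref{rem:symmetric-mollifier} to $a_{[\lambda_1,\lambda_2],n}\in\A$ at power $0$, with threshold $\mu_1$ chosen anywhere in $(0,\lambda_1)$. This produces $a'_{[\lambda_1,\lambda_2],n}\in\A'$ with $a'_{[\lambda_1,\lambda_2],n}\ket{\Omega} = \Theta(\Delta-\mu_1)\,a_{[\lambda_1,\lambda_2],n}\ket{\Omega} = \Theta(\Delta-\mu_1)\,\Delta^n w$, and since the spectral support of $\Delta^n w$ lies in $[\lambda_1,\lambda_2]$, on which $\Theta(\Delta-\mu_1)$ acts as the identity, this equals $\Delta^n w = a_{[\lambda_1,\lambda_2],n}\ket{\Omega}$; its norm inherits the same exponential bound. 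So $a_{[\lambda_1,\lambda_2],n}$ and $a'_{[\lambda_1,\lambda_2],n}$ have identical action on $\ket{\Omega}$, as required. Finally, for density: because $\Delta$ is invertible its spectral projection onto $\{0\}$ vanishes, so by the spectral theorem (theorem \ref{thm:spectral-theorem}) the operators $\Theta(\lambda_2-\Delta)\Theta(\Delta-\lambda_1)$ converge strongly to the identity as $\lambda_1\downarrow 0$ and $\lambda_2\uparrow\infty$; hence for each fixed $a\in\A$ the tidy vectors $a_{[\lambda_1,\lambda_2],0}\ket{\Omega} = \Theta(\lambda_2-\Delta)\Theta(\Delta-\lambda_1)a\ket{\Omega}$ converge in norm to $a\ket{\Omega}$, giving $\A\ket{\Omega}\subseteq\overline{\A_{\text{tidy}}\ket{\Omega}}$, and cyclicity of $\ket{\Omega}$ then forces $\A_{\text{tidy}}\ket{\Omega}$ to be dense in $\H$.

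The genuinely hard analysis — Takesaki's resolvent estimate (lemma \ref{lem:resolvent-lemma}) and the contour-integral norm bounds behind the mollifiers — is already packaged into theorem \ref{thm:brutal-mollifier} and remark \ref{rem:symmetric-mollifier}, so the remaining work is essentially bookkeeping. The one point that needs a little care is checking that the Heaviside functions of $\Delta$ telescope correctly through the composition (the identity $n_+ + n_- = n$, and the triviality of the auxiliary $\Theta(\Delta-\mu_1)$ cutoff on a vector already supported in $[\lambda_1,\lambda_2]$), and that the exponential bounds survive being chained together; a reader might also momentarily worry whether the affiliated operator supplied by appendix \ref{app:tomita} coincides with the bounded operator output by the mollifier construction, but this is forced by the separating property of $\ket{\Omega}$.
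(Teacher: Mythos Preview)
Your proof is correct and essentially matches the paper's: your $n_\pm$ decomposition is just a unified way of writing the paper's two cases ($n\geq 0$: remark~\ref{rem:symmetric-mollifier} at power $0$ then theorem~\ref{thm:brutal-mollifier} at power $n$; $n<0$: remark at power $n$ then theorem at power $0$), and the density argument is identical. Your recycling trick for producing $a'_{[\lambda_1,\lambda_2],n}\in\A'$ --- reapplying remark~\ref{rem:symmetric-mollifier} at an innocuous lower threshold $\mu_1<\lambda_1$ --- is a clean explicit filling-in of a step the paper compresses into the single sentence ``Applying remark~\ref{rem:symmetric-mollifier} and then theorem~\ref{thm:brutal-mollifier} proves the first part of the theorem.''
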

\begin{proof}
	If we have $n \geq 0,$ then we write
	\begin{equation}
		\Delta^n \Theta(\lambda_2 - \Delta) \Theta(\Delta - \lambda_1)  a \ket{\Omega}
			= (\Delta^n \Theta(\lambda_2 - \Delta)) \Theta(\Delta - \lambda_1) a \ket{\Omega}.
	\end{equation}
	If we have $n < 0,$ then we write
	\begin{equation}
		\Delta^n \Theta(\lambda_2 - \Delta) \Theta(\Delta - \lambda_1) a \ket{\Omega}
			= \Theta(\lambda_2 - \Delta) (\Delta^n \Theta(\Delta - \lambda_1)) a \ket{\Omega}
	\end{equation}
	Applying remark \ref{rem:symmetric-mollifier} and then theorem \ref{thm:brutal-mollifier} proves the first part of the theorem.
	
	Density of $\A_{\text{tidy}} \ket{\Omega}$ in $\A\ket{\Omega}$ follows from the limit
	\begin{equation}
		a \ket{\Omega}
			= \lim_{n \to \infty} \Theta(n - \Delta) \Theta\left(\Delta - \frac{1}{n} \right) a \ket{\Omega},
	\end{equation}
	which is easy to show using the spectral theorem and invertibility of $\Delta.$
\end{proof}

\begin{remark}
	The last theorem of this section tells us that $\A_{\text{tidy}} \ket{\Omega}$ is a core for any real power of the modular operator.
	(See definition \ref{def:core} for the definition of a core.)
	This will be valuable in the next subsection for two reasons.
	First, we will use it to apply theorem \ref{thm:operator-continuation}, which lets us constrain operators of the form $\Delta^{n} a \Delta^{-n}$ in terms of their action on a core of $\Delta^{-n}.$
	Second, we will use it to show the identity $\A_{\text{tidy}}'' = \A,$ which will allow us to obtain the general case of Tomita's theorem after proving that the theorem holds for tidy operators.
\end{remark}

\begin{prop} \label{prop:core}
	For any real number  $x,$ the space $\A_{\text{tidy}} \Omega$ is a core for $\Delta^{x}.$
\end{prop}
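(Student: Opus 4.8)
The plan is to use the graph-density characterization of a core from Remark \ref{rem:core-graph}: to show $\A_{\text{tidy}} \ket{\Omega}$ is a core for $\Delta^x$, it suffices to show that the set $\{\ket{\phi} \oplus \Delta^x \ket{\phi}\}_{\ket{\phi} \in \A_{\text{tidy}} \ket{\Omega}}$ is dense in $\text{Graph}(\Delta^x) \subseteq \H \oplus \H$. Since $\A \ket{\Omega}$ is dense in $\H$ and $\Delta^x$ is closed with domain $\D_{\Delta^x}$, and since vectors $a\ket{\Omega}$ for $a \in \A$ are certainly not all in $\D_{\Delta^x}$, I would instead work from an arbitrary vector $\ket{\phi} \in \D_{\Delta^x}$ and approximate it, together with $\Delta^x \ket{\phi}$, by vectors of the form $\Theta(\lambda_2 - \Delta)\Theta(\Delta - \lambda_1) a \ket{\Omega}$.

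First I would reduce to the case $\ket{\phi} = a\ket{\Omega}$ for $a \in \A$ using that $\A\ket{\Omega}$ is dense: given $\ket{\phi} \in \D_{\Delta^x}$ and $\epsilon > 0$, pick the spectral truncation $P_{[1/N, N]}$ (the projection onto the spectral range $[1/N, N]$ of $\Delta$); then $P_{[1/N,N]} \ket{\phi} \to \ket{\phi}$ and $\Delta^x P_{[1/N,N]} \ket{\phi} = P_{[1/N,N]} \Delta^x \ket{\phi} \to \Delta^x \ket{\phi}$ as $N \to \infty$, by the spectral theorem and invertibility of $\Delta$. So it suffices to approximate vectors of the form $P_{[1/N,N]}\ket{\phi}$ in graph norm by tidy vectors. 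Writing $\Theta_{[\lambda_1,\lambda_2]}(\Delta) = \Theta(\lambda_2 - \Delta)\Theta(\Delta - \lambda_1)$, I note that $\Theta_{[\lambda_1,\lambda_2]}(\Delta)$ acts as the identity on the range of $P_{[1/N,N]}$ whenever $\lambda_1 \le 1/N$ and $\lambda_2 \ge N$; and $\Theta_{[\lambda_1,\lambda_2]}(\Delta)$ commutes with $\Delta^x$ on the relevant domains. Now approximate $P_{[1/N,N]}\ket{\phi}$ by $a\ket{\Omega}$ for suitable $a \in \A$ (density of $\A\ket{\Omega}$), and apply $\Theta_{[\lambda_1,\lambda_2]}(\Delta)$ with $\lambda_1 < 1/N$, $\lambda_2 > N$: since this operator has operator norm $1$ and is the identity on $P_{[1/N,N]}\ket{\phi}$, we get
\begin{equation}
	\lVert \Theta_{[\lambda_1,\lambda_2]}(\Delta) a \ket{\Omega} - P_{[1/N,N]}\ket{\phi} \rVert \le \lVert a\ket{\Omega} - P_{[1/N,N]}\ket{\phi}\rVert,
\end{equation}
which is small.

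For the second graph coordinate I need $\Delta^x \Theta_{[\lambda_1,\lambda_2]}(\Delta) a\ket{\Omega} \to \Delta^x P_{[1/N,N]}\ket{\phi}$. Here the subtlety is that $\Delta^x \Theta_{[\lambda_1,\lambda_2]}(\Delta)$ is a \emph{bounded} operator (its symbol $t^x \Theta_{[\lambda_1,\lambda_2]}(t)$ is bounded on $[0,\infty)$ by $\max(\lambda_1^x, \lambda_2^x)$), so
\begin{equation}
	\lVert \Delta^x \Theta_{[\lambda_1,\lambda_2]}(\Delta)(a\ket{\Omega} - P_{[1/N,N]}\ket{\phi}) \rVert \le \max(\lambda_1^x, \lambda_2^x) \lVert a\ket{\Omega} - P_{[1/N,N]}\ket{\phi}\rVert,
\end{equation}
and separately $\Delta^x \Theta_{[\lambda_1,\lambda_2]}(\Delta) P_{[1/N,N]}\ket{\phi} = \Delta^x P_{[1/N,N]}\ket{\phi}$ since $\Theta_{[\lambda_1,\lambda_2]}(\Delta)$ is the identity on that spectral subspace. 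So both graph coordinates are controlled, and $\Theta_{[\lambda_1,\lambda_2]}(\Delta) a\ket{\Omega} = a_{[\lambda_1,\lambda_2],0}\ket{\Omega}$ lies in $\A_{\text{tidy}}\ket{\Omega}$ by Theorem \ref{thm:tidy-construction}. Chaining the estimates with an $\epsilon/3$ argument (one $\epsilon/3$ for the $N$-truncation of $\ket\phi$, one for approximating by $a\ket\Omega$, one absorbed by the norm bounds above with $\lambda_1, \lambda_2$ chosen first) shows $\A_{\text{tidy}}\ket{\Omega}$ is graph-dense in $\D_{\Delta^x}$, hence a core.

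The main obstacle I anticipate is bookkeeping the order of quantifiers: $\max(\lambda_1^x, \lambda_2^x)$ depends on $\lambda_1, \lambda_2$, which in turn are constrained by $N$ (we need $\lambda_1 < 1/N < N < \lambda_2$), and $N$ is fixed by $\ket\phi$ and $\epsilon$ before $a$ is chosen — so one must fix $N$, then $\lambda_1, \lambda_2$, then choose $a$ close enough that the factor $\max(\lambda_1^x, \lambda_2^x)$ times the approximation error is still below $\epsilon/3$. This is routine once the dependencies are laid out in the right order, but it is the only place where care is genuinely required; everything else is a direct application of the spectral theorem and Theorem \ref{thm:tidy-construction}.
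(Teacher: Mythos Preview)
Your proof is correct, and it takes a genuinely different route from the paper's own argument. Both proofs start from the graph characterization of a core (remark \ref{rem:core-graph}), but from there they diverge. The paper argues by orthogonality: it supposes some $\ket{\psi}\oplus\Delta^x\ket{\psi}$ in the graph is orthogonal to every $a\ket{\Omega}\oplus\Delta^x a\ket{\Omega}$ with $a\in\A_{\text{tidy}}$, rewrites this as $\langle(1+\Delta^{2x})a\Omega\,|\,\psi\rangle=0$, and then shows that $(1+\Delta^{2x})\A_{\text{tidy}}\ket{\Omega}$ is dense in $\H$ by observing that $(1+\Delta^{2x})\Theta_{[\lambda_1,\lambda_2]}(\Delta)$ is bounded and boundedly invertible on each spectral subspace $[\lambda_1,\lambda_2]$. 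Your proof instead builds the approximation directly: truncate $\ket{\phi}$ spectrally, approximate the truncation by some $a\ket{\Omega}$ via cyclicity, and then re-truncate with $\Theta_{[\lambda_1,\lambda_2]}(\Delta)$ to land in $\A_{\text{tidy}}\ket{\Omega}$, using boundedness of $\Delta^x\Theta_{[\lambda_1,\lambda_2]}(\Delta)$ to control the second graph coordinate. Your approach is more hands-on and makes the approximation completely explicit; the paper's approach is slicker once one spots the $(1+\Delta^{2x})$ trick, and avoids the quantifier bookkeeping you correctly flag at the end. Both rely on exactly the same underlying ingredients: density of $\A\ket{\Omega}$, invertibility of $\Delta$, and the spectral calculus.
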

\begin{proof}
	We will use the characterization of a core given in remark \ref{rem:core-graph}.
	Namely, we will show that $\A_{\text{tidy}} \ket{\Omega}$ is a core for $\Delta^x$ by showing that vectors of the form $a \ket{\Omega} \oplus \Delta^x a \ket{\Omega},$ with $a \in \A_{\text{tidy}},$ are dense in the graph of $\Delta^x.$
	
	Suppose that $\ket{\psi}$ is a vector in the domain of $\Delta^{x}$ such that $\psi \oplus \Delta^{x} \psi$ is orthogonal to all such vectors.
	I.e., suppose that for all $a \in \A_{\text{tidy}},$ we have
	\begin{align}
		\begin{split}
			0
				& = (\langle a \Omega | \oplus \langle \Delta^x a \Omega |) (|\psi\rangle \oplus \Delta^x |\psi \rangle) \\
				& = \langle a \Omega | \psi \rangle + \langle \Delta^{2x} a \Omega | \psi \rangle \\
				& = \langle (1 + \Delta^{2x}) a \Omega | \psi \rangle.
		\end{split}
	\end{align}
	Our goal is to show that whenever this expression is satisfied, the vector $\ket{\psi}$ vanishes.
	It suffices to show that the space $(1 + \Delta^{2x}) \A_{\text{tidy}} \ket{\Omega}$ is dense in Hilbert space.
	
	To see this, note that per the construction of $\A_{\text{tidy}}$ from theorem \ref{thm:tidy-construction}, each $a \in \A_{\text{tidy}}$ satisfies an equation like
	\begin{equation}
		a \Omega
		= \Theta(\lambda_2 - \Delta) \Theta(\Delta - \lambda_1) b \Omega
	\end{equation}
	for some $0 < \lambda_1 < \lambda_2$ and some $b \in \A.$
	The operator
	\begin{equation}
		(1 + \Delta^{2x}) \Theta(\lambda_2 - \Delta) \Theta(\Delta - \lambda_1)
	\end{equation}
	is bounded, injective, and invertible on the spectral subspace of $\Delta$ corresponding to the range $[\lambda_1, \lambda_2].$
	It therefore maps any dense subset of $\H$ into a dense subspace of that spectral subspace; in particular, the space
	\begin{equation}
		(1 + \Delta^{2x}) \Theta(\lambda_2 - \Delta) \Theta(\Delta - \lambda_1) \mathcal{A} \ket{\Omega}
	\end{equation}
	is dense in the spectral subspace of $\Delta$ for the range $[\lambda_1, \lambda_2]$ by the assumption that $\mathcal{A} \ket{\Omega}$ is dense in $\H$.
	Taking $\lambda_1 \to 0$ and $\lambda_2 \to \infty$ shows that $(1 + \Delta^{2x})\A_{\text{tidy}} \ket{\Omega}$ is dense in the spectral subspace of $\Delta$ for the range $(0, \infty)$, which by invertibility of $\Delta$ is equal to all of $\H.$
\end{proof}

\subsection{Finishing the proof: analytic extensions of modular flow}
\label{sec:main-proof}

\begin{remark}
	\label{rem:tidy-properties}
	We will now forget all of the details of the construction of the tidy subspace from the previous subsection, and use only the following facts.
	\begin{itemize}
		\item There is a subspace $\A_{\text{tidy}}$ of $\A$ such that $\A_{\text{tidy}} \ket{\Omega}$ is dense in $\H,$ and is a core for each operator $\Delta^x.$
		\item Each operator $a \in \A_{\text{tidy}}$ has the property that $a \ket{\Omega}$ is supported in some spectral subspace of $\Delta$ that is a closed interval in $(0, \infty).$
		\item Each vector $a \ket{\Omega}$ can be written as $a' \ket{\Omega}$ for some $a' \in \A'.$
		\item Each vector $\Delta^n a \ket{\Omega}$ for integer $n$ can be written as $a_n \ket{\Omega} = a'_n \ket{\Omega}$ for some $a_n \in \A,$ $a'_n \in \A.$
		There exist constants $\alpha, \beta$ satisfying
		\begin{align}
			\lVert a_n \rVert_{\infty}
				& \leq \alpha e^{\beta |n|}.
		\end{align}
	\end{itemize}

	There is a piece of notation that was introduced in this list which we will use frequently in what follows, so we reemphasize it in the following definition.
\end{remark}
\begin{definition}
	For an operator $a$ in the tidy subspace, we will denote by $a'$ the operator in $\A'$ satisfying $a \ket{\Omega} = a' \ket{\Omega}.$
	We will denote by $a_n$ and $a'_n$ the operators in $\A$ and $\A'$ satisfying
	\begin{equation}
		\Delta^n a \ket{\Omega} = a_n \ket{\Omega} = a_n'\ket{\Omega}
	\end{equation}
\end{definition}

\begin{remark}
	We will now proceed to show that for $a$ in the tidy subspace, the function $it \mapsto \Delta^{-it} a \Delta^{it}$ admits an entire analytic continuation that is exponentially bounded at infinity.
	We will need one lemma, which tells us how to think of the operators $a^{\dagger}$ for $a \in \A_{\text{tidy}}.$
	We will then show the analytic continuations of modular flow exist, and reach the conclusion of Tomita's theorem by applying Carlson's theorem.
\end{remark}

\begin{lemma}[Dagger-Ladder Lemma] \label{lem:dagger-ladder}
	Let $a$ be in the tidy subspace $\A_{\text{tidy}}$.
	Then we have
	\begin{equation}
		a_n^{\dagger} \ket{\Omega} = (a_{n+1}')^{\dagger} \ket{\Omega}.
	\end{equation}
\end{lemma}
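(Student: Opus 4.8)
The plan is to show that both sides of the claimed identity are equal to the single vector $J\Delta^{n+1/2}a\ket{\Omega}$, where $J$ is the modular conjugation. The ingredients are the basic properties of the Tomita operator $S = J\Delta^{1/2}$ from appendix \ref{app:tomita} --- that $S$ implements $T\ket{\Omega}\mapsto T^\dagger\ket{\Omega}$ for $T\in\A$, and that $S^\dagger$ is the closure of the antilinear map $b'\ket{\Omega}\mapsto(b')^\dagger\ket{\Omega}$ on $\A'\ket{\Omega}$, with $S^\dagger = J\Delta^{-1/2}$ --- together with the feature of tidy operators recorded in remark \ref{rem:tidy-properties}: the vector $a\ket{\Omega}$ is supported in a closed spectral interval $[\lambda_1,\lambda_2]\subset(0,\infty)$ of $\Delta$. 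This last property is what makes the domain manipulations harmless: on the spectral subspace for $[\lambda_1,\lambda_2]$ the operator $\Delta$ restricts to a bounded, boundedly invertible operator, so $\Delta^s a\ket{\Omega}$ is a well-defined vector for every real $s$, it is again supported in $[\lambda_1,\lambda_2]$, and $\Delta^s(\Delta^t a\ket{\Omega}) = \Delta^{s+t}a\ket{\Omega}$ by the spectral theorem. In particular $a\ket{\Omega}$ lies in the domain of $\Delta^{n+1/2}$, and $a_n\ket{\Omega} = \Delta^n a\ket{\Omega}$ lies in the domain of $\Delta^{1/2}$, so the applications of $S$ and $S^\dagger$ below are legitimate.

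For the left-hand side: since $a_n\in\A$, we have $a_n^\dagger\ket{\Omega} = S\,a_n\ket{\Omega}$, and writing $S = J\Delta^{1/2}$ with $a_n\ket{\Omega} = \Delta^n a\ket{\Omega}$ gives
\begin{equation*}
	a_n^\dagger\ket{\Omega} = J\Delta^{1/2}\Delta^n a\ket{\Omega} = J\Delta^{n+1/2}a\ket{\Omega}.
\end{equation*}
For the right-hand side: since $a_{n+1}'\in\A'$, the vector $a_{n+1}'\ket{\Omega}$ lies in the domain of $S^\dagger$ and $(a_{n+1}')^\dagger\ket{\Omega} = S^\dagger a_{n+1}'\ket{\Omega}$; writing $S^\dagger = J\Delta^{-1/2}$ with $a_{n+1}'\ket{\Omega} = \Delta^{n+1}a\ket{\Omega}$ gives
\begin{equation*}
	(a_{n+1}')^\dagger\ket{\Omega} = J\Delta^{-1/2}\Delta^{n+1}a\ket{\Omega} = J\Delta^{n+1/2}a\ket{\Omega}.
\end{equation*}
Comparing the two displays proves the lemma.

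There is no deep obstacle here; the only thing requiring care is the bookkeeping just described --- one must be sure that $a\ket{\Omega}$ is in the domain of $\Delta^{n+1/2}$, so that the factorizations $\Delta^{1/2}\Delta^n = \Delta^{-1/2}\Delta^{n+1} = \Delta^{n+1/2}$ hold as operator identities on this vector, and that $a_n\ket{\Omega}$ and $a_{n+1}'\ket{\Omega}$ lie in the domains of $S$ and $S^\dagger$ respectively. All of this follows from the compact-spectral-support property of tidy vectors and the properties of the Tomita operator collected in appendix \ref{app:tomita}; in particular, no input from Tomita's theorem itself is needed, so the lemma is genuinely available at this stage of the argument.
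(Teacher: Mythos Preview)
Your proof is correct. Both approaches rest on the same underlying identity --- the polar decomposition $S = J\Delta^{1/2}$ and its adjoint $S^\dagger = J\Delta^{-1/2}$ --- but they package it differently. The paper's proof works at the level of matrix elements: it pairs both sides against an arbitrary $b\ket{\Omega}$ with $b\in\A$ and invokes the fundamental swap identity \eqref{eq:exact-K-swap}, $\langle b^\dagger\Omega|\Delta a_n\Omega\rangle = \langle a_n^\dagger\Omega|b\Omega\rangle$, then appeals to density of $\A\ket{\Omega}$. Your proof instead computes each side directly as the single vector $J\Delta^{n+1/2}a\ket{\Omega}$, applying $S$ to $a_n\ket{\Omega}$ and $S^\dagger$ to $a_{n+1}'\ket{\Omega}$. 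Your version is arguably more transparent, since it names the common vector explicitly rather than leaving it implicit in a matrix-element calculation; it also makes explicit the domain bookkeeping (compact spectral support of $a\ket{\Omega}$) that the paper's proof uses silently when writing $a_{n+1}\ket{\Omega} = \Delta a_n\ket{\Omega}$. The paper's version has the minor advantage of not needing to name $J$ at all, staying entirely within the algebraic identity \eqref{eq:exact-K-swap}.
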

\begin{proof}
	Fix arbitrary $b \in \A$, and write
	\begin{align}
	\begin{split}
		\langle (a_{n+1}')^{\dagger} \Omega | b \Omega \rangle
			& = \langle b^{\dagger} \Omega | a_{n+1}' \Omega \rangle \\
			&= \langle b^{\dagger} \Omega | a_{n+1} \Omega \rangle \\
			& = \langle b^{\dagger} \Omega | \Delta a_n \Omega \rangle \\
			& = \langle a_n^{\dagger} \Omega | b \Omega \rangle.
	\end{split}
	\end{align}
	Since vectors of the form $b |\Omega \rangle$ are dense in $\H,$ this proves the lemma.
\end{proof}

\begin{theorem}
	If $a$ is in the tidy subspace $\A_{\text{tidy}},$ then for integer values of $n,$ the operator $\Delta^{-n} a \Delta^{n}$ is densely defined and bounded on its domain, with closure $a_{-n}.$
\end{theorem}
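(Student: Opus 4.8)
The plan is to verify the hypotheses of Theorem~\ref{thm:operator-continuation} with $T = \Delta$ and $w = -n$, and then to identify the closure by a density argument. By that theorem it suffices to show that $\Delta^{-n} a \Delta^{n}$ is defined and bounded on a core for $T^{-w} = \Delta^{n}$, and by Proposition~\ref{prop:core} the space $\A_{\text{tidy}} \ket{\Omega}$ is such a core; so I would fix $c \in \A_{\text{tidy}}$ and control $\Delta^{-n} a \Delta^{n}$ on $c \ket{\Omega}$. Using the facts recorded in Remark~\ref{rem:tidy-properties}, write $\Delta^{n} c \ket{\Omega} = c_{n} \ket{\Omega} = c_{n}' \ket{\Omega}$ with $c_{n} \in \A$ and $c_{n}' \in \A'$; since $a \in \A$ commutes with $c_{n}' \in \A'$,
\begin{equation}
	a \Delta^{n} c \ket{\Omega} = a c_{n} \ket{\Omega} = a c_{n}' \ket{\Omega} = c_{n}'\, a \ket{\Omega}.
\end{equation}

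The point where tidiness of both $a$ and $c$ is used is the following: $a \ket{\Omega}$ is supported in a spectral subspace of $\Delta$ that is a closed interval in $(0, \infty)$, and so is $c_{n}' \ket{\Omega} = \Delta^{n} c \ket{\Omega}$ (applying $\Delta^{n}$ does not leave the spectral window of $c\ket{\Omega}$). I would then show that an element of $\A'$ whose vector lives in a compact spectral window of $\Delta$ carries compact spectral windows of $\Delta$ into compact spectral windows of $\Delta$ (the cutoffs of the image being, roughly, products of the original ones). It follows that $c_{n}' a \ket{\Omega}$ lives in a compact spectral window of $\Delta$ bounded away from $0$ and $\infty$, hence lies in the domain of $\Delta^{-n}$, with $\lVert \Delta^{-n} a \Delta^{n} c \ket{\Omega} \rVert$ bounded by a constant depending only on $n$, $\lVert a \rVert_\infty$, $\lVert c_n\rVert_\infty$ and the cutoffs, times $\lVert c \ket{\Omega} \rVert$. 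Theorem~\ref{thm:operator-continuation} then gives that $\Delta^{-n} a \Delta^{n}$ is densely defined and bounded on its domain, so its closure is a bounded operator on all of $\H$. To identify the closure with $a_{-n}$ it then suffices to check agreement on the dense set $\A_{\text{tidy}} \ket{\Omega}$: unwinding the commutations above together with the defining relation $a_{-n} \ket{\Omega} = \Delta^{-n} a \ket{\Omega}$ (and the vector identity $\Delta^{-n} c_{n}' \ket{\Omega} = c' \ket{\Omega}$) gives $\Delta^{-n} a \Delta^{n} c \ket{\Omega} = a_{-n} c \ket{\Omega}$, the case $c = 1$ being just $\Delta^{-n} a \Delta^{n} \ket{\Omega} = \Delta^{-n} a \ket{\Omega}$ since $\Delta \ket{\Omega} = \ket{\Omega}$.

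The main obstacle is the spectral-localization step, namely showing that $a \Delta^{n} c \ket{\Omega}$ stays in the domain of $\Delta^{-n}$ rather than only in the domain of $\Delta^{-1/2}$. Membership in $\D_{\Delta^{1/2}}$ is automatic from the structure of the Tomita operator, but it is strictly weaker than what is needed once $|n| \geq 1$, so one must genuinely exploit that tidy vectors occupy compact spectral windows of $\Delta$. An alternative that isolates this difficulty: first establish the case $|n| = 1$ for all tidy operators in $\A$ and in $\A'$, and then obtain general $n$ by induction on $|n|$ via $\Delta^{-(n+1)} a \Delta^{n+1} = \Delta^{-1}\bigl(\Delta^{-n} a \Delta^{n}\bigr)\Delta$ together with the fact that $a_{-n}$ inherits all the properties of Remark~\ref{rem:tidy-properties} (so the single-step case applies to it, and $(a_{-n})_{-1} = a_{-(n+1)}$). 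Beyond that estimate, the reduction through Theorem~\ref{thm:operator-continuation} and the identification of the closure are routine bookkeeping with Remark~\ref{rem:tidy-properties} and the Dagger-Ladder Lemma (Lemma~\ref{lem:dagger-ladder}).
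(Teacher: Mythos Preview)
Your overall framework matches the paper exactly: you invoke Theorem~\ref{thm:operator-continuation} with $T=\Delta$, use Proposition~\ref{prop:core} to take $\A_{\text{tidy}}\ket{\Omega}$ as a core for $\Delta^{n}$, and aim to show $\Delta^{-n}a\Delta^{n}c\ket{\Omega}=a_{-n}c\ket{\Omega}$ for tidy $c$. That much is fine.

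The genuine gap is your spectral-localization step. You assert that an operator $c_n'\in\A'$ with $c_n'\ket{\Omega}$ in a compact spectral window of $\Delta$ must carry compact spectral windows to compact spectral windows (with multiplicative cutoffs). This is precisely the additivity/multiplicativity of Arveson spectral subspaces for the modular automorphism group --- but that theory presupposes that $\Delta^{it}(\cdot)\Delta^{-it}$ \emph{is} an automorphism group of $\A'$, which is Tomita's theorem. At this point in the argument there is no a priori link between the spectral support of $c_n'\ket{\Omega}$ and how $c_n'$ moves other vectors through the spectral decomposition of $\Delta$; knowing only that one vector is localized says nothing about the operator's action on the rest of $\H$. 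So the step is circular as stated, and your ``unwinding'' for the identification $\Delta^{-n}c_n'a\ket{\Omega}=a_{-n}c\ket{\Omega}$ has the same problem: you cannot commute $\Delta^{-n}$ past $c_n'$, and the vector identity $\Delta^{-n}c_n'\ket{\Omega}=c'\ket{\Omega}$ does not help when the input is $a\ket{\Omega}$ rather than $\ket{\Omega}$.

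The paper resolves this without any spectral-subspace multiplicativity by peeling off one factor of $\Delta^{-1}$ at a time via the Tomita operator. One writes $a\Delta^{n}b\ket{\Omega}=b_n'a'\ket{\Omega}\in\A'\ket{\Omega}\subseteq\D_{S^{\dagger}}$, computes $S^{\dagger}$ of it, applies the Dagger--Ladder Lemma to rewrite the result as an element of $\A\ket{\Omega}\subseteq\D_{S}$, and then uses $SS^{\dagger}=\Delta^{-1}$ to get
\[
\Delta^{-1}a\Delta^{n}b\ket{\Omega}=a_{-1}b_{n-1}\ket{\Omega}.
\]
Iterating $n$ times yields $\Delta^{-n}a\Delta^{n}b\ket{\Omega}=a_{-n}b\ket{\Omega}$ directly, establishing both domain membership and the value simultaneously. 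Your inductive ``alternative'' is essentially this same idea, but you still owe the base case $|n|=1$, and that base case is exactly the $S/S^{\dagger}$ computation above --- the Dagger--Ladder Lemma is doing the real work, not bookkeeping.
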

\begin{proof}
	First note that for any $b$ in the tidy subspace, the vector $b \ket{\Omega}$ is in the domain of $\Delta^n.$
	If we can show that $a \Delta^n b \ket{\Omega}$ is in the domain of $\Delta^{-n},$ then we will have proved that $b \ket{\Omega}$ is in the domain of $\Delta^{-n} a \Delta^n$.
	We will show that this is the case, and that we have
	\begin{equation}
		\Delta^{-n} a \Delta^n b \ket{\Omega} = a_{-n} b \ket{\Omega}.
	\end{equation}
	Since proposition \ref{prop:core} tells us that $\A_{\text{tidy}} \ket{\Omega}$ is a core for $\Delta^n,$ we can then invoke theorem \ref{thm:operator-continuation} to obtain the desired identity
	\begin{equation}
		\overline{\Delta^{-n} a \Delta^n} = a_{-n}.
	\end{equation}
	
	To proceed, note that we may write
	\begin{align}
		\begin{split}
		a \Delta^n b \ket{\Omega}
			& = a b_n \ket{\Omega} \\
			& = a b_n' \ket{\Omega} \\
			& = b_n' a \ket{\Omega} \\
			& = b_n' a' \ket{\Omega}.
		\end{split}
	\end{align}
	so $a \Delta^n b \ket{\Omega}$ is in the domain of the adjoint of the Tomita operator $S^{\dagger},$ and we have
	\begin{equation}
		S^{\dagger} a \Delta^n b \ket{\Omega}
			= (a')^{\dagger} (b_n')^{\dagger} \ket{\Omega}.
	\end{equation}
	Now applying lemma \ref{lem:dagger-ladder}, we have
	\begin{align}
	\begin{split}
		S^{\dagger} a \Delta^n b \ket{\Omega}
		& = (a')^{\dagger} (b_{n-1})^{\dagger} \ket{\Omega} \\
		& = (b_{n-1})^{\dagger} (a')^{\dagger} \ket{\Omega} \\
		& = (b_{n-1})^{\dagger} (a_{-1})^{\dagger} \ket{\Omega}.
	\end{split}
	\end{align}
	So this vector is in the domain of $S$.
	Using the identity $S S^{\dagger} = \Delta^{-1}$, it follows that $a \Delta^n b \ket{\Omega}$ is in the domain of $\Delta^{-1},$ and we have
	\begin{align}
		\begin{split}
		\Delta^{-1} a \Delta^n b \ket{\Omega}
			& = S (b_{n-1})^{\dagger} (a_{-1})^{\dagger} \ket{\Omega} \\
			& = a_{-1} b_{n-1} \ket{\Omega}.
		\end{split}
	\end{align}
	Iterating this procedure $n$ times, we see that $a \Delta^n b \ket{\Omega}$ is in the domain of $\Delta^{-n},$ and that we have
	\begin{align}
	\begin{split}
		\Delta^{-n} a \Delta^n b \ket{\Omega}
		& = a_{-n} b \ket{\Omega},
	\end{split}
	\end{align}
	as desired.
\end{proof}

\begin{corollary}
	For $a$ in the tidy subspace $\A_{\text{tidy}}$, the function
	\begin{equation}
		F_a(z) = \bar{\Delta^{-z} a \Delta^{z}}
	\end{equation}
	is norm analytic in the entire complex plane.
	For integer values of $z,$ the function $F_a$ is valued in $\A$.
\end{corollary}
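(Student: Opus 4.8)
The plan is to obtain both assertions directly from the theorem just proved, together with theorem \ref{thm:operator-continuation}. Recall that the preceding theorem tells us that for every integer $n$ the operator $\Delta^{-n} a \Delta^{n}$ is densely defined and bounded on its domain, with closure $a_{-n} \in \A$; relabelling $n \mapsto -n$ gives the same statement for $\Delta^{n} a \Delta^{-n}$, whose closure is $a_{n} \in \A$. Thus for every integer $m$, both $\Delta^{m} a \Delta^{-m}$ and $\Delta^{-m} a \Delta^{m}$ are densely defined and bounded on their domains --- exactly the hypothesis needed to feed into theorem \ref{thm:operator-continuation}.

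It is convenient to set $G(z) = \bar{\Delta^{z} a \Delta^{-z}}$, so that $F_a(z) = G(-z)$. For each positive integer $N$, I would apply theorem \ref{thm:operator-continuation} with $T = \Delta$ and base point $w = -N$: the hypothesis is precisely that $\Delta^{-N} a \Delta^{N}$ is densely defined and bounded on its domain, which holds by the previous paragraph, so the theorem yields that $G$ is norm analytic on the open strip $-N < \Re(z) < 0$ and strongly continuous on its closure, with $G(it) = \Delta^{it} a \Delta^{-it}$ (bounded, being a unitary conjugate of $a$). Applying the theorem instead with base point $w = N$, using that $\Delta^{N} a \Delta^{-N}$ is densely defined and bounded, gives the mirror statement on $0 \le \Re(z) \le N$. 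All of these statements concern the single function $z \mapsto \bar{\Delta^{z} a \Delta^{-z}}$, so letting $N \to \infty$ shows that $G$ is a well-defined $\B(\H)$-valued function on all of $\comps$, norm analytic on $\comps \setminus i\reals$, and strongly continuous everywhere, with the two one-sided boundary values on the imaginary axis agreeing (each equals $\Delta^{it} a \Delta^{-it}$).

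The one remaining point --- and, once the preceding theorem is in hand, the only genuinely nontrivial step --- is to upgrade this to norm analyticity across the imaginary axis itself, so that $G$ is entire. I would do this by a Morera-type gluing. For any $\ket{\xi}, \ket{\psi} \in \H$ the scalar function $z \mapsto \bra{\xi} G(z) \ket{\psi}$ is continuous on $\comps$ and holomorphic on $\comps \setminus i\reals$, hence holomorphic on all of $\comps$ by the classical fact that a function continuous on an open set and holomorphic off a line in that set is holomorphic on the whole set; thus $G$ is weakly analytic on $\comps$. It is also locally norm-bounded: on any compact set, strong continuity makes $\{G(z)\ket{\psi}\}$ bounded for each fixed $\ket{\psi}$, whence $\{G(z)\}$ is uniformly bounded by the uniform boundedness principle. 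A weakly analytic, locally norm-bounded $\B(\H)$-valued function is norm analytic --- this can be seen, e.g., via the weak Cauchy integral formula over a small circle, which is then a norm-convergent Bochner integral from which norm-convergence of the difference quotients follows, or by invoking the general equivalence of weak and norm analyticity for bounded-operator-valued functions --- so $G$, and therefore $F_a(z) = G(-z)$, is norm analytic on all of $\comps$.

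Finally, for the integer values: $F_a(n) = G(-n) = \bar{\Delta^{-n} a \Delta^{n}} = a_{-n}$, which lies in $\A$ by the construction of the tidy subspace, so $F_a$ is $\A$-valued at every integer. The main obstacle is the cross-the-line gluing of the third paragraph; everything else is a direct invocation of theorem \ref{thm:operator-continuation} and the theorem immediately preceding this corollary.
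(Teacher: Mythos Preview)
Your proposal is correct and takes essentially the same approach as the paper: invoke the preceding theorem for the integer values, apply theorem \ref{thm:operator-continuation} on strips to either side of the imaginary axis, and then glue across the axis via a Morera-type argument. The paper's proof is terser---it simply asserts that ``a simple argument using Morera's theorem'' handles the gluing---whereas you spell out the passage through weak analyticity plus local norm-boundedness via the uniform boundedness principle; but the skeleton is identical.
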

\begin{proof}
	The preceding theorem tells us that at the integers, we have $F_a(n) = a_{-n},$ so $F_{a}(n)$ is in $\A$.
	
	Analyticity of $F_a(z)$ follows almost immediately from theorem \ref{thm:operator-continuation}.
	Technically that theorem only guarantees that this function is analytic in the right half-plane and the left half-plane, and strongly continuous on the imaginary axis.
	But a simple argument using Morera's theorem shows that the function is analytic on the imaginary axis as well.
	(Anyway, this isn't so important, since we will eventually be applying Carlson's theorem, and Carlson's theorem works just fine for functions analytic in a half-plane.)
\end{proof}

\begin{theorem}[Tomita's theorem on the tidy subspace] \label{thm:tidy-tomita}
	Let $a$ be an operator in the tidy subspace $\A_{\text{tidy}}$
	Then for any $t,$ the operator $\Delta^{-it} a \Delta^{it}$ is in $\A.$
\end{theorem}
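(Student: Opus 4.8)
The plan is to reduce Tomita's theorem, for $a$ in the tidy subspace, to the vanishing of the commutator of $\Delta^{-it}a\Delta^{it}$ with each element of the commutant: by the double commutant theorem, $[\Delta^{-it}a\Delta^{it},b']=0$ for every $b'\in\A'$ implies $\Delta^{-it}a\Delta^{it}\in\A''=\A$. The natural object to study is the entire function $F_a(z)=\overline{\Delta^{-z}a\Delta^{z}}$ from the preceding corollary, paired against arbitrary vectors and against $b'\in\A'$ to produce the scalar functions
\[
g(z)=\langle\xi|[F_a(z),b']|\eta\rangle=\langle\xi|F_a(z)b'|\eta\rangle-\langle\xi|b'F_a(z)|\eta\rangle,\qquad \ket\xi,\ket\eta\in\H .
\]
Since $F_a$ is norm analytic on all of $\comps$, each such $g$ is entire. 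At every integer $n$ the corollary gives $F_a(n)=a_{-n}\in\A$, so $g(n)=0$. If $g$ can be shown to be of finite exponential type with imaginary-axis type strictly below $\pi$, then Carlson's theorem forces $g\equiv 0$; since $\ket\xi,\ket\eta$ are arbitrary this gives $[F_a(z),b']=0$ for all $z$, and specializing to $z=it$ — where $F_a(it)=\Delta^{-it}a\Delta^{it}$ exactly, because $\Delta^{\pm it}$ are everywhere-defined bounded unitaries — yields the theorem.

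The real content is the growth estimate on $F_a$. First I would note that on the vertical line $\Re z=n$ one has $\Delta^{-(n+it)}a\Delta^{n+it}=\Delta^{-it}(\Delta^{-n}a\Delta^{n})\Delta^{it}$, and since $\Delta^{\pm it}$ are bounded they pull out of the closure, giving $F_a(n+it)=\Delta^{-it}a_{-n}\Delta^{it}$; unitarity of $\Delta^{\pm it}$ then makes $\lVert F_a(n+it)\rVert=\lVert a_{-n}\rVert$ for all real $t$, which is at most $\alpha e^{\beta|n|}$ by the exponential bound recorded in remark \ref{rem:tidy-properties}. Next, on each strip $n\le\Re z\le n+1$, theorem \ref{thm:operator-continuation} tells us $F_a$ is norm analytic in the interior and strongly continuous on the boundary, and (being globally of at most exponential growth) lies well within the regime where Phragm\'en--Lindel\"of applies to a strip; a three-lines estimate applied to $z\mapsto\langle\xi|F_a(z)|\eta\rangle$ then gives $\lVert F_a(x+iy)\rVert\le\lVert a_{-\lfloor x\rfloor}\rVert^{1-\{x\}}\lVert a_{-\lfloor x\rfloor-1}\rVert^{\{x\}}$, a bound independent of $y$. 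Hence $F_a$ is bounded on every vertical line, with bound at most $Ce^{\beta|\Re z|}$ overall — so $g$ has imaginary-axis type $0<\pi$ and finite overall exponential type, exactly the hypotheses of the relevant form of Carlson's theorem.

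The main obstacle is precisely this growth control: Carlson's theorem is powerless unless the analytic continuation of modular flow is bounded on vertical lines (equivalently, has imaginary-direction type below $\pi$), and a crude bound $\lVert F_a(z)\rVert\le Ce^{\beta|z|}$ with $\beta$ possibly exceeding $\pi$ would not do. This is the whole reason the tidy subspace was constructed as it was — the exponential norm bounds $\lVert a_n\rVert\le\alpha e^{\beta|n|}$, together with the unitarity of $\Delta^{it}$ on the integer lines, are exactly what keep the interpolated estimate flat in the imaginary direction. Generic analytic mollifiers, as in Zsid\'o's proof, produce superexponentially growing continuations for which no Carlson-type argument is available. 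A secondary technical care-point is the justification of the three-lines / Phragm\'en--Lindel\"of step inside each vertical strip before the sharper bound is known, which follows from theorem \ref{thm:operator-continuation} and finiteness of the exponential type of $F_a$.
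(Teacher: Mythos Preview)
Your proposal is correct and is essentially the paper's own proof: both form the entire function $[F_a(z),b']$, observe it vanishes on the integers, establish via unitarity of $\Delta^{it}$ and a Phragm\'en--Lindel\"of interpolation that the norm is bounded by $Ce^{\beta|\Re z|}$ uniformly in $\Im z$, and conclude with Carlson's theorem. The only cosmetic difference is that you pair against vectors to reduce to the scalar Carlson theorem, while the paper applies an operator-valued version directly to $F_{ab'}(z)=[F_a(z),b']$; your Hadamard three-lines bound is slightly sharper than the paper's ``nearest outer integer'' bound, but both yield the same imaginary-axis type zero needed for Carlson.
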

\begin{proof}
	Fix $a \in \A_{\text{tidy}}.$
	It will suffice to show that for any $b' \in \A',$ the operator $\Delta^{-it} a \Delta^{it}$ commutes with $b'.$
	Since $a$ is in the tidy subspace, the previous corollary tells us that the function
	\begin{equation}
		F_a(z) = \bar{\Delta^{-z} a \Delta^z}
	\end{equation}
	is norm analytic in the entire complex plane.
	For any $b' \in \A',$ it follows that the function
	\begin{equation}
		F_{a b'}(z) = [\bar{\Delta^{-z} a \Delta^z}, b']
	\end{equation}
	is norm analytic in the entire complex plane.
	Furthermore, it vanishes on the integers.
	
	The norm of this function is bounded by
	\begin{equation}
		\lVert F_{ab'}(z) \rVert_{\infty}
			\leq 2 \lVert \bar{\Delta^{-z} a \Delta^{z}} \rVert_{\infty} \lVert b' \rVert_{\infty}.
	\end{equation}
	Since $\Delta^{it}$ is unitary for real $t,$ we may bound this by
	\begin{equation}
		\lVert F_{ab'}(z) \rVert_{\infty}
			\leq 2 \lVert \bar{\Delta^{-\Re(z)} a \Delta^{\Re(z)}} \rVert_{\infty} \lVert b' \rVert_{\infty}.
	\end{equation}
	The Phragmen-Lindelof theorem tells us that the norm of $\bar{\Delta^{-\Re(z)} a \Delta^{\Re(z)}}$ is upper bounded by the norm of $a_{-n},$ where $n$ is the nearest integer to $\Re(z)$ satisfying $|n| \geq |\Re(z)|.$
	We know by remark \ref{rem:tidy-properties} that the norms of the operators $a_n$ are exponentially bounded.
	So $F_{ab'}$ is an entire function from $\comps$ to $\B(\H)$ that vanishes on the integers, and for which the norm is bounded by an exponential function of the nearest integer bounding the real part of $z.$
	A version of Carlson's theorem appropriate for this circumstance, given explicitly in appendix \ref{app:carlson}, shows $F_{ab'}(z) = 0.$
\end{proof}

\begin{remark}
	Now we will show a sense in which $\A_{\text{tidy}}$ generates all of $\A,$ and use this to prove Tomita's theorem in generality.
\end{remark}

\begin{prop} \label{prop:density}
	We have $\A_{\text{tidy}}'' = \A.$
\end{prop}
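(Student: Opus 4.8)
The inclusion $\A_{\text{tidy}}'' \subseteq \A$ is immediate: $\A_{\text{tidy}} \subseteq \A$ and the bicommutant theorem give $\A_{\text{tidy}}'' \subseteq \A'' = \A$. For the reverse I would prove the sharper statement $\A_{\text{tidy}}' = \A'$; since $\A' \subseteq \A_{\text{tidy}}'$ is automatic, it suffices to show that an arbitrary $c' \in \A_{\text{tidy}}'$ already lies in $\A'$, and then $\A_{\text{tidy}}'' = \A'' = \A$. The engine of the argument is that for $a \in \A_{\text{tidy}} \subseteq \A$ one has $a\ket{\Omega} \in D(S)$ with $S(a\ket{\Omega}) = a^\dagger\ket{\Omega}$, so that commuting $c'$ past $a$ yields
\begin{equation}
	\langle S(a\Omega) | c'\Omega \rangle = \langle a^\dagger\Omega | c'\Omega\rangle = \langle \Omega | a\, c'\, \Omega\rangle = \langle \Omega | c'\, a\, \Omega\rangle = \langle (c')^\dagger\Omega | a\Omega\rangle.
\end{equation}
Thus the antilinear functional $\xi \mapsto \langle S\xi | c'\Omega\rangle$ is bounded, with norm at most $\lVert (c')^\dagger\ket{\Omega}\rVert$, on the subspace $\A_{\text{tidy}}\ket{\Omega}$. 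By proposition \ref{prop:core} this subspace is a core for $\Delta^{1/2}$, hence for $S = J\Delta^{1/2}$ (since $J$ is bounded), so the bound extends to all of $D(S)$; therefore $c'\ket{\Omega} \in D(S^\dagger) = D(\Delta^{-1/2})$. This first step is the heart of the matter: it is precisely the commutation relation $c'a = ac'$ on the tidy subspace, together with the core property of proposition \ref{prop:core}, that forces $c'\ket{\Omega}$ into the domain of $\Delta^{-1/2}$.

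Next I would invoke the description of $D(\Delta^{-1/2})$ supplied by the basic properties of the Tomita operator for the commutant algebra (appendix \ref{app:tomita}; the statement symmetric to the one recorded in the bulleted list of section \ref{sec:uniqueness} under $\A \leftrightarrow \A'$ and $\Delta \leftrightarrow \Delta^{-1}$): every vector of $D(\Delta^{-1/2})$ has the form $T'\ket{\Omega}$ for some closed operator $T'$ affiliated with $\A'$ (and if $T'$ is in fact bounded we are already done, since then $c' = T'$, as the next paragraph will make clear). Write $c'\ket{\Omega} = T'\ket{\Omega}$. Because $T'$ is affiliated with $\A'$, theorem \ref{thm:neumann-polar} applied to its polar decomposition puts the spectral projections of $|T'|$ in $\A'$, so $\A$ commutes with them and the domain of $T'$ is invariant under $\A$; in particular $a\ket{\Omega} \in D(T')$ for every $a \in \A_{\text{tidy}}$. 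Commuting $a$ past $T'$ and using $T'\ket{\Omega} = c'\ket{\Omega}$ and $c'a = ac'$ gives
\begin{equation}
	T'(a\Omega) = a\, T'\Omega = a\, c'\Omega = c'(a\Omega), \qquad a \in \A_{\text{tidy}},
\end{equation}
so $T'$ and $c'$ agree on the dense subspace $\A_{\text{tidy}}\ket{\Omega}$.

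To finish: $c'$ is bounded, hence equal to the closure of its restriction to $\A_{\text{tidy}}\ket{\Omega}$, while $T'$ is a closed operator extending that same restriction; therefore $c' = \overline{c'|_{\A_{\text{tidy}}\Omega}} = \overline{T'|_{\A_{\text{tidy}}\Omega}} \subseteq T'$. Since $c'$ is everywhere defined this forces $D(T') = \H$, so $T'$ is a closed, everywhere-defined operator, hence bounded, and $T' = c'$. A bounded operator affiliated with $\A'$ is by definition a member of $\A'$, so $c' \in \A'$, which completes the proof that $\A_{\text{tidy}}' = \A'$ and hence $\A_{\text{tidy}}'' = \A$. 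The only delicate point in the whole argument is the very first step, membership of $c'\ket{\Omega}$ in $D(\Delta^{-1/2})$; everything after that is soft operator theory, and in particular no use is made of Tomita's theorem, so the argument is not circular.
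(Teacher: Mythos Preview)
Your argument is correct, and it takes a genuinely different route from the paper's. Both proofs hinge on proposition \ref{prop:core} (that $\A_{\text{tidy}}\ket{\Omega}$ is a core for $\Delta^{1/2}$, hence for $S$), but they exploit it differently. The paper uses the core property to approximate an arbitrary $a \in \A$ by tidy operators $a_n$ in the graph norm of $S$, so that both $a_n\ket{\Omega} \to a\ket{\Omega}$ and $a_n^{\dagger}\ket{\Omega} \to a^{\dagger}\ket{\Omega}$; it then computes the matrix element $\langle c'\Omega | [O,a] | b'\Omega\rangle$ directly as a limit of the corresponding expressions with $a_n$, each of which vanishes because $O \in \A_{\text{tidy}}'$. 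That argument is shorter and uses nothing beyond the core property and density of $\A'\ket{\Omega}$.

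Your approach instead feeds the core property into the domain characterization of $S^{\dagger} = F$ from appendix \ref{app:tomita}: you show $c'\ket{\Omega} \in \D_{S^{\dagger}}$, invoke the theorem that every such vector is $T'\ket{\Omega}$ for some $T'$ affiliated to $\A'$, and then argue that $T'$ must coincide with $c'$. This is more structural and makes heavier use of the machinery in appendix \ref{app:tomita}, but it has the virtue of explaining \emph{why} $c'$ lands in $\A'$: because acting on $\ket{\Omega}$ it already looks like an $\A'$-affiliated operator. One small slip: the functional $\xi \mapsto \langle S\xi | c'\Omega\rangle$ is linear rather than antilinear (the antilinearity of $S$ cancels the antilinearity in the first slot of the inner product), though this does not affect your boundedness argument.
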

\begin{proof}
	The inclusion $\A_{\text{tidy}} \subseteq \A$ and the double commutant theorem $\A = \A''$ give the inclusion $\A_{\text{tidy}}'' \subseteq \A.$
	We must show the reverse inclusion.
	For this it suffices to show $\A_{\text{tidy}}' \subseteq \A'.$

	To show this, recall that by proposition \ref{prop:core}, $\A_{\text{tidy}} \ket{\Omega}$ is a core for $\Delta^{1/2}.$
	It is therefore also a core for the Tomita operator $S = J \Delta^{1/2}$.
	By the properties of closed operators discussed in section \ref{sec:unbounded-operators}, it follows that for any $a \in \A,$ there exists a sequence of tidy operators $a_n$ satisfying both
	\begin{equation}
		a_n \ket{\Omega} \to a \ket{\Omega}
	\end{equation}
	and
	\begin{equation}
		a_n^{\dagger} \ket{\Omega} = S a_n \ket{\Omega} \to S a \ket{\Omega} = a^{\dagger} \ket{\Omega}.
	\end{equation}
	From this it is straightforward to show that for any operator $b'$ in $\A',$ we have the limits
	\begin{equation}
		\lim_{n \to \infty} a_n b' \Omega = a b' \ket{\Omega}
	\end{equation}
	and
	\begin{equation}
		\lim_{n \to \infty} a_n^{\dagger} b' \ket{\Omega} = a^{\dagger} b' \ket{\Omega}.
	\end{equation}
	
	Now, suppose $O$ is an operator in $\A_{\text{tidy}}',$ $a$ is an operator in $\A,$ and $a_n$ is a sequence of tidy operators converging as in the above equations.
	Fix $b', c' \in \A'.$
	We have
	\begin{align}
		\begin{split}
			\langle  c' \Omega | [O, a] | b' \Omega \rangle
			& = \langle O^{\dagger} c' \Omega | a b' \Omega \rangle
			- \langle a^{\dagger} c' \Omega  | O b' \Omega \rangle \\
			& = \lim_{n \to \infty} \left( \langle O^{\dagger} c' \Omega | a_n b' \Omega \rangle
			- \langle a_n^{\dagger} c' \Omega  | O b' \Omega \rangle\right) \\
			& = \lim_{n \to \infty} \langle c' \Omega | [O, a_n] b' \Omega \rangle \\
			& = 0.
		\end{split}
	\end{align}
	Since $\A' \ket{\Omega}$ is dense in $\H,$ this equation implies that $[O, a]$ vanishes as an  operator.
	So every element of $\A_{\text{tidy}}'$ commutes with every element of $\A,$ as desired.
\end{proof}
	
\begin{corollary}[The general version of Tomita's theorem]
	For any operator $a \in \A$ and any $t \in \mathbb{R},$ the operator $\Delta^{-it} \mathrm{a} \Delta^{it}$ is in $\A.$
	\end{corollary}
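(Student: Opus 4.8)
The plan is to bootstrap from Tomita's theorem on the tidy subspace (Theorem~\ref{thm:tidy-tomita}) to arbitrary $a \in \A$, using the same vector-approximation machinery that underlies Proposition~\ref{prop:density}. Fix $a \in \A$ and $t \in \reals$; since $\A = \A''$, by the bicommutant theorem it suffices to show that the bounded operator $[\Delta^{-it} a \Delta^{it}, b']$ vanishes for every $b' \in \A'$.

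First I would invoke Proposition~\ref{prop:core}: $\A_{\text{tidy}} \ket{\Omega}$ is a core for $\Delta^{1/2}$, hence (since $J$ is bounded) also a core for $S = J \Delta^{1/2}$. Because $a \in \A$ implies $a \ket{\Omega} \in \D_S$, there is a sequence $a_n \in \A_{\text{tidy}}$ with $a_n \ket{\Omega} \to a \ket{\Omega}$ and $a_n^\dagger \ket{\Omega} = S a_n \ket{\Omega} \to S a \ket{\Omega} = a^\dagger \ket{\Omega}$. Commuting with bounded operators in $\A'$ then upgrades this to $a_n c' \ket{\Omega} = c' a_n \ket{\Omega} \to c' a \ket{\Omega} = a c' \ket{\Omega}$ and likewise $a_n^\dagger c' \ket{\Omega} \to a^\dagger c' \ket{\Omega}$ for every $c' \in \A'$, exactly as in the proof of Proposition~\ref{prop:density}. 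By Theorem~\ref{thm:tidy-tomita}, each $a_n(t) := \Delta^{-it} a_n \Delta^{it}$ lies in $\A$, so $[a_n(t), b'] = 0$.

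Next I would evaluate $[\Delta^{-it} a \Delta^{it}, b']$ on the dense set of vectors of the form $\Delta^{-it} c' \ket{\Omega}$, $c' \in \A'$ (dense because $\Delta^{-it}$ is unitary and $\A' \ket{\Omega}$ is dense). Taking $\psi = \Delta^{-it} c' \ket{\Omega}$ and $\phi = \Delta^{-it} d' \ket{\Omega}$, one has $a_n(t) \psi = \Delta^{-it} a_n c' \ket{\Omega} \to \Delta^{-it} a c' \ket{\Omega} = (\Delta^{-it} a \Delta^{it}) \psi$, and, using $a_n(t)^\dagger = \Delta^{-it} a_n^\dagger \Delta^{it}$, also $a_n(t)^\dagger \phi \to (\Delta^{-it} a \Delta^{it})^\dagger \phi$. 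Writing
\begin{equation}
	\langle \phi | [\Delta^{-it} a \Delta^{it}, b'] | \psi \rangle = \langle (\Delta^{-it} a \Delta^{it})^{\dagger} \phi | b' \psi \rangle - \langle (b')^{\dagger} \phi | (\Delta^{-it} a \Delta^{it}) \psi \rangle
\end{equation}
and substituting the two limits — note that $b' \psi$ and $(b')^{\dagger} \phi$ enter only as fixed vectors, so we never need to know how $\Delta^{it}$ acts on $\A'$ — the right-hand side becomes $\lim_{n} \langle \phi | [a_n(t), b'] | \psi \rangle = 0$. A bounded operator whose matrix elements vanish on a dense set of bras and a dense set of kets is zero, so $[\Delta^{-it} a \Delta^{it}, b'] = 0$ for all $b' \in \A'$, and hence $\Delta^{-it} a \Delta^{it} \in \A'' = \A$.

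The step I expect to be the crux is the limiting argument in the previous paragraph. The subtlety is that $a_n(t)$ does \emph{not} converge to $\Delta^{-it} a \Delta^{it}$ in norm, nor even strongly on all of $\H$: the approximants $a_n$ are controlled only on the dense subspace $\A' \ket{\Omega}$, and need not be uniformly norm-bounded. Testing the commutator against the particular dense family $\Delta^{-it} \A' \ket{\Omega}$ is exactly what makes things work, since on these vectors both $a_n(t)$ and $a_n(t)^\dagger$ converge, while $b'$ only ever sends them to ordinary fixed vectors — so one sidesteps the (equivalent, and hence circular) statement that $\Delta^{it}$ conjugates $\A'$ into itself.
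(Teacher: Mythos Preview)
Your proof is correct, but it takes a different route from the paper's. The paper flips the roles of $\A$ and $\A'$: instead of approximating $a \in \A$ by tidy operators, it fixes an arbitrary $a \in \A$ and tests the commutator only against tidy operators $b' \in \A'_{\text{tidy}}$. For such $b'$, Theorem~\ref{thm:tidy-tomita} applied to $\A'$ gives $b'(t) := \Delta^{it} b' \Delta^{-it} \in \A'$, whence $[\Delta^{-it} a \Delta^{it}, b'] = \Delta^{-it}[a, b'(t)]\Delta^{it} = 0$ for all vectors, with no limiting argument. Proposition~\ref{prop:density} (applied to $\A'$) then finishes. This is shorter, but relies on having the symmetric tidy construction for $\A'$; your argument instead keeps all the work on the $\A$ side and trades the symmetry for a careful choice of test vectors $\Delta^{-it}\A'\ket{\Omega}$ on which convergence of $a_n(t)$ and $a_n(t)^\dagger$ is controlled. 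Your closing remark about circularity slightly overstates things: using the \emph{full} statement that $\Delta^{it}\A'\Delta^{-it} \subseteq \A'$ would indeed be equivalent to what is being proved, but the paper only uses the tidy version for $\A'$, which is already available by symmetry and is not circular.
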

	\begin{proof}
	Let $b'$ be a tidy operator for the commutant algebra $\A',$ and fix arbitrary $\ket{\psi}, \ket{\xi} \in \H.$
	Consider the commutator
	\begin{align}
		\begin{split}
			\langle \xi | [\Delta^{-it} a \Delta^{it}, b'] | \psi \rangle
			& = \langle \xi | \Delta^{-it} a \Delta^{it} b' \psi \rangle
					- \langle \xi | b' \Delta^{-it} a \Delta^{it} \psi \rangle. 
		\end{split}
	\end{align}
	We already know via theorem \ref{thm:tidy-tomita} that Tomita's theorem holds for tidy operators.
	By applying this to $b',$ which is a tidy operator of $\A',$ we observe that the operator
	\begin{equation}
		b'(t) \equiv \Delta^{it} b' \Delta^{-it}
	\end{equation}
	is in $\A'.$
	This gives
	\begin{align}
	\begin{split}
		\langle \xi | [\Delta^{-it} a \Delta^{it}, b'] | \psi \rangle
		& = \langle \Delta^{it} \xi | a b'(t) \Delta^{it} \psi \rangle
		- \langle \Delta^{it} \xi | b'(t) a \Delta^{it} \psi \rangle \\
		& = \langle \Delta^{it} \xi | [a, b'(t)] | \Delta^{it} \psi \rangle \\
		& = 0.
	\end{split}
	\end{align}
	So $\Delta^{-it} a \Delta^{it}$ commutes with every tidy operator in $\A'.$
	By proposition \ref{prop:density}, it is in $\A.$
\end{proof}

\subsection{A note on modular conjugations}
\label{sec:modular-conjugation}

In the preceding subsection, we showed that for $a$ in the tidy subspace and for any $z \in \comps,$ the operator $\Delta^{-z} a \Delta^{z}$ is closed and densely defined, and its closure is an operator in $\A.$
Note that for $a, b, c \in \A,$ and $S$ the Tomita operator, we have
\begin{equation}
	S a S b c \ket{\Omega}
		= S a c^{\dagger} b^{\dagger} \ket{\Omega}
		= b c a^{\dagger} \ket{\Omega}
		= b S a c^{\dagger} \ket{\Omega}
		= b S a S c \ket{\Omega}.
\end{equation}
So $S a S$ commutes with $b$ when acting on $c \ket{\Omega}.$
Since vectors of the form $c \ket{\Omega}$ are dense in Hilbert space, we may conclude that $S a S$ is an operator affiliated with $\A'.$
If $a$ is in the tidy subspace, then we have
\begin{equation}
	S a S b \ket{\Omega}
		= J \Delta^{1/2} a \Delta^{-1/2} J b \ket{\Omega}.
\end{equation}
Since $\Delta^{1/2} a \Delta^{-1/2}$ is bounded on its domain, this is a bounded function of $b \ket{\Omega}.$
It therefore follows that the operator $S a S$ is bounded on its domain, so its closure is an operator in $\A'.$

We have shown that for every operator $a$ in the tidy subspace, the closure of the operator $J (\Delta^{1/2} a \Delta^{-1/2}) J$ is in $\A'.$
But since each operator $b$ in the tidy subspace can be written as the closure of $\Delta^{1/2} a \Delta^{-1/2}$ for some other operator $a = \bar{\Delta^{-1/2} b \Delta^{1/2}}$ in the tidy subspace, it follows that we have $J b J \in \A'$ for each $b$ in the tidy subspace.
 
For general $a \in \A,$ let $b$ be an operator in the tidy subspace of $\A.$
We have
\begin{align}
	\begin{split}
		[J a J, b] = J [a, J b J] J = 0,
	\end{split}
\end{align}
since $J b J$ is in $\A'.$
So $J a J$ commutes with every tidy operator in $\A.$
By proposition \ref{prop:density}, we conclude that $J a J$ is in $\A'.$
We have therefore shown the inclusion $J \A J \subseteq \A'.$
A symmetric argument gives $J \A' J \subseteq \A,$ and we conclude $J \A J = \A'.$

\acknowledgments{I thank Brent Nelson for comments on a companion paper that improved the presentation of section \ref{sec:tomitas-theorem}.
I also thank \c{S}erban Str\v{a}til\v{a} and L\'{a}szl\'{o} Zsid\'{o} for writing their wonderful book \cite{struatilua2019lectures}, where I learned most of the techniques of operator analysis that were used in this paper.
Part of this work was completed at the long term workshop YITP-T-23-01 held at YITP, Kyoto University.
Financial support was provided by the AFOSR under award number FA9550-19-1-0360, by the DOE Early Career Award, and by the Templeton Foundation via the Black Hole Initiative.}

\appendix

\section{Carlson's theorem}
\label{app:carlson}

In this appendix, I give a version of Carlson's theorem appropriate for the application in the main text.
The proof works via repeated applications of the Phragm\'{e}n-Lindel\"{o}f theorem.
I begin by stating this theorem in a form appropriate for the application to Carlson's theorem, and sketching a proof.

\begin{theorem}[Phragm\'{e}n-Lindel\"{o}f]
	Let $X$ be a Banach space, and let $S$ be a vertical strip in the complex plane of width $\ell.$
	Let $f : S \to X$ be holomorphic in the interior of the strip and continuous at its boundaries.
	Suppose that $f$ is bounded on the left and right edges of the strip, and is known to grow at most doubly exponentially in the vertical direction in the interior of the strip; specifically suppose that there exist constants $\alpha, \beta, \gamma$ with $\gamma < \pi/\ell$ and satisfying
	\begin{equation}
		\lVert f(z) \rVert \leq \alpha e^{\beta e^{\gamma |y|}}.
	\end{equation}
	Then in fact $f$ is bounded everywhere in the strip by the bound on the left and right edges.
	
	Let $W$ be an angular sector in the right half-plane, centered at the origin, and with angular extent $\theta$.
	Let $f : W \to X$ be holomorphic in the interior of the sector and continuous at its boundaries.
	Suppose that $f$ is bounded on the half-lines that bound the sector, and such that there exist constants $\alpha, \gamma$ with $\gamma < \pi/\theta$ satisfying
	\begin{equation}
		\lVert f(z) \rVert \leq \alpha e^{\beta r^{\gamma}}.
	\end{equation}
	Then in fact $f$ is bounded everywhere in $W$ by the bound on its half-line boundaries.
	In particular, if the angular extent of $W$ is less than $\pi,$ then we may take $\gamma=1.$ 
\end{theorem}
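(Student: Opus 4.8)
The plan is to treat the strip and the sector by one common recipe. First I would reduce to the scalar-valued case: given $\phi \in X^{*}$, the function $\phi\circ f$ is holomorphic in the interior, continuous on the boundary, bounded by $\lVert\phi\rVert\,M$ on the edges (resp.\ the bounding rays), where $M$ is the boundary bound on $f$, and it obeys the same growth estimate with $\alpha$ replaced by $\lVert\phi\rVert\,\alpha$; so once the scalar statement is proved, taking the supremum over $\lVert\phi\rVert\le 1$ recovers the bound on $\lVert f\rVert$ by Hahn--Banach. With $f$ scalar, the idea in each case is to multiply $f$ by a holomorphic ``blocking'' factor $e^{-\epsilon h(z)}$ whose real part is nonnegative on the boundary (so $\lvert e^{-\epsilon h}\rvert\le 1$ there) but grows fast enough in the interior that $f(z)\,e^{-\epsilon h(z)}\to 0$ at infinity within the region; the ordinary maximum modulus principle on an exhaustion by bounded subregions then bounds $f\,e^{-\epsilon h}$ by $M$, and letting $\epsilon\downarrow 0$ gives $\lvert f\rvert\le M$.

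For the strip, translate so that $S=\{z:\lvert\Re z\rvert\le\ell/2\}$, and, using $\gamma<\pi/\ell$, fix $\mu$ with $\gamma<\mu<\pi/\ell$ and take $h(z)=\cos(\mu z)$. Writing $z=x+iy$, one has $\Re\cos(\mu z)=\cos(\mu x)\cosh(\mu y)$, which is strictly positive on $S$ and bounded below by $c_0 e^{\mu\lvert y\rvert}$ for some $c_0>0$ because $0\le\mu\ell/2<\pi/2$. Hence $\lvert f(z)e^{-\epsilon\cos(\mu z)}\rvert\le M$ on the two edges, while in the interior it is at most $\alpha\,e^{\beta e^{\gamma\lvert y\rvert}-\epsilon c_0 e^{\mu\lvert y\rvert}}$, which tends to $0$ as $\lvert y\rvert\to\infty$ since $\mu>\gamma$. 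Applying the maximum modulus principle on the rectangles $\{\lvert\Re z\rvert\le\ell/2,\ \lvert\Im z\rvert\le N\}$ and sending $N\to\infty$, then $\epsilon\downarrow 0$, gives $\lvert f\rvert\le M$ on $S$.

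For the sector, rotate so that $W=\{z:\lvert\arg z\rvert<\theta/2\}$; since $W$ lies in the right half-plane the principal branch of $z^{s}$ is holomorphic on $W$. Using $\gamma<\pi/\theta$, fix $s$ with $\gamma<s<\pi/\theta$ and take $h(z)=z^{s}$. On $W$ one has $\Re(z^{s})=r^{s}\cos(s\arg z)\ge c_1 r^{s}$ with $c_1=\cos(s\theta/2)>0$, because $\lvert s\arg z\rvert\le s\theta/2<\pi/2$. Thus $\lvert f(z)e^{-\epsilon z^{s}}\rvert\le M$ on the bounding rays, and it is at most $\alpha\,e^{\beta r^{\gamma}-\epsilon c_1 r^{s}}\to 0$ as $r\to\infty$ since $s>\gamma$. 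The maximum modulus principle on the truncated sectors $W\cap\{\lvert z\rvert\le N\}$, then $N\to\infty$ and $\epsilon\downarrow 0$, gives $\lvert f\rvert\le M$ on $W$; and when $\theta<\pi$ we have $\pi/\theta>1$, so $\gamma=1$ is admissible (pick $s\in(1,\pi/\theta)$).

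The only genuinely nonroutine step is engineering the blocking factor: one needs $\Re h\ge 0$ on the boundary yet $\Re h$ growing like $e^{\mu\lvert y\rvert}$ (strip) or $r^{s}$ (sector) in the interior, which is exactly why the strict inequalities $\gamma<\pi/\ell$ and $\gamma<\pi/\theta$ are needed --- they leave room to slot $\mu$, resp.\ $s$, strictly between the a priori growth rate and the critical rate so that the blocking factor wins at infinity. I expect that to be the crux; everything after it --- the maximum modulus principle on bounded exhausting domains, the two limits $N\to\infty$ and $\epsilon\downarrow 0$, and the Hahn--Banach reduction to scalars --- is standard bookkeeping.
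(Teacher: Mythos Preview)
Your proof is correct and follows essentially the same approach as the paper's: multiply by a holomorphic blocking factor whose real part grows fast enough to kill the a priori growth, apply the maximum modulus principle on an exhaustion by bounded subdomains, then send $\epsilon\downarrow 0$. The implementation differs only cosmetically --- you reduce to scalars via Hahn--Banach and use $e^{-\epsilon\cos(\mu z)}$ on a centered strip (and $e^{-\epsilon z^s}$ directly on the sector), whereas the paper works with norms directly, uses $e^{-\epsilon\sin(\pi z/\ell)}$ on a strip with left edge on the imaginary axis, and obtains the sector statement from the strip via the exponential map; your choice of blocking factor is arguably cleaner since $\cos(\mu x)$ stays bounded away from zero across the full strip.
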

\begin{proof}[Sketch of proof]
	For the ``strip'' statement, we may take the strip to have left edge on the imaginary axis, and right edge on the vertical line of real part $\ell.$
	Fix $\epsilon > 0,$ and consider the function
	\begin{equation}
		g(z) = e^{- \epsilon \sin(\pi z/\ell)} f(z).
	\end{equation}
	This is holomorphic in the strip $S$.
	If we consider a ``box'' within the vertical strip, filling the full strip horizontally but only going from $-y_0$ to $y_0$ vertically, then because this is a compact region and $g(z)$ is holomorphic, its norm in the interior of the box is bounded above by its norm on the edge of the box.
	On the left and right edges of the box, the norm of $g(z)$ is just the norm of $f(z).$
	Because $\sin(\pi z/\ell)$ falls off as $e^{-\pi |y|/\ell}$ in vertical directions, and because of our assumptions on the vertical growth of $f,$ the magnitude of $g(z)$ on the top and bottom of the box is suppressed in the limit $y_0 \to \infty.$
	From this we may conclude that the norm of $g(z)$ within the strip $S$ is bounded by the values it takes on the left and right boundary of the strip, which is the norm of $f(z).$
	So we have
	\begin{equation}
		|e^{- \epsilon \sin(\pi z/\ell)}| \lVert f(z) \rVert \leq \max \{ \sup_y \lVert f(i y) \rVert, \sup_y \lVert f(\ell + i y) \rVert \}.
	\end{equation}
	Taking the limit $\epsilon \to 0$ completes the proof.
	
	The statement for sectors follows from the statement for strips by identifying a strip of width $\ell$ with a sector of angular extent $\ell$ using the exponential map.
\end{proof}

\begin{theorem}
	Let $f : \comps \to \B(\H)$ be a norm analytic function.
	For any  $z \in \comps,$ let $\lceil z \rceil$ denote the integer nearest to $\Re(z)$ satisfying $|n| \geq |\Re(z)|.$
	Suppose that there exist constants $\alpha, \beta > 0$ satisfying
	\begin{equation}
		\lVert f(z) \rVert_{\infty} \leq \alpha e^{\beta |\lceil z \rceil|}.
	\end{equation}
	Suppose further that $f$ vanishes on the integers.
	Then $f$ vanishes everywhere.
	
	Note that it would have been sufficient to assume $f$ was only analytic in the right half-plane and continuous on the boundary, and vanished on the nonnegative integers.
\end{theorem}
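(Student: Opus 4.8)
The plan is to reduce the statement to the classical fact that a bounded holomorphic function on the right half-plane which vanishes at the nonnegative integers must vanish identically, and to establish that fact using finite Blaschke products together with the half-plane case of the Phragm\'{e}n--Lindel\"{o}f theorem stated above. Since $f$ is entire, it suffices by the identity theorem to show $f \equiv 0$ on the closed right half-plane $\bar{H} = \{\Re(z) \geq 0\}$; because the argument below only ever uses the values of $f$ on $\bar{H}$ and the vanishing of $f$ at the nonnegative integers, it simultaneously proves the weaker version mentioned in the statement.

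First I would absorb the horizontal exponential growth. For $z \in \bar{H}$ the integer $\lceil z \rceil$ appearing in the hypothesis is the smallest integer $\geq \Re(z)$, so $0 \leq \lceil z \rceil \leq \Re(z) + 1$, and the bound on $f$ becomes $\lVert f(z) \rVert_\infty \leq \alpha\, e^{\beta(\Re(z) + 1)}$ on $\bar{H}$, with $\lVert f(iy) \rVert_\infty \leq \alpha$ on the imaginary axis. Hence the function $G(z) := e^{-\beta z} f(z)$ is holomorphic on the open half-plane $H$, continuous on $\bar{H}$, bounded there by $\alpha\, e^{\beta}$, bounded by $\alpha$ on the boundary, and vanishes at every nonnegative integer (since $f$ does).

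Next I would introduce, for each positive integer $N$, the finite Blaschke product $B_N(z) = \prod_{n=1}^{N} \frac{z-n}{z+n}$; a short computation gives $|B_N(z)| \leq 1$ on $\bar{H}$ with equality on the imaginary axis, and $B_N(z) \to 1$ as $|z| \to \infty$. The quotient $G/B_N$ is then holomorphic on $H$ (the simple zeros of $B_N$ at $1, \dots, N$ are cancelled by zeros of $G$), continuous and bounded on $\bar{H}$, and bounded by $\alpha$ on the imaginary axis. Applying Phragm\'{e}n--Lindel\"{o}f to the half-plane --- the borderline angular sector of opening $\pi$, which is harmless because $G/B_N$ is bounded and hence grows more slowly than $e^{\beta r^\gamma}$ for any $\gamma < 1$ --- yields $\lVert (G/B_N)(z) \rVert_\infty \leq \alpha$ throughout $\bar{H}$, that is, $\lVert G(z) \rVert_\infty \leq \alpha\, |B_N(z)|$ for every $z \in \bar{H}$. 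Finally I would fix $z$ with $\Re(z) > 0$ and let $N \to \infty$: since $1 - \bigl| \frac{z-n}{z+n} \bigr|$ behaves like $2\Re(z)/n$, the sum $\sum_{n \geq 1} \bigl( -\log \bigl| \frac{z-n}{z+n} \bigr| \bigr)$ diverges, so $B_N(z) \to 0$ and therefore $G(z) = 0$. By continuity $G \equiv 0$ on $\bar{H}$, so $f = e^{\beta z} G$ vanishes on $\bar{H}$, and then $f \equiv 0$ everywhere by the identity theorem.

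The hard part is the key lemma carried out in the last paragraph: arranging the Blaschke cancellation and the boundary behaviour carefully enough that Phragm\'{e}n--Lindel\"{o}f genuinely applies to the quotient $G/B_N$ on the closed half-plane, and then checking that the Blaschke product decays to zero pointwise in $H$ --- which is exactly where the density of the integers, encoded in the divergence of $\sum 1/n$, does the real work. The exponential rescaling of the first step, the elementary estimates on $B_N$, and the concluding identity-theorem argument are all routine.
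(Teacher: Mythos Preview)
Your argument is correct. After the common first step of passing to the bounded function $G(z)=e^{-\beta z}f(z)$ on the closed right half-plane, you and the paper diverge. The paper divides $G$ by $\sin(\pi z)$ to cancel all integer zeros at once, and then must work fairly hard: it bounds the quotient on half-discs, applies Phragm\'{e}n--Lindel\"{o}f separately on the two quarter-planes to extract exponential decay in the imaginary direction, and finally runs a third Phragm\'{e}n--Lindel\"{o}f argument with an auxiliary factor $e^{\omega z}$ to force the quotient to vanish. Your route is the classical Blaschke-product proof of Carlson's theorem: divide by a finite product $B_N$, apply Phragm\'{e}n--Lindel\"{o}f once on the full half-plane (legitimate because $G/B_N$ is globally bounded, hence well inside the growth condition $\gamma<1$ required at opening angle $\pi$), and then let $N\to\infty$, using the divergence of $\sum 1/n$ to drive $|B_N(z)|\to 0$ pointwise in $H$. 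Your approach is shorter and uses only a single maximum-principle step; the paper's approach avoids introducing Blaschke products but pays for it with three nested Phragm\'{e}n--Lindel\"{o}f applications and the additional bookkeeping around $\sin(\pi z)$.
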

\begin{proof}
	Fix $n \geq 0,$ and consider the strip $0 \leq \Re(z) \leq n.$
	The function
	\begin{equation}
		g(z) = f(z) e^{-\beta z}
	\end{equation}
	is analytic in the strip and continuous on its boundary.
	By assumption, this function is bounded.
	So by the Phragm\'{e}n-Lindel\"{o}f theorem, the maximum of $g$ within the strip is attained on the boundary of the strip, which implies the bound
	\begin{equation}
		\lVert g(z) \rVert
			\leq \max\{ \sup_t \lVert f(0+it)\rVert,  \alpha \}.
	\end{equation}
	Since this holds for any strip, it follows that $g$ is bounded in the entire right half-plane.
	
	We now have a function $g(z) = f(z) e^{-\beta z}$ that is bounded by some constant $g_0$ in the right half-plane, analytic in that plane, and continuous on the imaginary axis.
	It also vanishes on the integers, so the function
	\begin{equation}
		h(z) = \frac{g(z)}{\sin(\pi z)}
	\end{equation}
	is analytic in the right half-plane and continuous on the imaginary axis.
	The maximum of $h(z)$ within the half-disc of radius $n+\frac{1}{2}$ is attained on its boundary, which is a half-circle of radius $n+\frac{1}{2}$ together with a segment of the imaginary axis.
	On the half-circle, the function $\frac{1}{\sin(\pi z)}$ is bounded by 1.
	So $h(z)$ is bounded, within each half-disc, by the maximum of $g_0$ and its values on the imaginary axis.
	
	Consider now the quadrant of the right half-plane with $\Im(z) \geq 0.$
	This is an angular sector of extent $\pi/2.$
	The function $e^{- i \pi z} h(z)$ is bounded on the boundaries of this sector, and at most exponentially growing in the interior.
	So by the Phragm\'{e}n-Lindel\"{o}f theorem, its maximum is attained on the boundary of the sector; it follows that $e^{- i \pi z} h(z)$ is bounded in this sector, and therefore that within this sector we have some constant $\gamma$ with $|h(z)| \leq \gamma e^{-\pi y}.$
	A similar argument for the lower sector of the right half-plane tells us that within the full right half-plane there exists a constant $\gamma$ with
	\begin{equation}
		|h(z)| \leq \gamma e^{- \pi |y|}.
	\end{equation}
	
	Finally, fix $\omega > 0,$ and consider the function
	\begin{equation}
		h_{\omega}(z) = e^{\omega z} h(z).
	\end{equation}
	We have the universal bound
	\begin{equation}
		|h_{\omega}(z)| \leq \gamma e^{\omega x - \pi |y|}.
	\end{equation}
	We have $|h_{\omega}(z)| \leq \gamma$ on the three sectors bounded by the imaginary axis and the half-lines $y = \pm \frac{\omega}{\pi} x.$
	Within each sector, $|h_{\omega}(z)|$ at most exponentially growing, so applying the Phragm\'{e}n-Lindel\"{o}f theorem gives us $|h_{\omega}(z)| \leq \gamma$ everywhere, hence $|h(z)| \leq \gamma e^{- \omega x}.$
	Taking the limit $\omega \to 0$ gives $h(z) = 0$ for $x > 0,$ and continuity gives $h(z) = 0$ everywhere in the right half-plane.
	We therefore have $g(z) = 0,$ hence $f(z) = 0.$
\end{proof}

\section{Basic properties of the Tomita operator}
\label{app:tomita}

Analysis of the KMS condition in section \ref{sec:uniqueness} motivated the introduction of an antilinear operator $S_0,$ defined on all states of the form $a \ket{\Omega}$ for $a \in \A,$ and which acts on these states as
\begin{equation}
	S_0 (a \ket{\Omega}) = a^{\dagger} \ket{\Omega}.
\end{equation}
Under the assumption that $\ket{\Omega}$ is cyclic for $\A,$ vectors of the form $a \ket{\Omega}$ form a dense subspace of Hilbert space, so the antilinear operator $S_0$ is densely defined.
We would like first to show that $S_0$ is preclosed, so that it can be closed to some better-behaved operator $S$; we will then call this operator $S$ the \textit{Tomita operator} and prove some of its important properties.
The properties we prove in this appendix are summarized in the list below.
\begin{itemize}
	\item Assume $\ket{\Omega}$ is cyclic and separating for $\A,$ and consider the densely defined operators
	\begin{align}
		S_0 (a \ket{\Omega})
			& = a^{\dagger} \ket{\Omega}, \\
		F_0 (a' \ket{\Omega})
			& = (a')^{\dagger} \ket{\Omega}.
	\end{align}
	These operators are preclosed.
	Denote their closures by $S$ and $F.$
	\item
	The domain of $S$ consists of all vectors of the form $T \ket{\Omega}$ such that either (i) $T$ is in $\A,$ or (ii) $T$ is affiliated to $\A$ with core $\A' \ket{\Omega},$ and $\ket{\Omega}$ is in the domain of both $T$ and $T^{\dagger}.$
	In either case, we have $S T \ket{\Omega} = T^{\dagger} \ket{\Omega}.$
	An analogous statement holds for the domain of $F.$
	\item We have $S^{\dagger} = F$ and $S = S^{-1},$ i.e. $S^2 = 1_{\D_S}.$
	\item Denoting the polar decomposition of $S$ by
	\begin{equation}
		S
		= J \Delta^{1/2},
	\end{equation}
	the antilinear partial isometry $J$ is called the \textbf{modular conjugation} and the operator $\Delta$ is called the \textbf{modular operator}.
	The modular operator has trivial kernel, and is therefore invertible.
	Note that this equation implies the domain of $\Delta^{1/2}$ is the same as the domain of $S$.
	\item The antilinear partial isometry $J$ is in fact an antiunitary operator, and satisfies $J^2 = 1,$ hence $J = J^{\dagger}.$
	\item The polar decomposition of $F$ is given by
	\begin{equation}
		F
		= J \Delta^{-1/2}.
	\end{equation}
	\item The operators $J$ and $\Delta$ satisfy $J \Delta^{1/2} J = \Delta^{-1/2}.$
\end{itemize}
Note that it is often cleanest to think of an antilinear operator on $\H$ as an ordinary linear operator mapping a domain in $\H$ to the complex conjugate space $\bar{\H}.$
We will not do this here, but if anything below seems confusing, the confusion can probably be resolved by thinking in these terms.

As explained in section \ref{sec:unbounded-operators}, an operator is preclosed if and only if its adjoint is densely defined.
So we can show that $S_0$ is preclosed by finding a dense subspace of $\H$ on which its adjoint is defined.
We will actually do this by showing $S_0^{\dagger} \supseteq F_0,$ so that in particular $S_0^{\dagger}$ is defined on the domain of $F_0,$ which is $\A' \ket{\Omega},$ which is dense.
A symmetric argument will imply $F_0^{\dagger} \supseteq S_0,$ so we will also know that $F_0$ is preclosed.
The properties of adjoints discussed in proposition \ref{prop:adjoint-properties} then guarantee $S \subseteq F^{\dagger},$ and we will eventually show equality.

To proceed, it will be helpful to recall the definition of the adjoint of an antilinear operator.
If $L$ is an antilinear operator with domain $\D_L$, then the adjoint $L^{\dagger}$ is defined on every vector $\ket{\xi}$ such that the map
\begin{equation}
	\ket{\psi} \mapsto \langle L \psi | \xi\rangle, \quad \ket{\psi} \in \D_L
\end{equation}
is bounded.
The action of $L^{\dagger}$ on any such vector is given via the matrix elements
\begin{equation}
	\langle \psi | L^{\dagger} \xi \rangle = \bar{\braket{L \psi}{\xi}} = \braket{\xi}{L \psi}.
\end{equation}

We will show that $S_0^{\dagger}$ can act on every vector of the form $a' \ket{\Omega}$ for $a' \in \A',$ and that it acts on those vectors in the same way as $F_0.$
For any $a' \in \A',$ and any $a \ket{\Omega}$ in the domain of $S_0,$ we have
\begin{equation}
	\braket{S_0 a \Omega}{a' \Omega}
		= \langle a^{\dagger} \Omega | a' \Omega \rangle
		= \langle (a')^{\dagger} \Omega | a \Omega \rangle.
\end{equation}
This is bounded as a function of $a \ket{\Omega},$ so $S_0^{\dagger}$ can act on $a' \ket{\Omega},$ and its action is determined by the matrix elements
\begin{equation}
	\langle a \Omega | S_0^{\dagger} a' \Omega \rangle
		= \langle a \Omega | (a')^{\dagger} \Omega \rangle.
\end{equation}
This uniquely fixes $S_0^{\dagger}$ to act (antilinearly) on $a' \ket{\Omega}$ as $S_0^{\dagger} (a' \ket{\Omega}) = (a')^{\dagger} \ket{\Omega},$ which implies $S_0^{\dagger} \supseteq F_0.$

We now know that $S_0$ and $F_0$ are preclosed.
We denote their closures by $S$ and $F$.
Since $S_0^{\dagger}$ is closed (cf. section \ref{sec:unbounded-operators}), we have $F \subseteq S^{\dagger},$ and by proposition \ref{prop:adjoint-properties}, we also have $S \subseteq F^{\dagger}.$
To show equality, we will need to study the domain of $S$.
This leads us to the following theorem.
\begin{theorem}[Domain of the Tomita operator]
	If a vector $\ket{\psi}$ is in the domain of $S,$ then there exists a closed operator $T$ affiliated with $\A$, such that $\ket{\Omega}$ is in the domain of both $T$ and $T^{\dagger},$ and satisfying
	\begin{align}
		\ket{\psi}
			& = T \ket{\Omega}, \\
		S \ket{\psi}
			& = T^{\dagger} \ket{\Omega}.
	\end{align}
	Note also that because $T$ is affiliated to $\A,$ it is defined on $\A' \ket{\Omega},$ and can be uniquely determined if one requires that $\A' \ket{\Omega}$ is a core for $T$.
	
	Conversely, if $T$ is a closed operator affiliated to $\A$ with $\ket{\Omega}$ in the domain of both $T$ and $T^{\dagger}$, then $T \ket{\Omega}$ is in the domain of $S$ and satisfies $S T \ket{\Omega} = T^{\dagger} \ket{\Omega}$.
\end{theorem}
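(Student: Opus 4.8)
The plan is to prove the two directions separately, handling the converse first since it is essentially an approximation argument. Suppose $T$ is closed and affiliated with $\A$, with $\ket{\Omega}$ in the domain of both $T$ and $T^{\dagger}$. Writing the polar decomposition $T = u|T|$ from theorem \ref{thm:polar-decomposition}, theorem \ref{thm:neumann-polar} gives $u \in \A$, and the spectral truncations $|T|_n$ (the projection of $|T|$ onto $[0,n]$) also lie in $\A$; set $T_n = u|T|_n \in \A$. Since $\ket{\Omega}$ is in the domain of $T = u|T|$, hence of $|T|$, we get $T_n\ket{\Omega} = u|T|_n\ket{\Omega} \to u|T|\ket{\Omega} = T\ket{\Omega}$; and since $u$ is bounded we have $T^{\dagger} = |T|u^{\dagger}$, so $\ket{\Omega} \in \D_{T^{\dagger}}$ forces $u^{\dagger}\ket{\Omega}$ into the domain of $|T|$, giving $T_n^{\dagger}\ket{\Omega} = |T|_n u^{\dagger}\ket{\Omega} \to |T|u^{\dagger}\ket{\Omega} = T^{\dagger}\ket{\Omega}$. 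Because each $T_n$ is in $\A$, we have $S_0 T_n\ket{\Omega} = T_n^{\dagger}\ket{\Omega}$; since $S$ is closed, $T_n\ket{\Omega} \to T\ket{\Omega}$, and $S_0 T_n\ket{\Omega} \to T^{\dagger}\ket{\Omega}$, it follows that $T\ket{\Omega} \in \D_S$ and $S T\ket{\Omega} = T^{\dagger}\ket{\Omega}$, which is the converse statement.

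For the forward direction, fix $\ket{\psi} \in \D_S$. The key idea is to build $T$ from its intended action $a'\ket{\Omega} \mapsto a'\ket{\psi}$ on the dense subspace $\A'\ket{\Omega}$ rather than on $\A\ket{\Omega}$ --- defining it on $\A\ket{\Omega}$ would produce an operator commuting with $\A$, whereas we want one affiliated with $\A$, i.e. commuting with $\A'$. So define the linear operator $T_0$ on $\D_{T_0} = \A'\ket{\Omega}$ by $T_0(a'\ket{\Omega}) = a'\ket{\psi}$. This is well defined, since $a'\ket{\Omega} = 0$ for $a' \in \A'$ implies $a' = 0$ (because $\ket{\Omega}$, being cyclic for $\A$, is separating for $\A'$), hence $a'\ket{\psi} = 0$. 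The computation $T_0 b'(a'\ket{\Omega}) = (b'a')\ket{\psi} = b'T_0(a'\ket{\Omega})$ shows $T_0$ commutes with every $b' \in \A'$ on its domain, and the domain is invariant under $\A'$; a standard core argument then upgrades this to show that $T := \overline{T_0}$ (once closability is known) is closed, affiliated with $\A$, has $\A'\ket{\Omega}$ as a core, and satisfies $T\ket{\Omega} = \ket{\psi}$ (take $a'$ to be the identity, which lies in $\A'$).

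The crux is closability and the identification of the adjoint. I claim $T_0^{\dagger} \supseteq R_0$, where $R_0$ is the linear operator on $\A'\ket{\Omega}$ with $R_0(a'\ket{\Omega}) = a'\,(S\ket{\psi})$. Unwinding the definition of the adjoint, this reduces to the identity $\langle a'\psi | b'\Omega\rangle = \langle a'\Omega | b'\,S\psi\rangle$ for all $a',b' \in \A'$, which on setting $c' = (b')^{\dagger}a'$ (so that $(c')^{\dagger} = (a')^{\dagger}b'$) becomes $\langle \psi | (c')^{\dagger}\Omega\rangle = \langle c'\Omega | S\psi\rangle$. But since $c'\ket{\Omega}$ lies in the domain of $F_0 \subseteq F$, the inclusion $F \subseteq S^{\dagger}$ established earlier in this appendix says precisely that $c'\ket{\Omega} \in \D_{S^{\dagger}}$ with $S^{\dagger}(c'\ket{\Omega}) = F_0(c'\ket{\Omega}) = (c')^{\dagger}\ket{\Omega}$, and the defining property of the antilinear adjoint $S^{\dagger}$ then yields exactly this identity. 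Hence $T_0^{\dagger}$ is densely defined, $T_0$ is closable, and with $T = \overline{T_0}$ proposition \ref{prop:adjoint-properties} gives $T^{\dagger} = T_0^{\dagger} \supseteq R_0$; in particular $\ket{\Omega} \in \D_{T^{\dagger}}$ and $T^{\dagger}\ket{\Omega} = R_0(\ket{\Omega}) = S\ket{\psi}$. Uniqueness under the core requirement is then immediate: any closed $R$ affiliated with $\A$ with $\A'\ket{\Omega}$ a core and $R\ket{\Omega} = \ket{\psi}$ satisfies $R(a'\ket{\Omega}) = a'R\ket{\Omega} = a'\ket{\psi} = T_0(a'\ket{\Omega})$, so $R$ and $T$ agree on a common core and therefore coincide.

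I expect the main obstacle to be this closability step --- correctly recognizing $R_0$ as a restriction of $T_0^{\dagger}$ and keeping the antilinear bookkeeping straight, so that the linear operator $T$ is properly tied to the antilinear operators $S$ and $F$. Everything else is either the routine approximation of the converse or standard manipulation of cores, invariant domains, and polar decompositions.
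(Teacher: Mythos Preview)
Your proposal is correct and follows essentially the same approach as the paper: for the converse you use the polar decomposition and spectral truncations $T_n = u|T|_n \in \A$ together with closedness of $S$, exactly as the paper does; for the forward direction your $T_0$ and $R_0$ are precisely the paper's $\alpha_0$ and $\beta_0$, and your closability computation via $F \subseteq S^{\dagger}$ is the same as the paper's computation via $F^{\dagger} \supseteq S$. The only cosmetic differences are that you spell out well-definedness and uniqueness explicitly, while the paper leaves these implicit.
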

\begin{proof}
	We proceed by defining operators that act on $\A' \ket{\Omega}$ the way we think $T$ and $T^{\dagger}$ should act, supposing they exist.
	Since $T$ and $T^{\dagger}$ are supposed to be affiliated with $\A,$ they should act on vectors of the form $a' \ket{\Omega}$ as
	\begin{equation}
		T a' \ket{\Omega}
			= a' T \ket{\Omega}
			= a' \ket{\psi}
	\end{equation}
	and
	\begin{equation}
		T^{\dagger} a' \ket{\Omega}
			= a' T^{\dagger} \ket{\Omega}
			= a' S \ket{\psi}.
	\end{equation}
	So we simply define operators $\alpha_0$ and $\beta_0$ on the domain $\A' \ket{\Omega}$ by
	\begin{align}
		\alpha_0 a' \ket{\Omega}
			& = a' \ket{\psi}, \\
		\beta_0 a' \ket{\Omega}
			& = a' S \ket{\psi}.
	\end{align}
	For $b' \in \A',$ we have
	\begin{align}
		\begin{split}
		\langle b' \Omega | \alpha_0 | a' \Omega \rangle
			& = \langle b' \Omega | a' \psi \rangle \\
			& = \langle (a')^{\dagger} b' \Omega | \psi \rangle \\
			& = \langle F (b')^{\dagger} a' \Omega | \psi \rangle \\
			& = \langle F^{\dagger} \psi | (b')^{\dagger} a' \Omega \rangle.
		\end{split}
	\end{align}
	Since we know $F^{\dagger} \supseteq S,$ and $\ket{\psi}$ is in the domain of $S$, this implies
	\begin{align}
	\begin{split}
		\langle b' \Omega | \alpha_0 | a' \Omega \rangle
			& = \langle S \psi | (b')^{\dagger} a' \Omega \rangle \\
			& = \langle b' S \psi | a' \Omega \rangle \\
			& = \langle \beta_0 b' \Omega | a' \Omega \rangle.
	\end{split}
	\end{align}
	This gives us the inclusion $\alpha_0^{\dagger} \supseteq \beta_0,$ and a symmetric argument gives $\beta_0^{\dagger} \supseteq \alpha_0.$
	From this we conclude that $\alpha_0$ is preclosed, and we denote by $T$ its closure.
	To see that $T$ is affiliated with $\A,$ note that we have
	\begin{equation}
		T a' b' \ket{\Omega}
			= \alpha_0 a' b' \ket{\Omega}
			= a' b' \ket{\psi}
			= a' \alpha_0 b' \ket{\Omega}
			= a' T b' \ket{\Omega}.
	\end{equation}
	So $T$ commutes with each $a' \in \A'$ on its core, and therefore commutes with each $a' \in \A'$ whenever the products $T a'$ and $a' T$ are both defined.
	The vector $\ket{\Omega}$ is in $\A' \ket{\Omega},$ so it is in the domain of both $\alpha_0$ and $\beta_0$, and therefore in the domain of both $T$ and $T^{\dagger}.$
	We have
	\begin{equation}
		S T \ket{\Omega}
			= S \alpha_0 \ket{\Omega} = S \ket{\psi} = \beta_0 \ket{\Omega} = T^{\dagger} \ket{\Omega}.
	\end{equation}

	For the converse, note that because $T$ is closed, it has a polar decomposition $T = u |T|$ (\ref{thm:polar-decomposition})
	Consider the operators $T_n = u |T|_n$ defined by projecting $|T|$ onto the spectral subspace $[0, n].$
	By theorem \ref{thm:neumann-polar}, each $T_n$ is in $\A,$ so each $T_n \ket{\Omega}$ is in the domain of $S$ and maps to $T_n^{\dagger} \ket{\Omega} = u^{\dagger} |T^{\dagger}|_n \ket{\Omega}.$
	The spectral theorem guarantees that these sequences converge:
	\begin{align}
		T_n \ket{\Omega}
			& \to T \ket{\Omega} \\
		S T_n \ket{\Omega}
			& \to T^{\dagger} \ket{\Omega}.
	\end{align}
	Since $S$ is a closed operator, this implies that $T \ket{\Omega}$ is in the domain of $S$, with $S T \ket{\Omega} = T^{\dagger} \ket{\Omega}.$
\end{proof}

Now we wish to use this theorem to upgrade the inclusion $S \subseteq F^{\dagger}$ to an equality.
Suppose a vector $\ket{\psi}$ is in the domain of $F^{\dagger}.$
We want to show that $\ket{\psi}$ is in the domain of $S$, which means we want to show there exists an affiliated operator $T$ like the one in the preceding theorem, with $\ket{\psi} = T \ket{\Omega}.$
It is straightforward to verify that in the proof of the preceding theorem, the operator $T$ could have been constructed only under the assumption $\ket{\psi} \in \D_{F^{\dagger}},$ without the seemingly more stringent assumption $\ket{\psi} \in \D_S.$
But the ``converse'' part of the above theorem then guarantees that we do in fact have $\ket{\psi} \in \D_S.$
We therefore have $S = F^{\dagger},$ as desired.
The above characterization of the domain of $S$ also clearly implies $S^2 = 1_{\D_S}.$

It is a general fact --- see section \ref{sec:unbounded-operators} --- that any closed operator admits a unique polar decomposition.
We write the polar decomposition of $S$ as
\begin{equation}
	S = J \Delta^{1/2},
\end{equation}
where $\Delta^{1/2}$ is a positive, self-adjoint operator given by $\Delta^{1/2} = \sqrt{S^{\dagger} S},$ and $J$ is an antilinear partial isometry whose support is given by $\text{supp}(J) = \ker(S)^{\perp}$.
The identity $S^2 = 1_{\D_S}$ implies that the kernel of $S$ is trivial, so $J$ is supported on all of Hilbert space and is therefore an antiunitary operator.
We also have that the kernel of $\Delta$ is the same as the kernel of $S$, so $\Delta$ is invertible on its domain.

We would now like to show the identities $J^2 = 1$ and $J\Delta^{1/2} J=\Delta^{-1/2}.$
We first observe that thanks to the identity $S^2 = 1_{\D_S},$ we have
\begin{equation}
	1_{\D_S} = J \Delta^{1/2} J \Delta^{1/2},
\end{equation}
which gives
\begin{equation} \label{eq:JDJ-inclusion}
	J \Delta^{1/2} J \supseteq \Delta^{-1/2}.
\end{equation}
To show equality, consider an arbitrary $\ket{\psi}$ in the domain of $J \Delta^{1/2} J.$
The vector $J \ket{\psi}$ is in the domain of $\Delta^{1/2},$ which is equal to the domain of $S.$
The identity $1_{\D_S} = S^2$ tells us that the domain of $S$ is equal to the image of $S$.
So there exists some $\ket{\eta}$ in the domain of $S$ with
\begin{equation}
	J \ket{\psi} = S \ket{\eta} = J \Delta^{1/2} \ket{\eta}.
\end{equation}
Since $J$ is antiunitary, we may left multiply by its inverse to obtain
\begin{equation}
	\ket{\psi} = \Delta^{1/2} \ket{\eta}.
\end{equation}
It follows that every vector in the domain of $J \Delta^{1/2} J$ is in the image of $\Delta^{1/2},$ hence in the domain of $\Delta^{-1/2},$ so that the inclusion in expression \eqref{eq:JDJ-inclusion} is an equality. 
To see $J^2 = 1,$ we simply observe
\begin{equation}
	\Delta^{-1/2} = J \Delta^{1/2} J = J (J J^{\dagger}) \Delta^{1/2} J = J^2 (J^{\dagger} \Delta^{1/2} J).
\end{equation}
The operator $J^{\dagger} \Delta^{1/2} J$ is positive, and the operator $J^2$ is unitary, so by uniqueness of the polar decomposition of the closed operator $\Delta^{-1/2}$, we must have $J^2 = 1.$

The last identity we want to show, $F = J \Delta^{-1/2},$ follows from what we have already shown via
\begin{equation}
	F = S^{\dagger} = \Delta^{1/2} J^{\dagger} = \Delta^{1/2} J = J \Delta^{-1/2}. 
\end{equation}

\section{van Daele's proof}
\label{app:van-daele}

This appendix gives a telegraphic outline of the structure of van Daele's proof of Tomita's theorem \cite{van-daele-proof}.
This proof is very concise, but I find it unintuitive, which is why I presented a new proof in the main text of this paper.
What follows is my best attempt to describe the techniques applied in van Daele's proof.

The idea is that for $a \in \A,$ instead of showing that the modular flow $\Delta^{-it} a \Delta^{it}$ is in $\A,$ one considers the operator-valued function $t \mapsto \Delta^{-it} a \Delta^{it}$, and tries to show that its Fourier transform
\begin{equation} \label{eq:naive-fourier-transform}
	p \mapsto \int dt\, e^{i p t} \Delta^{-it} a \Delta^{it}
\end{equation}
is in $\A.$
One then tries to show that the Fourier inversion formula converges appropriately to guarantee $\Delta^{-it} a \Delta^{it} \in \A.$

One problem is the integral in equation \eqref{eq:naive-fourier-transform} does not actually converge.
But if we multiply $\Delta^{-it} a \Delta^{it}$ by a mollifying function $f(t)$ that dies sufficiently quickly at infinity, then the Fourier transform
\begin{equation}
	p \mapsto \int dt\, e^{i p t} f(t) \Delta^{-it} a \Delta^{it}
\end{equation}
converges.
If $f(t)$ is nonzero, then the question of whether $\Delta^{-it} a \Delta^{it}$ is in $\A$ is exactly the same as the question of whether $f(t) \Delta^{-it} a \Delta^{it}$ is in $\A,$ so the presence of the mollifying function does not affect the logic of the proof.
A convenient choice of mollifying function is
\begin{equation}
	f(t) = \frac{1}{\cosh{\pi t}},
\end{equation}
because this is periodic in the real direction of the complex plane, which can be used to evaluate the integral appearing in the Fourier transform by ``closing a contour'' and ``picking up a pole.''
Concretely, the integral
\begin{equation}
	\int_{i t} dz\, e^{z p} \frac{\Delta^{-z} a \Delta^{z}}{\cos(\pi z)} - \int_{-1 - i t} dz\, e^{z p} \frac{\Delta^{-z} a \Delta^{z}}{\cos{\pi z}}
\end{equation}
should in some sense be given by a residue at $z=-1/2,$ which we expect to look like
\begin{equation}
	2 i e^{-p/2} \Delta^{-1/2} a \Delta^{1/2}.
\end{equation}
These are integrals of unbounded operators, so one has to be careful about manipulating them, but by naively parametrizing the contour integral, one expects an expression of the following kind to hold:
\begin{equation}
		Q(p)
			+ e^{-p} \Delta^{-1} Q(p) \Delta
			= 2 e^{-p/2} \Delta^{-1/2} a \Delta^{1/2},
\end{equation}
with
\begin{equation}
	Q(p) = \int dt\, e^{i t p} \frac{\Delta^{- i t} a \Delta^{i t}}{\cosh(\pi t)}.
\end{equation}
Or perhaps, multiplying on the left by $e^{p/2} \Delta^{1/2}$ and on the right by $\Delta^{-1/2},$ one expects to have an expression like
\begin{equation}
	e^{p/2 }\Delta^{1/2} Q(p) \Delta^{-1/2}
	+ e^{-p/2} \Delta^{-1/2} Q(p) \Delta^{1/2}
	= 2 a.
\end{equation}

By being careful, one can show that the \textit{actual} sense in which this expression is to hold is as an expression for matrix elements of vectors in an appropriate domain.
In particular, for any vectors $\ket{\psi}$ and $\ket{\xi}$ that are in the domain of both $\Delta^{1/2}$ and $\Delta^{-1/2},$ one can show the identity
\begin{equation}
	e^{p/2 } \langle \Delta^{1/2} \psi | Q(p) | \Delta^{-1/2} \xi \rangle 
		+ e^{-p/2} \langle \Delta^{-1/2} \psi | Q(p) | \Delta^{1/2} \xi \rangle
		= 2 \langle \psi|a|\xi\rangle.
\end{equation}
From here, the basic idea of the proof is to show that this expression can be ``solved'' for $Q(p)$ in that there is only one operator $Q(p)$ satisfying this equation for any given $p,$ and to explicitly show that $Q(p)$ can be written as an operator in $\A$ obtained via a clever application of Takesaki's resolvent lemma (cf. section \ref{sec:resolvent-lemma}).

In a bit more detail, one actually considers the operator $\tilde{Q}(p) = J Q(p) J,$ with $J$ the modular conjugation, and studies the expression
\begin{equation}
	e^{p/2 } \langle \Delta^{1/2} \psi | J \tilde{Q}(p) J | \Delta^{-1/2} \xi \rangle 
	+ e^{-p/2} \langle \Delta^{-1/2} \psi | J \tilde{Q}(p) J | \Delta^{1/2} \xi \rangle
	= 2 \langle \psi|a|\xi\rangle.
\end{equation}
One then shows, using the expression $S = J \Delta^{1/2}$ for the Tomita operator, that this equation is solved by an operator $(a')^{\dagger},$ where $a'$ appears in a resolvent-lemma expression like
\begin{equation}
	a' \ket{\Omega} = c (\lambda - \Delta^{-1})^{-1} a \ket{\Omega}
\end{equation}
for a $p$-independent real number $c$ and a complex number $\lambda$ that is some concrete function of $p.$
To actually show this, one must approximate the vectors $\ket{\psi}$ and $\ket{\xi}$ by a sequence of vectors of the form $(1 + \Delta^{-1}) b \ket{\Omega}$ for $b \in \A,$ since these are the vectors for which van Daele's equation is under algebraic control.
Once all this is done successfully (and once one also shows that van Daele's equation has a unique solution), it may be concluded that $\tilde{Q}(p)$ is equal to $(a')^{\dagger}.$
This tells us that the Fourier integral
\begin{equation}
	\int dt\, e^{i p t} \frac{J \Delta^{-it} a \Delta^{it} J}{\cosh{\pi t}}
\end{equation}
is in $\A',$ and gives an expression satisfied by this operator in terms of the resolvent of $\Delta^{-1}$ acting on the vector $a \ket{\Omega}.$
By being sufficiently careful about what it means to invert a Fourier transform of an operator-valued function, one can use this to show that $J \Delta^{-it} a \Delta^{it} J$ is in $\A'$ for any $t,$ so in particular for $t = 0$ we have $J a J \in \A'.$
A symmetric argument shows $J \A' J = \A,$ which gives
\begin{equation}
	\Delta^{-it} a \Delta^{it}
		= J (J \Delta^{-it} a \Delta^{it} J) J
		\in J \A' J
		= \A.
\end{equation}

\bibliographystyle{JHEP}
\bibliography{bibliography}

\end{document}